\def\th@plain{%
  \thm@notefont{}
  \itshape 
}
\def\th@definition{%
  \thm@notefont{}
  \normalfont 
}
\newtheorem{theorem}{Theorem}
\newtheorem{lemma}{Lemma}[section]
\newtheorem{proposition}[theorem]{Proposition}
\newtheorem*{reminder}{Reminder}
\theoremstyle{definition}  \newtheorem{definition}[lemma]{Definition}
\theoremstyle{remark}  \newtheorem*{remark}{Remark}
\newlist{thmparts}{enumerate}{1}
\setlist[thmparts]{labelindent=\parindent,leftmargin=*,itemsep=2pt,font=\normalfont,label=(\thetheorem.\arabic*)}
\Crefname{thmpartsi}{Subresult}{Subresults} 
\declaretheoremstyle[%
  spaceabove=-6pt,%
  spacebelow=6pt,%
  headfont=\normalfont\itshape,%
  postheadspace=1em,%
  qed=\qedsymbol%
]{mystyle} 
\declaretheorem[name={Proof},style=mystyle,unnumbered,
]{prf}
\algnewcommand{\LineComment}[1]{\State \textcolor{blue}{$\triangleright$ #1}}
\newcommand{\EE}{\mathbb{E}}
\newcommand{\cA}{\mathcal{A}}
\newcommand{\cE}{\mathcal{E}}
\newcommand{\cF}{\mathcal{F}}
\newcommand{\cT}{\mathcal{T}}
\newcommand{\bx}{\mathbf{x}}
\newcommand{\by}{\mathbf{y}}
\newcommand{\tT}{\widetilde{T}}
\newcommand{\tO}{\widetilde{O}}
\DeclareMathOperator{\col}{col}
\DeclareMathOperator{\codeg}{codeg}
\DeclareMathOperator{\oracle}{orcl}
\DeclareMathOperator{\polylog}{polylog}
\DeclareMathOperator{\Var}{Var}
\DeclareMathOperator{\Cov}{Cov}
\DeclareMathOperator{\ID}{ID}
\DeclareMathOperator{\Nhd}{N}
\DeclareMathOperator{\Bin}{Bin}
\newcommand{\Tkheavy}{T_k^H}
\newcommand{\Tklight}{T_k^L}
\newcommand{\Perr}{P_{\text{err}}}
\newcommand{\scor}{s_{\text{cor}}}  
\newcommand{\reldeg}[2]{\deg(#2 \mid #1)}  
\newcommand{\apex}{z}      
\newcommand{\nsimp}{\tau}  
\newcommand{\Nsimp}{M}     
\newcommand{\nhyp}[2]{{#1}\downarrow{#2}}  
\newcommand{\flavor}[2]{#2^{\ang{#1}}}  
\newcommand{\disj}{\textsc{disj}\xspace}
\newcommand{\udisj}{\textsc{udisj}\xspace}
\newcommand{\idx}{\textsc{index}\xspace}
\newcommand{\gapdisj}{\textsc{gap-disj}\xspace}
\newcommand{\SimplexEst}{\textsc{simplex-est}\xspace}
\newcommand{\SimplexDist}{\textsc{simplex-dist}\xspace}
\newcommand{\SimplexSep}{\textsc{simplex-sep}\xspace}
\newcommand{\heavy}{\textsc{heavy}\xspace}
\newcommand{\light}{\textsc{light}\xspace}
\newcommand{\ang}[1]{{\langle{#1}\rangle}}
\newcommand{\ceil}[1]{{\lceil{#1}\rceil}}
\newcommand{\floor}[1]{{\lfloor{#1}\rfloor}}
\renewcommand{\b}{\{0,1\}}
\newcommand{\eps}{\varepsilon}
\newcommand{\setm}{\smallsetminus}
\DeclareMathOperator{\nperp}{\not\perp}
\newcommand{\ontop}[2]{\genfrac{}{}{0pt}{2}{#1}{#2}}
\newcommand{\tpmod}[1]{{\@displayfalse\pmod{#1}}}
\newcommand\squeezepar{\@startsection{paragraph}{4}{\z@}{1.5ex \@plus1ex \@minus.2ex}{-1em}{\normalfont\normalsize\bfseries}}
\begin{document}

\title{Counting Simplices in Hypergraph Streams}

\date{}

\author{%
  Amit Chakrabarti
  \thanks{Department of Computer Science, Dartmouth College.
  Work supported in part by NSF under award 1907738.}
  \and
  Themistoklis Haris $^\fnsymbol{footnote}$
}

\maketitle

\thispagestyle{empty}

\begin{abstract}

\noindent
We consider the problem of space-efficiently estimating the number of
\emph{simplices} in a hypergraph stream. This is the most natural hypergraph
generalization of the highly-studied problem of estimating the number of
\emph{triangles} in a graph stream. Our input is a $k$-uniform hypergraph $H$
with $n$ vertices and $m$ hyperedges, each hyperedge being a $k$-sized subset
of vertices. A $k$-simplex in $H$ is a subhypergraph on $k+1$ vertices $X$
such that all $k+1$ possible hyperedges among $X$ exist in $H$. The goal is to
process the hyperedges of $H$, which arrive in an arbitrary order as a data
stream, and compute a good estimate of $T_k(H)$, the number of $k$-simplices
in $H$.

We design a suite of algorithms for this problem. As with triangle-counting in
graphs (which is the special case $k=2$), sublinear space is achievable but only under a
promise of the form $T_k(H) \ge T$. Under such a promise, our algorithms use
at most four passes and together imply a space bound of 
\[
  O\left( \varepsilon^{-2} \log\delta^{-1} \mathop{\text{polylog}} n \cdot 
    \min\left\{ \frac{m^{1+1/k}}{T}, \frac{m}{T^{2/(k+1)}} \right\} \right)
\]
for each fixed $k \ge 3$, in order to guarantee an estimate within
$(1\pm\varepsilon)T_k(H)$ with probability at least $1-\delta$. We also give a
simpler $1$-pass algorithm that achieves $O\left(\varepsilon^{-2}\log\delta^{-1}\log
n\cdot (m/T) \left( \Delta_E + \Delta_V^{1-1/k} \right)\right)$ space, where
$\Delta_E$ (respectively, $\Delta_V$) denotes the maximum number of
$k$-simplices that share a hyperedge (respectively, a vertex), which generalizes
a previous result for the $k=2$ case. We complement these algorithmic
results with space lower bounds of the form $\Omega(\varepsilon^{-2})$,
$\Omega(m^{1+1/k}/T)$, $\Omega(m/T^{1-1/k})$ and $\Omega(m\Delta_V^{1/k}/T)$
for multi-pass algorithms and $\Omega(m\Delta_E/T)$ for $1$-pass algorithms,
which show that some of the dependencies on parameters in our upper bounds are
nearly tight. Our techniques extend and generalize several different ideas
previously developed for triangle counting in graphs, using appropriate
innovations to handle the more complicated combinatorics of hypergraphs.

\end{abstract}


\newpage
\addtocounter{page}{-1}

\section{Introduction} \label{sec:intro}

Estimating the number of triangles in a massive input graph is a fundamental
algorithmic problem that has attracted over two decades of intense
research~\cite{%
AlonYZ97,BarYossefKS02,JowhariG05,BuriolFLMS06,SuriV11,TsourakakisKM11,%
KolountzakisMPT12,PaghT12,BravermanOV13,PavanTTW13,JhaSP15,BulteauFKP16,%
McGregorVV16,BeraC17,CormodeJ17,KallaugherP17,JayaramK21}.
It is easy to see why. On the one hand, the problem arises in applications
where a complex real-world network is naturally modeled as a graph and the
number of triangles is a crucial statistic of the network. Such applications
are found in many different domains, such as social
networks~\cite{ChristakisF10,LeskovecHK10,Newman03}, the web
graph~\cite{BecchettiBCG08,WuCWWZW18}, and biological
networks~\cite{RougemontH03}; see~\cite{TsourakakisKM11} for a more detailed
discussion of such applications. On the other hand, a triangle is perhaps
{\em the} most basic nontrivial pattern in a graph and as such, triangle counting is
a problem with a rich theory and connections to many areas within 
computer science~\cite{AlonYZ97,EdenLRS15,GronlundP18,NgoPRR18,DurajKPW20} and
combinatorics~\cite{Mantel1907,Fisher89,KimV04,Razborov08}.

In this work, we study the natural generalization of this problem to massive
{\em hypergraphs}. Just as graphs model pairwise interactions between entities
in a network, hypergraphs model higher arity interactions. For instance, in an
academic collaboration network with researchers being the vertices, it would
be natural to model coauthorships on research papers and articles using {\em
hyperedges}, each of which can be incident to more than two vertices. Just as
we may use triangle counts to study clustering behaviors in graphs or even in
different portions of a single graph, we may analyze higher-order clustering
behaviors in the $3$-uniform hypergraph $H$ formed by all three-way
coauthorships by counting {\em $3$-simplices} in $H$. A $3$-simplex on four
vertices $\{u,v,w,x\}$ is the structure formed by the hyperedges $\{uvw, uvx,
uwx, vwx\}$: it is the natural $3$-dimensional analog of a triangle in an
ordinary graph. 

\subsection{Our Results} \label{sec:results}

We design several algorithms for space-efficiently estimating the number of
$k$-simplices in a $k$-uniform hypergraph $H$ that is given as a stream of
hyperedges.  A {\em $k$-simplex} is a complete $k$-uniform hypergraph on $k+1$
vertices.  The special case $k=2$ is the triangle counting problem which, as
noted above, is intensely investigated. Indeed, even in this setting of
streaming algorithms, triangle counting is highly studied, with new
algorithmic techniques being developed as recently as early
2021~\cite{JayaramK21}.  There is also a body of work on generalizing these
results to the problem of estimating the number of occurrences of patterns
(a.k.a.~{\em motifs}) more complicated than triangles, e.g., fixed-size
cliques and
cycles~\cite{ManjunathMPS11,KaneMSS12,PavanTTW13,BeraC17,KallaugherMPV19,Vorotnikova20}.
Our work adds to this literature by generalizing in a different direction: we
generalize the class of {\em inputs} from graphs to hypergraphs and focus on
counting the simplest nontrivial symmetric motifs, i.e., $k$-simplices. 

Our algorithms provide optimal space bounds (up to log factors) in certain
parameter regimes; we prove this optimality by giving a set of matching lower
bounds. In certain other parameter regimes, there remains a gap between our
best upper bounds and lower bounds, which immediately provides a goal for
future work on this problem.  Below are informal statements of our major
algorithmic results. 

\begin{theorem}[Upper bounds; informal] \label{thm:ubs-informal}
  Let $H$ be an $n$-vertex $k$-uniform hypergraph, presented as a stream of $m$
  hyperedges, that is promised to contain $T$ or more $k$-simplices. Suppose
  that each hyperedge is contained in at most $\Delta_E$ such simplices and
  each vertex is contained in at most $\Delta_V$ of them. Then, there are
  algorithms for $(1\pm\eps)$-estimating $T_k(H)$, the number of $k$-simplices 
  in $H$, with the following guarantees:
  \begin{thmparts}
    \item \label{part:ub-abundant} a $4$-pass algorithm using $\tO(m^{1+1/k}/T)$ space;
    \item \label{part:ub-meager} a $2$-pass algorithm using $\tO(m/T^{2/(k+1)})$ space, 
    provided $k \ge 3$; and
    \item \label{part:ub-1pass} a $1$-pass algorithm using 
    $\tO(m(\Delta_E + \Delta_V^{1-1/k})/T)$ space.
  \end{thmparts}
  Each of these algorithms is randomized and fails with probability at most
  $\delta$. 
    \footnote{Throughout this paper, the notation $\tO(\cdot)$ hides factors 
    of $O(\eps^{-2} \log\delta^{-1} \polylog n)$ and, in the context of
    $k$-uniform hypergraphs, treats $k$ as a constant. Also, the notation
    ``log'' with an unspecified base means $\log_2$.}
\end{theorem}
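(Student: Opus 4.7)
The plan is to prove the three parts separately, since each requires a distinct algorithmic strategy, but all follow the same Horvitz--Thompson paradigm: design an unbiased estimator by sampling a random substructure of $H$, then average $\tilde O(\Var/\EE^2)$ independent copies so that Chebyshev gives a $(1\pm\eps)$ guarantee with failure probability $\delta$. The common challenge is to bound the second moment by the claimed space expression and to show the sampled substructure can be maintained in few passes.

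For part~\ref{part:ub-1pass} (1-pass), I would combine two samplers. The first, reservoir-samples a uniformly random hyperedge $e$; for each sampled $e$, the algorithm stores every subsequent hyperedge $f$ with $|e \cap f| = k-1$ (a \emph{co-hyperedge}) and at the end of the stream enumerates apex vertices $u$, verifying whether all $k+1$ hyperedges of a simplex containing $e$ and $u$ were observed. The contribution of each simplex to the variance is $O(\Delta_E)$ by the hyperedge-sharing bound, yielding the $m\Delta_E/T$ term. The second sampler aims at the $\Delta_V^{1-1/k}$ term: sample an apex vertex plus a partial flag inside a potential simplex around it, then complete the simplex using hyperedges later in the stream; a Kruskal--Katona-style argument bounds how many such partial flags at a fixed vertex can extend to simplices, yielding the $\Delta_V^{1-1/k}$ dependence rather than $\Delta_V$.

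For part~\ref{part:ub-abundant} (4-pass, $\tilde O(m^{1+1/k}/T)$), I would generalize the multi-pass triangle algorithms of Bera--Chakrabarti and Kolountzakis et al. Pass~1 samples a uniformly random hyperedge $e$ and picks a $(k-1)$-subset $S \subset e$; passes~2--3 use $\ell_0$-sampling (or direct storage after filtering) to recover the ``link'' of $S$, i.e., all hyperedges sharing $S$, and pass~4 verifies that the remaining hyperedges of a candidate simplex on $S \cup \{u,v\}$ actually appear. The variance is controlled by a Cauchy--Schwarz bound on sums of codegrees of $(k-1)$-subsets: $\sum_S \codeg(S)^2 \le m^{1+1/k} \cdot T_k(H)^{(k-1)/k}$ after appropriate normalization, which is the source of the $m^{1+1/k}$ exponent. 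For part~\ref{part:ub-meager} (2-pass, $\tilde O(m/T^{2/(k+1)})$), I would sample pairs of hyperedges in the first pass and verify the remaining $k-1$ hyperedges of their induced simplex in the second pass: since each $k$-simplex contains $\binom{k+1}{2}$ ordered hyperedge pairs, the estimator has expectation proportional to $T_k(H)$; the variance calculation requires the assumption $k \ge 3$ so that sampled pairs are rare enough that the typical pair-overlap term dominates favorably.

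The main obstacle throughout is the variance bound, which amounts to counting pairs of simplices weighted by how they share hyperedges, vertices, or $(k-1)$-faces; the graph-theoretic identities used for $k = 2$ do not transcribe directly, so I expect to invoke generalized Kruskal--Katona-type estimates and careful case analysis on the intersection pattern of two $k$-simplices (indexed by the number of shared vertices, ranging from $0$ to $k$). A secondary obstacle is implementing the samplers: part~\ref{part:ub-abundant}'s sampler is defined on a universe (hyperedges sharing a given $(k-1)$-set) whose size is revealed only by the stream, so $\ell_0$-sampling or adaptive reservoir schemes are needed; and in part~\ref{part:ub-1pass} the two-term bound requires running both samplers in parallel and combining their outputs without a union bound penalty that would destroy the $\polylog$ overhead hidden in $\tilde O$.
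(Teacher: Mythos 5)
Your high-level architecture (unbiased estimator, variance bound, median-of-means) matches the paper's, but each part has a concrete gap. For the $4$-pass bound, the paper does not pick an arbitrary $(k-1)$-subset of the sampled edge $e$: it defines an $e$-relative ordering by iteratively choosing the vertex of minimum relative degree, samples the apex only from $\Nhd(S_{k-1}(e))$, and counts a simplex only when the apex is $\prec_e$-larger than all of $e$, so that every simplex receives a unique label $(e,z)$. That canonical labeling is what makes the two key combinatorial lemmas true --- $\nsimp_e \le k m^{1/k}$ and $\sum_{e\in E}\codeg(S_{k-1}(e)) = O(m^{1+1/k})$ --- which respectively control the $\sum_e \nsimp_e^2$ term of the variance and the expected number $R$ of neighborhood samples. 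Your substitute inequality $\sum_S \codeg(S)^2 \le m^{1+1/k}\,T_k(H)^{(k-1)/k}$ is false: if all $m$ edges contain a fixed $(k-1)$-set $S^*$, the left side is at least $m^2$ while $T_k(H)=0$; and without an ordering restriction the number of simplices detected through a single edge can be as large as $\Delta_E$, which is exactly the obstruction the labeling removes (a cruder global-degree ordering already costs you the exponent and only yields the paper's suboptimal $\tO(m^{2-1/k}/T)$ bound in \Cref{sec:subopt}). For the $2$-pass bound, sampling \emph{pairs} of hyperedges cannot be carried out in two passes within the claimed space: after pass~1 you hold only $2$ of the $k+1$ edges of a candidate simplex, so $k-1\ge 2$ edges remain to be verified, they arrive at different times in pass~2, and you must keep per-candidate state; the expected number of candidate pairs is $p^2\sum_{|S|=k-1}\binom{\codeg(S)}{2}$, which a sunflower through one $(k-1)$-set (plus disjoint simplices to make $T>0$) drives to $\Theta(m^2/T)$, far above $m/T^{2/(k+1)}$. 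You also omit the heavy/light oracle, without which the edge-sharing covariance is proportional to $\Delta_E$, which can be as large as $T_k(H)$, and Chebyshev gives nothing. The paper instead colors $(k-2)$-subsets of the \emph{shadow hypergraph's} vertices, stores monochromatic shadow edges, and detects the unique shadow $(k-1)$-simplex at the single moment its base edge arrives (no candidate state), all on top of the heavy/light partition; the hypothesis $k\ge 3$ is needed because the shadow construction drops the uniformity by one, not for the reason you give.

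For the $1$-pass bound, the decomposition into two samplers, one ``responsible'' for $m\Delta_E/T$ and one for $m\Delta_V^{1-1/k}/T$, cannot work as stated. An estimator guaranteed to run in $\tO(m\Delta_E/T)$ space on every input would contradict the paper's own multi-pass lower bound $\Omega(m\Delta_V^{1/k}/T)$, which is proved on instances with $\Delta_E=1$; so the $\Delta_E$ term is meaningful only as part of the combined bound, never as a standalone guarantee of your first sampler. Moreover that sampler stores every later-arriving co-hyperedge of the reservoir edge, and the expected number of these is $\Theta\bigl(\tfrac{1}{m}\sum_{|S|=k-1}\binom{\codeg(S)}{2}\bigr)$ per copy, which can be $\Theta(m)$; your variance accounting also ignores that detections through a single reservoir edge are not pairwise independent, and the Kruskal--Katona flag sampler is only a sketch. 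The paper instead generalizes Jayaram--Kallaugher with a single sampler: each vertex is active with probability $p=\Theta(\Delta_V/(\eps^2T))$ and each edge with probability $q=\max\{\Delta_E/\Delta_V,\,\Delta_V^{-1/k}\}$, an edge is stored iff it and at least one of its vertices are active, a simplex is counted when its last edge arrives and the hyperwedge with that base is fully stored, and a three-term Chebyshev bound together with space $O(mpq\log n)$ yields both terms of $\tO\bigl((m/T)(\Delta_E+\Delta_V^{1-1/k})\bigr)$ simultaneously.
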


Formal versions of these results appear as \Cref{thm:ub-abundant} in
\Cref{sec:ub-abundant}, \Cref{thm:ub-meager-best} in
\Cref{sec:ub-meager-best}, and \Cref{thm:ub-meager-1pass} in
\Cref{sec:ub-meager-1pass}, respectively. Along the way, we also obtain some
other algorithmic results---stated as \Cref{thm:ub-abundant-subopt},
\Cref{thm:ub-meager-worst}, and \Cref{thm:ub-meager-better}---that we include
to paint a more complete picture, even though the space complexities of those
algorithms are dominated by the algorithms behind \Cref{thm:ubs-informal}. In
\Cref{sec:techniques}, we give a high-level overview of the techniques behind
these algorithmic results. As we shall see, we take several ideas from
triangle-counting algorithms as inspiration, but the ``correct'' way to extend
these ideas to hypergraphs is far from obvious. Indeed, we shall see that some
of the more ``obvious'' extensions lead to the less-than-best algorithms
hinted at above.

Next, we informally state our lower bound results. These are not the main
technical contributions of this paper, but they play the important role of
clarifying where our algorithms are optimal and where there might be room for
improvement. As in \Cref{thm:ubs-informal}, we denote the number of
$k$-simplices in $H$ by $T_k(H)$.

\begin{theorem}[Lower bounds; informal] \label{thm:lbs-informal}
  Let $k, n, m, H, \Delta_E, \Delta_V$ be as above.  Suppose an algorithm
  makes $p$ passes over a stream of hyperedges of $H$, using at most $S$ bits
  of working memory, and distinguishes between the cases $T_k(H) = 0$ and
  $T_k(H) \ge T$ with probability at least $2/3$. Then the following lower
  bounds apply.
  \begin{thmparts}
    \item With $T = 1$ and $p = O(1)$, a sublinear-space solution is impossible: 
    we must have $S = \Omega(n^k)$.
    \item With $p = 1$, we must have $S = \Omega(m\Delta_E/T)$.
    \item With $p = O(1)$, we must have $S =
    \Omega(m^{1+1/k}/T)$, $S = \Omega(m/T^{1-1/k})$, and 
    $S = \Omega(m\Delta_V^{1/k}/T)$.
  \end{thmparts}
  Here, a bound of the form $\Omega(Q)$ should be interpreted as ruling
  out the existence of an algorithm that can guarantee a space bound of $o(Q)$.
  If, instead, the streaming algorithm distinguishes between the cases $T_k(H)
  < (1-\eps)T$ and $T_k(H) > (1+\eps)T$, then the following lower bound
  applies.
  \begin{thmparts}[resume]
    \item With $p = O(1)$, we must have $S = \Omega(\eps^{-2})$.
  \end{thmparts}
\end{theorem}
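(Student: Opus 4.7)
The plan is to prove each sub-bound by reducing from a standard two-party communication problem whose complexity is well understood: $\disj_N$ has $\Omega(N)$ randomized communication even with $O(1)$ rounds, $\idx_N$ requires $\Omega(N)$ one-way bits, and $\ghd_N$ requires $\Omega(N)$ bits with $O(1)$ rounds. Any $p$-pass, $S$-space streaming algorithm yields a protocol of $O(pS)$ bits over $O(p)$ rounds on any two-party split of the stream, so for $p = O(1)$ these communication bounds translate directly to $S = \Omega(N)$.

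For part~(i), I would reduce from $\disj_N$ with $N = \Theta(n^k)$ via a ``private-apex'' gadget: split $V$ into $V_1 \cup V_2$ with $|V_1| = \Theta(n)$, index $[N]$ by distinct $k$-subsets $S_1, \ldots, S_N$ of $V_1$, and give each index $i$ its own apex vertex $z_i \in V_2$. For $i \in S_A$ Alice inserts $S_i$; for $i \in S_B$ Bob inserts the $k$ companion hyperedges $\{z_i\} \cup (S_i \setminus \{u\})$ for $u \in S_i$. The dedicated apex vertex ensures no parasitic simplex can occur, so $T_k(H) \ge 1$ iff $S_A \cap S_B \ne \emptyset$. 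Part~(ii) reduces from $\idx_N$ with $N = \Theta(m\Delta_E/T)$ using $T$ independent ``fan'' gadgets, each anchored at a core hyperedge shared by $\Delta_E$ candidate simplices; Alice streams the completions that activate the arms of each fan, Bob picks and completes one arm using his index. Part~(iii) uses three tailored $\disj$ reductions: $\Omega(m^{1+1/k}/T)$ by rescaling the construction of part~(i) to $(k+1)$-partite classes of size $\Theta((m/T)^{1/k})$; $\Omega(m/T^{1-1/k})$ by a more hyperedge-efficient planting where blocks of simplices share skeleton hyperedges; and $\Omega(m\Delta_V^{1/k}/T)$ by hub-vertex gadgets in which a single vertex $v$ of degree $\Delta_V$ hosts $\Delta_V^{1-1/k}$ simplices, then replicated across disjoint copies. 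Finally, part~(iv) reduces from $\ghd_N$ with $N = \Theta(\eps^{-2})$: each $\ghd$ coordinate toggles one independent unit simplex, so a $\sqrt{N}$-Hamming gap turns into the required $\eps T$ gap in $T_k(H)$.

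The main technical obstacle is controlling parasitic simplices while simultaneously hitting the prescribed values of $m$, $T$, $\Delta_E$, and $\Delta_V$. In the triangle case a single auxiliary vertex per candidate is enough, but a $k$-simplex has $k+1$ hyperedges that can accidentally coincide across many nominally distinct candidates, and a single slip can inflate $\Delta_E$ or $\Delta_V$ beyond budget. The remedy will be systematic use of private per-index apex vertices, $(k+1)$-partite type constraints that force every hyperedge to obey a specific color pattern, and disjoint gadget replicas, combined with careful accounting of stream length and per-edge/per-vertex simplex load.
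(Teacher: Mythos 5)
Your overall strategy---simulate the streaming algorithm by a two-party protocol and reduce from disjointness, indexing, and a gap-Hamming-type problem---is the same as the paper's, but your concrete constructions have genuine gaps, the most serious being in part~(i). With $N = \Theta(n^k)$ indices and a \emph{private} apex $z_i$ per index, your hypergraph has $\Theta(n^k)$ vertices, so the $\Omega(N)$ communication bound only gives a space bound that is \emph{linear} in the instance's own vertex count; it does not establish $S = \Omega(n^k)$ as a function of the number of vertices $n$ of the streamed hypergraph, which is what the theorem asserts (and what is needed to preclude sublinear space, since $m = O(n^k)$). Moreover, the claim that the private apexes ensure no parasitic simplex is false: all of Alice's hyperedges lie inside $V_1$, and nothing prevents her set from containing all $k+1$ of the $k$-subsets of some $(k+1)$-set in $V_1$, which creates a simplex even when the two sets are disjoint, so the reduction is unsound. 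The paper avoids both problems with only $3n$ vertices in groups $A,B,C$: Alice's edges have the pattern (one $a$-vertex, $k-1$ $c$-vertices), Bob's data edges the pattern (one $b$-vertex, $k-1$ $c$-vertices), and Bob adds data-independent edges $\{a_i,b_i\}\cup C'$ that are \emph{shared across all indices}; a type-counting argument then forces every simplex to be of the form $\{a_{i_1},b_{i_1},c_{i_2},\ldots,c_{i_k}\}$, certifying an intersection. Sharing the completion edges across indices, rather than using per-index private apexes, is exactly the missing idea.

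Parts~(ii) and~(iii) are too sketchy precisely where the difficulty lies. In~(ii), as written Bob ``picks and completes one arm,'' which yields $O(\Delta_E)$ (indeed possibly a single) simplices, not the required $0$-versus-$T$ gap for the decision problem; the paper instead has Bob insert $n^k$ hyperwedges all sharing the single base edge determined by his index, so the indexed bit toggles $T_k(H)$ between $0$ and $T = n^k$ with $\Delta_E = T$ and $m = \Theta(n^k)$, giving $m\Delta_E/T = N$. In~(iii), the paper obtains all three bounds from one $(k+1)$-partite construction (groups $A_1,\ldots,A_k,D$) with $m = \Theta(n^k)$, $T = n^k$, $\Delta_V = n^k$, again with Bob's $n^k$ completion edges independent of his data; your three separate one-line constructions (the first of which inherits the part-(i) gadget) do not specify how parasitic simplices are excluded or how the stated combinations of $m$, $T$, $\Delta_V$ are realized, which is the actual content of the proof. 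Part~(iv) is fine in spirit: per-coordinate gadgets realizing an XOR of the two players' bits can indeed be built, and $N = \Theta(\eps^{-2})$ then gives $\Omega(\eps^{-2})$; the paper instead reuses the part-(iii) construction and reduces from a gap version of disjointness, but either route works once the earlier constructions are repaired.
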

There is some definitional subtlety in the use of asymptotic notation in these
lower bound results, hinted at in the language above. The picture will become
clearer when we state these results formally, in \Cref{sec:lb}.

\subsection{Closely Related Previous Work} \label{sec:closely-related}

To the best of our knowledge, this is the first work to study the simplex
counting problem in hypergraphs in the setting just described (the work
of~\cite{KallaugherKP18} on counting general hypergraph patterns is not
closely related; see \Cref{sec:other-related}). We now summarize some
highlights of previous work on the triangle counting problem (in graphs), with
a focus on streaming algorithms, so as to provide context for our
contributions.

Suppose an $n$-vertex $m$-edge graph $G$ is given as a stream of edges and we
wish to estimate the number of triangles, $T_3(G)$. It is not hard to show
that, absent any promises on the structure of $G$, this problem requires
$\Omega(n^2)$ space, even with multiple passes, thus precluding a
sublinear-space solution. Therefore, all work in this area seeks bounds under
a promise that $T_3(G) \ge T$, for some nontrivial threshold $T$. Intuitively,
the larger this threshold, the easier the problem, so we expect the space
complexity to decrease. The earliest nontrivial streaming
solution~\cite{BarYossefKS02} reduced triangle counting to a combination of
$\ell_0$, $\ell_1$, and $\ell_2$ estimation and achieved $\tO((mn/T)^2)$ space
by using suitable linear sketches. Almost all algorithms developed since then
have instead used some sort of {\em sampling} to extract a small portion of
$G$, perform some computation on this sample, and then extrapolate to estimate
$T_3(G)$. 

Over the years, a number of different sampling strategies have been developed,
achieving different, sometimes incomparable, guarantees. Here is a whirlwind
tour through this landscape of strategies. One could sample an edge uniformly
at random (using reservoir sampling), then count common neighbors of its
endpoints~\cite{JowhariG05}; or sample an edge uniformly and sample a vertex
not incident to it~\cite{BuriolFLMS06}; or sample a subset of edges by
independently picking each with a carefully adjusted probability
$p$~\cite{KolountzakisMPT12,BravermanOV13}; or choose a random color for each
vertex and collect all monochromatic edges~\cite{PaghT12}; or sample a subset
of vertices at random and collect all edges incident to the
sample~\cite{BravermanOV13}.  One could collect two random subsets of vertices
at different sampling rates and further sample edges between the two
subsets~\cite{KallaugherP17}; or, as in a very recent algorithm, sample a
subset of vertices at rate $p$ and further sample edges incident to this
sample at rate $q$, for well-chosen $p$ and $q$~\cite{JayaramK21}. Notice that
in the just-mentioned algorithms, the sampling technique decides whether or
not to store an edge without regard to other edges that may be in store and is
thus not actively trying to ``grow'' a triangle around a sampled edge.  We
shall call such sampling strategies {\em oblivious}.

Besides the above oblivious sampling strategies, another set of works used
what we shall call {\em targeted sampling} strategies,%
\footnote{To be perfectly honest, the terms ``targeted sampling'' and
``oblivious sampling'' do not have precise technical definitions, but we hope
the conceptual distinction is helpful to the reader as it was helpful to us.}
where information previously stored about the stream guides what gets sampled
subsequently.  Here is another quick tour through these. One could sample {\em
wedges} (defined as length-$2$ paths) in the input graph, using a more
sophisticated reservoir sampling approach~\cite{JhaSP15}; or sample an edge
uniformly and then sample a second edge that touches the
first~\cite{PavanTTW13}; or sample a vertex with probability proportional to
its squared degree, then sample two neighbors of that
vertex~\cite{McGregorVV16}; or sample an edge uniformly and then sample a
neighbor of the lower-degree endpoint of that edge~\cite{BeraC17}. 

There are also a handful of algorithms that add further twists on top of the
sample-count-extrapolate framework.  The algorithm of~\cite{BulteauFKP16}
combines the vertex coloring idea of \cite{PaghT12} with $\ell_2$ estimation
sketches to obtain a solution that can handle {\em dynamic} graph streams,
where each stream update may either insert or delete an edge. The algorithm
of~\cite{CormodeJ17} combines multiple runs of \cite{PaghT12} with a {\em
heavy/light edge partitioning} technique: an edge is deemed ``heavy'' if it
participates in ``too many'' triangles. A key observation is that the variance
of an estimator constructed by oblivious sampling---which needs to be small in
order to guarantee good results in small space---can be bounded better if no
heavy edges are involved. On the other hand, triangles involving a heavy edge
are easier to pick up (because there are many of them!) by randomly sampling
vertices at a low rate. Thus, by carefully picking the threshold for
heaviness, one can combine an algorithm that counts all-light triangles
efficiently with one that counts heavy-edged triangles efficiently for a good
overall space bound. In this way, \cite{CormodeJ17} obtains a space bound of
$O(\eps^{-2.5} \log\delta^{-1} \polylog n \cdot m/\sqrt{T})$, while
\cite{McGregorVV16} provides a tight dependence on $\eps$ (i.e., $\eps^{-2}$)
by achieving $\tO(m/\sqrt{T})$ space.

Separately, the aforementioned targeted sampling strategies of
\cite{McGregorVV16} and \cite{BeraC17} provide $4$-pass algorithms for
estimating $T_3(G)$ using space $\tO(m^{3/2}/T)$. When $T$ is large
enough---specifically, $T = \Omega(m)$---this space bound is better than the
$\tO(m/\sqrt{T})$ bound obtained via heavy/light edge partitioning. By
picking the better of the two algorithms, one obtains a space bound of
$\tO(\min\{m^{3/2}/T, m/\sqrt{T}\})$. Both portions of this bound are tight,
thanks to lower bounds of $\Omega(m^{3/2}/T)$ and $\Omega(m/\sqrt{T})$ that
follow by reducing from the \textsc{set-disjointness} communication
problem~\cite{CormodeJ17,BeraC17}.

The algorithm of \cite{JayaramK21} is optimal in a different sense: it runs in
a single pass and $\tO((m/T)(\Delta_E + \sqrt{\Delta_V}))$ space, where
$\Delta_E$ and $\Delta_V$ are as defined in \Cref{thm:ubs-informal}. Each term
in this bound is tight, since the aforementioned reduction also implies an
$\Omega(m\sqrt{\Delta_V}/T)$ lower bound and a different reduction from the
\textsc{index} communication problem implies an $\Omega(m\Delta_E/T)$ bound
for $1$-pass algorithms~\cite{BravermanOV13}. Note, however, that this result
is incomparable to the multi-pass upper bounds noted above. It must be so: a
lower bound of $\Omega(m^3/T^2)$ holds for $1$-pass algorithms~\cite{BeraC17}.

\subsection{Other Related Work} \label{sec:other-related}

This work is focused on streams that simply {\em list} the input hypergraph's
hyperedges. This is sometimes called the {\em insert-only} streaming model, in
contrast to the {\em dynamic} or {\em turnstile} streaming model where the
stream describes a sequence of (hyper)edge insertions or deletions. A small
subset of works mentioned in \Cref{sec:closely-related} do provide results in a
turnstile model. Besides these, there is the recent seminal work
\cite{KallaugherKP18} that fully settles the complexity of triangle counting
in turnstile streams for {\em constant-degree} graphs. This work also
considers the very general problem of counting copies of an arbitrary
fixed-size hypergraph motif $M$ inside a large input hypergraph $H$, again in
a turnstile setting. Because of the way their upper bound results depend on the
structure of $M$, they cannot obtain sublinear-space solutions for counting
$k$-simplices without a strong constant-degree assumption on the input $H$.

A handful of works on triangle counting consider adjacency list
streams~\cite{BuriolFLMS06,KolountzakisMPT12,McGregorVV16}, where the input
stream provides all edges incident to each vertex contiguously.
This setting can somewhat simplify algorithm design,
though the basic framework is still sample-count-extrapolate. We do not
consider adjacency list streams in this work.

There are important related algorithms that predate the now-vast literature on
streaming algorithms. In particular, \cite{AlonYZ97} gives the current best
run-time for exact triangle counting in the RAM model and \cite{ChibaN85} gives
time-efficient algorithms for listing all triangles (and more general motifs).
A version of the heavy/light partitioning idea appears in these early works.

More recently, a handful of works \cite{EdenLRS15,AssadiKK19,EdenRS20} have
designed {\em sublinear-time} algorithms for approximately counting triangles
and other motifs given query access to a large input graph. Triangle
detection, listing, and counting have connections to other important problems
in the area of fine-grained complexity \cite{GronlundP18,Williams21}. Triangle
counting has also been studied in distributed, parallel, and high-performance
computing models \cite{SuriV11,PaghT12,SeshadhriPK14,GuiZYLJ19}.

\section{Our Algorithmic Techniques} \label{sec:techniques}

In this section, we shall describe the algorithms designed in this work at a
high level. Our goal is to give an overview of our techniques so as to clarify
two things: (a)~how our various algorithms relate to one another, and (b)~how
we build upon several of the techniques described in
\Cref{sec:closely-related} and what novelty we add.

Recall that our input is a $k$-uniform hypergraph $H$ that has $m$ edges; the
motif we're counting is a $k$-simplex (which involves $k+1$ vertices); and all
our algorithms are given a parameter $T$, which is a promised lower bound on
$T_k(H)$, the number of $k$-simplices.  Our algorithms can be divided into two
families. Algorithms in the first family use {\em targeted sampling} in the
sense described in \Cref{sec:closely-related} and provide space guarantees of
the form $\tO(m^\lambda/T)$, for some $\lambda > 1$.  Algorithms in the second
family use various {\em oblivious sampling} strategies, again in the sense
described in \Cref{sec:closely-related}, and their space guarantees are
typically of the form $\tO(m/T^\lambda)$, for some $\lambda > 1$. Note that,
given two specific algorithms $\cA_1, \cA_2$ from the first and second
families respectively, there is a threshold $\tau$ such that the more
space-efficient of the two algorithms is $\cA_1$, when $T > \tau$, and
$\cA_2$, otherwise. Thus, algorithms from the first family are good when
$k$-simplices are ``abundant'' in $H$, whereas those from the second family
are good when $k$-simplices are ``meager'' in $H$.  For a concrete example,
take $k=3$: our results give space bounds of $\tO(m^{4/3}/T)$ and
$\tO(m/T^{1/2})$; the former wins when $T > m^{2/3}$ and the latter wins
otherwise.

\subsection{The ``Abundant'' Case: Targeted Sampling} \label{sec:tech-abundant}

Our first family of algorithms uses targeted sampling along the lines of
\cite{BeraC17}: that is, we sample a hyperedge $e$ uniformly at random and
then sample a ``neighboring'' vertex $v$ by considering edges that interact
with $e$ in a good way.  Having made these choices, the $k+1$ vertices in $e
\cup \{v\}$ either do or do not define a simplex; we detect which is the case
and prepare our {\em basic estimator} accordingly. It is not hard to make this
estimator unbiased, by using appropriate scaling.

The nontrivial portion of these algorithms is controlling the variance
$\sigma^2$ of the basic estimator. The final algorithm, which returns a
$(1\pm\eps)$-approximation to $T_k(H)$ with high probability, is obtained by
combining several independent copies of the basic estimator using a standard
median-of-means technique (\Cref{lem:median-of-means}).
The eventual space complexity is proportional to the number of independent
copies, which needs be proportional to $\sigma^2$.  What might make $\sigma^2$
high? If we ``detect'' a $k$-simplex $\Xi$ upon sampling any one of its
hyperedges $e$ and picking up the sole remaining vertex $v$, then we may run
into trouble when $H$ contains many simplices that share a common hyperedge
$e^*$: our estimator is too drastically affected by whether or not the initial
random sampling picked $e^*$. 

To remedy this, we modify our basic estimator so that it detects the simplex
formed by $e$ and $v$ (if it exists) only when $e$ has low influence on this
detection. In the graph case ($k=2$), the algorithm of \cite{BeraC17} chooses
$v$ from among the neighbors of the {\em lower-degree} endpoint of $e$ and
considers a triangle detected only when $v$ is its highest-degree vertex:
these choices are crucial for the combinatorial arguments that prove their
variance bound.

\paragraph{A Suboptimal Algorithm.}
We start with a natural extension of the above idea to hypergraphs. Having
chosen $e$, we then choose $v$ from the neighborhood of the minimum degree
vertex inside $e$. We consider a simplex at the vertices in $e \cup \{v\}$ to
be detected iff $v$ is its highest-degree vertex. This sets up a basic
estimator and then a final estimator as described above. Algorithmically, we
implement each basic estimator by using one pass to pick $e$ and additional
passes to pick $v$ and test for the presence of a simplex. Analyzing the
resulting algorithm largely comes down to upper-bounding the variance of the
basic estimator.

This variance bound is not straightforward. After some initial calculations,
we arrive at an expression involving the sum
\begin{align} 
  s_1 := \sum_{e \in E(H)} \min\{\deg(v):\, v \in e\} \,,
  \label{eq:sum-min-deg}
\end{align}
for which we need a good upper bound. In the case of a graph $G$, such a bound
is provided by a classic result of \cite{ChibaN85}, which states that
$\sum_{\{u,v\} \in E(G)} \min\{\deg(u),\deg(v)\} = O(m^{3/2})$. The
combinatorial argument that proves this ultimately involves {\em orienting}
each edge. We need to find the right analog of this notion for hypergraphs and
this is where the complication lies.  Our solution is to generalize the graph
theoretic notion of {\em arboricity} to what we call {\em hyperarboricity},
based on decomposing the set $E(H)$ of hyperedges into {\em hyperforests} as
defined by \cite{FrankKK03}. We then prove a general upper bound on the
hyperarboricity of an $m$-edge hypergraph in terms of $m$ and this in turn
proves that $s_1 = O(m^{2-1/k})$.  From here, it is not hard to obtain a space
bound of $\tO(m^{2-1/k}/T)$ for estimating $T_k(H)$. 

Though this space bound is suboptimal, we feel that the algorithm and its
analysis are interesting in their own right and the ideas could be instructive
for future work. Therefore, we present the details towards the end of the
paper, in \Cref{sec:subopt}.

\paragraph{An Optimal Algorithm.}
To obtain the optimal bound of $\tO(m^{1+1/k}/T)$ mentioned in
\Cref{thm:ubs-informal}, we change the basic estimator as follows. We again
start by choosing $e$ uniformly at random, but for choosing the additional
vertex $v$, we do something more complicated: we consider {\em joint
neighborhoods} and {\em codegrees} of sets of vertices. 

For ease of high-level exposition, let us consider the case $k=3$ first. Given
two vertices $x$ and $y$, the joint neighborhood of $\{x,y\}$ is the set
$\{u:\, \{x,y,u\} \in E(H)\}$ and the codegree $\codeg(\{x,y\})$ is the number
of hyperedges containing both $x$ and $y$. Suppose that the initially picked
hyperedge is $e = \{x,y,z\}$, with $x$ being the minimum-degree vertex among
these three and $y$ being such that $\codeg(\{x,y\}) \le \codeg(\{x,z\})$. We
then sample $v$ from the joint neighborhood of $\{x,y\}$. Having done so, if
there is in fact a $3$-simplex on the vertices $\{x,y,z,v\}$, we ``detect'' it
if and only if the following conditions hold: (a)~$x$ has minimum degree among
$x,y,z$, and $v$; (b)~$\{x,y\}$ has minimum codegree among $\{x,y\}$,
$\{x,z\}$, and $\{x,v\}$; and (c)~$\{x,z\}$ has minimum codegree among
$\{x,z\}$ and $\{x,v\}$. 

For general $k\ge 3$, we define a careful ordering of the vertices inside $e$
using codegrees of successively larger vertex sets.  A single streaming pass
suffices to determine this ordering within $e$.  We sample $v$ from the joint
neighborhood of the first $k-1$ vertices in this ordering. Finally, we detect
a simplex at $\Xi := e \cup \{v\}$, if one exists, only if a similar iterative
procedure that repeatedly picks the ``smallest'' remaining vertex from $\Xi$
singles out $v$ as the only vertex not picked. The full details and analysis
appear in \Cref{sec:ub-abundant} after the necessary definitions in
\Cref{sec:prelim}.

As before, the space complexity analysis hinges on bounding the variance of
the basic estimator. After some algebra, this boils down to giving a good
bound for a combinatorial sum that is a more complicated version
of~\eqref{eq:sum-min-deg}. In the new sum, the term corresponding to a
particular $e \in E(H)$ is the codegree of the first $k-1$ vertices in the
above ordering within $e$. We prove that this sum is bounded by
$O(m^{1+1/k})$, leading to our claimed optimal space bound.

\subsection{The ``Meager'' Case: Oblivious Sampling} \label{sec:tech-meager}

We now outline the ideas involved in our second family of algorithms, which
use various {\em oblivious} sampling strategies. This means that the sampling
portion of one of these algorithms picks (and stores) a hyperedge based on
conditions tested only on that specific hyperedge, without regard to what else
has been stored.  The resulting space bounds are typically of the form
$\tO(m/T^\lambda)$.

The most basic of these is a direct generalization of one of the algorithms in
\cite{McGregorVV16}, which picks each edge independently with probability $p$
and applies heavy/light edge partitioning on top of this. Let us explain this
suboptimal algorithm in slightly more detail, so as to introduce a general
framework for analyzing this family of algorithms. In fact, this framework is
an important conceptual contribution of this work.

\paragraph{A First Attempt and a Framework for Analysis.}
We collect a sample $Q$ by picking each incoming hyperedge with probability $p$,
independently. Define a {\em hyperwedge} to be a simplex minus one of its
hyperedges (generalizing the notion of an {\em wedge} in a graph, which is a a
triangle minus one edge). For our algorithm, we detect a simplex $\Xi$ if all
$k$ edges of one of its hyperwedges are chosen in $Q$: the actual detection
happens in a subsequent pass, when we see the sole remaining hyperedge of $\Xi$ in
the stream. It is easy to show that the number of detected simplices divided
by $(k+1)p^k$ is an unbiased estimator of $T_k(H)$. It will now be helpful to
think of the algorithm as counting the hyperwedges defined by the
$k$-simplices in $H$: this count is exactly $T' := (k+1)T_k(H)$. A particular
hyperwedge is detected iff all its edges are sampled.

It is not hard to bound the variance of our unbiased estimator by an
expression whose important term is proportional to 
\begin{align}
  s_2 := \sum_{\ontop{1 \le i < j \le T'}{i \sim j}}
    \Pr[\text{both } W_i \text{ and } W_j \text{ are detected}] \,.
\end{align}
where $W_1, \ldots, W_{T'}$ enumerates all the copies within $H$ of the motif
being counted (in this case, hyperwedges) and ``\,$i \sim j$\,'' means that
the events ``$W_i$ is detected'' and ``$W_j$ is detected'' are correlated. At
this point, we observe that our sampling procedure satisfies the following 
\begin{quote}
  \textsc{Separation Property:~} If $W_i$ and $W_j$ have no hyperedge in common, 
  then ``$W_i$ is detected'' and ``$W_j$ is detected'' are independent events.
\end{quote}

Depending on the sampling algorithm in question, let the parameters $\alpha$ and $\beta$ be such that
\begin{align}
  \forall\, i &:~ \Pr[W_i \text{ is detected}] = p^\alpha \,, 
    \label{eq:alpha-def-informal} \\
  \forall\, i, j \text{ with } i \sim j &:~
    \Pr[W_i, W_j \text{ are both detected}] \le p^\beta \,, 
    \label{eq:beta-def-informal}
\end{align}
For the algorithm we are currently describing, $\alpha = k$ and $\beta = 2k-1$.  Using the
separation property and some algebra, we can bound $s_2$ by a term
proportional to $\Delta_E T_k(H) p^\beta$, where $\Delta_E$ is the maximum
number of simplices that may share a single hyperedge. Then, using Chebyshev's
inequality, we can bound the probability that our unbiased estimator is not a
$(1\pm\eps)$-approximation to $T_k(H)$ by
\begin{align}
  \Perr \le \frac{1}{\eps^2 p^\alpha T_k(H)} +
    \frac{(k+1)\Delta_E}{2 \eps^2 p^{2\alpha - \beta} T_k(H)} \,.
    \label{eq:estim-deviates}
\end{align}
The choice $p = \Theta(T^{-1/\alpha})$ makes the first term small (we'll ignore
the dependence on $\eps$ in this outline) and reduces the second term to
$O(\Delta_E p^{\beta - \alpha}) = O(\Delta_E / T^\theta)$, where $\theta =
\beta/\alpha - 1$. We cannot claim that this term is small in general.  The
solution, as in \cite{CormodeJ17, McGregorVV16}, is to apply this sampling
technique to a sub-hypergraph of $H$ where $\Delta_E = O(T^\theta)$ is
guaranteed and handle the rest of $H$ in some other way.

\paragraph{Heavy/Light Edge Partitioning.}
Define a hyperedge $e \in E(H)$ to be {\em heavy} if it participates in at
least $T^\theta$ simplices and {\em light} otherwise. Further, define a
simplex occurring in $H$ to be heavy if it at least one of its hyperedges is
heavy; define it to be light otherwise. Assume, for a moment, that as each
hyperedge arrives in the stream, an {\em oracle} tells us whether it is heavy or
light. Using such an oracle, we can filter out all heavy edges and apply the
above sampling scheme to the sub-hypergraph that remains, thereby obtaining a
good estimate for the number of light simplices.

As for the rest, the number of simplices containing a particular heavy
hyperedge $e$ can be estimated very well by sampling the {\em vertices} of $H$
at a rate of $\tO(T^{-\theta})$ and counting how many sampled vertices
complete a simplex with $e$. Algorithmically, this amounts to collecting the
set $S$ of all edges incident to sampled vertices in one streaming pass and
checking for the relevant simplices in a subsequent pass. We can therefore
estimate the number of heavy simplices; a bit of care is needed to account for
multiplicities when a simplex contains several heavy edges.

It remains to explain how the oracle we need can be implemented. A Chernoff
bound argument shows that $e$ is heavy iff it completes a noticeable number of
simplices using edges from $S$, so this same sample $S$ can be repurposed to
implement the oracle in the second pass. The algorithm is complete.

\paragraph{Space Bound.}
The space usage of the overall algorithm is dominated by the space required to
store the sample $Q$, for counting the light simplices, and the sample $S$,
for implementing the oracle and counting the heavy simplices. If the sampling
scheme stores a hyperedge in $Q$ with probability $p^\gamma$, for some
parameter $\gamma$ (note that, for the simple algorithm we are currently
considering, $\gamma = 1$), then, in expectation, $|Q| = p^\gamma m 
= \Theta(m / T^{\gamma/\alpha})$ whereas $|S| = \Theta(T^{-\theta} m) =
\Theta(m / T^{\beta/\alpha - 1})$. Thus, we obtain a space bound of
$\tO(m/T^\lambda)$ for
\[
  \lambda = \min\left\{ \frac{\gamma}{\alpha},\, \frac{\beta}{\alpha} - 1 \right\} \,.
\]
For the present algorithm, $\alpha = k, \beta = 2k-1$, and $\gamma = 1$,
leading to a space bound of $\tO(m/T^{1/k})$. Notice how this generalizes the
$\tO(m/\sqrt{T})$ bound for triangle-counting in graphs~\cite{McGregorVV16}.
Notice, also, that this space bound is already sublinear and, in a certain
regime of $T$ values, it beats $O(m^{1+1/k}/T)$, which we obtained by targeted
sampling. 

\paragraph{First Improvement: Sampling by Coloring.}
We can improve on the above space bound by using a more sophisticated sampling
strategy. Recall the \cite{PaghT12} strategy of coloring vertices and picking
up monochromatic edges. We could do the same here: use a coloring function
$\col$ to assign each vertex a random color from $\{1,\ldots,N\}$, where $N =
1/p$ (assume, for simplicity, that this is an integer) and store a hyperedge
iff it becomes monochromatic, an event that has probability $p^{k-1}$. Thus,
the set $Q$ of all stored hyperedges has expected size $p^\gamma m$ with
$\gamma = k-1$.

It is tempting to apply the above analysis framework to this algorithm, using
$\alpha = k$ and $\beta = k+1$---these values {\em do} satisfy
\cref{eq:alpha-def-informal,eq:beta-def-informal} with $W_i$ being the $i$th
copy of a $k$-simplex in $H$---to arrive at $\lambda = \min\{1-1/k, 1/k\} =
1/k$.  However, this would be wrong, because this sampling scheme breaks the
separation property defined above: two $k$-simplices within $H$ are correlated
as soon as they have two vertices in common. The analysis only works when
$k=2$ (the graph case), when we do obtain $\lambda = 1/2$ and a space bound of
$\tO(m/\sqrt{T})$.

To remedy the situation, we introduce our first novel twist: we apply the
coloring idea not to vertices, but to {\em subsets of vertices}.
Specifically, we assign one of $N = 1/p$ colors to each $(k-1)$-sized subset
of vertices and we collect a hyperedge $e$ into $Q$ iff all subsets within $e$
receive the same color. This removes the unwanted correlations between
simplices that don't share a common hyperedge. After going through the above
analysis framework, we arrive at $\lambda = 2/(k+2)$, for a space bound of
$\tO(m/T^{2/(k+2)})$.

\paragraph{Second Improvement: Working With a Shadow Hypergraph.}
To obtain the even better bound of $\tO(m/T^{2/(k+1)})$ mentioned in
\Cref{thm:ubs-informal}, we introduce a further twist, one that is specific to
hypergraphs (i.e., it requires hyperedge size $k \ge 3$). We perform the
vertex-subset coloring on what we call the {\em shadow hypergraph} of the
input $H$: this shadow is a $(k-1)$-uniform hypergraph $H'$ obtained by removing
the lowest-numbered vertex from each hyperedge of $H$ and doing some
bookkeeping to record which vertex got removed. This bookkeeping creates a more
complicated set $V'$ of shadow vertices, so that the set of shadow edges $E'$
is in bijective correspondence with the original edge set $E$.

As we shall see in \Cref{sec:ub-meager}, such shadow edges contain most of the
information we need to count $k$-simplices in $H$ and we obtain a space
efficient algorithm by sampling shadow edges based on a coloring of
$(k-2)$-sized subsets of shadow vertices. Notice how this requires $k \ge 3$.
All further details are best explained after the construction of $H'$ has been
formally given. Once we spell out these details and go through our usual
analysis framework, we obtain $\lambda = 2/(k+1)$, leading to the claimed
space complexity.

\paragraph{A One-Pass Algorithm.}
Finally, we turn to the $1$-pass bound claimed in \Cref{thm:ubs-informal},
where the space bound is given in terms of structural parameters $\Delta_E$
and $\Delta_V$, in addition to the usual $m$ and $T$.  This is based on
generalizing the recent algorithm of \cite{JayaramK21}, i.e., sampling
vertices at a certain rate and then sampling incident hyperedges at a certain
other rate. The corresponding analysis is simpler, since we do not use the
heavy/light partitioning technique. In fact, it is a fairly direct
generalization of the analysis in \cite{JayaramK21}.

\section{Preliminaries} \label{sec:prelim}

We now define our key terminology, set up notation, and establish some basic
facts that we shall refer to in the algorithms and analyses to come.

\begin{definition}[Hypergraph, degrees, neighborhoods] \label{def:hypergraph}
  A {\em hypergraph} is a pair $H = (V,E)$ where $V$ is a nonempty finite set
  of {\em vertices} and $E \subseteq 2^V$ is a set of {\em hyperedges}. In
  certain contexts, this structure is instead called a {\em set system} over
  $V$. If $|e| = k \geq 1$ for all $e \in E$, then $H$ is said to be {\em $k$-uniform}. 
  For simplicity, we shall often shorten ``$k$-uniform hypergraph'' to {\em
  $k$-graph} and ``hyperedge'' to ``edge.''

  For each $S \subseteq V$ with $|S| < k$, the {\em joint degree} or
  {\em codegree} $\codeg(S)$ of $S$ is the number of edges that strictly
  extend $S$. The {\em neighborhood} $\Nhd(S)$ of $S$ is the set of non-$S$ vertices
  that share an edge with $S$. Formally,
  \begin{align} 
    \codeg(S) &:= |\{e \in E:\, e \supsetneq S\}| \,;
      \label{eq:codeg} \\
    \Nhd(S) &:= \{v \in V:\, v \notin S \text{ and } 
      \exists\,e\in {E} \text{ such that } e \supseteq S\cup\{v\}\} \,.
      \label{eq:nhd}
  \end{align}
  For a singleton set $S = \{u\}$, we define $\deg(u) := \codeg(\{u\})$ and
  $\Nhd(u) := \Nhd(\{u\})$. Further, given $S \subseteq V$, we define the 
  {\em $S$-relative degrees} of vertices $x \in V \setm S$ by
  \begin{align}
    \reldeg{S}{x} := \codeg(S \cup \{x\}) \,.
      \label{eq:reldeg}
  \end{align}
  Note that this definition is meaningful only when $|S| \le k-2$. Note, also,
  that $\deg(x) = \reldeg{\varnothing}{x}$.
\end{definition}

Throughout the paper, hypergraphs will be $k$-uniform unless qualified
otherwise. We shall consistently use the notation $H = (V,E)$ for a generic
$k$-graph and define $n := |V|$ and $m := |E|$. We shall assume that each
vertex $v \in V$ has a unique ID, denoted $\ID(v)$, which is an integer 
in the range $[n] := \{1,\ldots,n\}$.

In general, there isn't an
equation relating $|\Nhd(S)|$ to $\codeg(S)$. However, all $k$-graphs satisfy the following useful lemmas.

\begin{lemma}[Degree-vs-Neighborhood] \label{lem:deg-nhd}
  For $S \subseteq V$ with $|S| < k$,
  $\codeg(S) \le |\Nhd(S)| \le (k-|S|)\codeg(S)$.
\end{lemma}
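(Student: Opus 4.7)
Let $t := |S|$ and $E_S := \{e \in E : e \supsetneq S\}$, so $\codeg(S) = |E_S|$. My plan is to view $E_S$ and $\Nhd(S)$ as the two vertex classes of a bipartite incidence graph $B$, where $e \in E_S$ is adjacent to $v \in \Nhd(S)$ iff $v \in e$. Two structural facts about $B$ will carry the proof: (i)~each $e \in E_S$ has exactly $k-t$ neighbors in $B$, since $e \setm S \subseteq \Nhd(S)$ and $|e \setm S| = k-t$; and (ii)~each $v \in \Nhd(S)$ has at least one neighbor in $B$, directly from the definition of $\Nhd(S)$.

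For the upper bound $|\Nhd(S)| \le (k-t)\codeg(S)$, I would double-count the edges of $B$: fact~(ii) gives $|\Nhd(S)| \le \sum_{v \in \Nhd(S)} \reldeg{S}{v}$, and this sum re-indexes (by fact~(i)) to $\sum_{e \in E_S} |e \setm S| = (k-t)|E_S|$. This side of the lemma is essentially a crude union bound and should be routine.

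For the lower bound $\codeg(S) \le |\Nhd(S)|$, my plan is to exhibit an injection $E_S \hookrightarrow \Nhd(S)$ that sends each $e \in E_S$ to a representative vertex $v(e) \in e \setm S$ in a pairwise-distinct way. The natural candidates are a canonical deterministic rule exploiting the vertex IDs (for example, iteratively picking the smallest unused element of $e \setm S$), or a Hall-type matching argument on the bipartite graph $B$. The hard part, and the step I expect to require the most care, is arguing that such a system of distinct representatives always exists for a $k$-uniform hypergraph: the upper bound only used obvious counting, whereas here the bound is genuinely constraining the family $E_S$ and the structural properties of a $k$-graph have to be leveraged in a non-trivial way.
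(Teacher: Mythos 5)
Your upper-bound argument (double counting incidences between $E_S$ and $\Nhd(S)$) is correct and is essentially the same counting the paper's one-line proof uses. The genuine gap is in the lower bound, and it is not merely that you left the key step as a plan: the injection $E_S \hookrightarrow \Nhd(S)$ you are hoping to build \emph{cannot exist in general}, because the inequality $\codeg(S) \le |\Nhd(S)|$ is false once $|S| \le k-2$. Concretely, take $k=3$, $S=\{s\}$, and let the edge set consist of all six triples $\{s\}\cup P$ with $P$ a $2$-subset of $\{a,b,c,d\}$: then $\codeg(S)=6$ while $|\Nhd(S)|=4$, so there is no system of distinct representatives, and Hall's condition already fails for the set $X=E_S$ itself (its joint neighborhood in your bipartite graph $B$ is all of $\Nhd(S)$, which is too small). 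So the step you flagged as ``the hard part'' is in fact impossible, and no amount of exploiting $k$-uniformity will rescue it.

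What is true, and what the paper actually needs, is the case $|S|=k-1$: there each edge $e\supsetneq S$ satisfies $e = S\cup\{v\}$ for a unique $v$, the map $e\mapsto e\setm S$ is a bijection onto $\Nhd(S)$, and one gets the equality $\codeg(S)=|\Nhd(S)|$ used in \Cref{sec:ub-abundant} for the sets $S_{k-1}(e)$. (For what it is worth, the paper's own justification --- ``each edge that extends $S$ adds at least one new neighbor'' --- has the same defect for $|S|<k-1$: after the first edge, later edges extending $S$ may add no new neighbors at all, as the example above shows.) The honest fix is to restrict the lower bound to $|S|=k-1$, where your ``canonical representative'' idea degenerates into the trivial bijection and needs no Hall-type argument; your upper bound remains valid for all $|S|<k$.
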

\begin{prf}
  Each edge that extends $S$ adds at least one and at most $k-|S|$ 
  new neighbors to the sum.
\end{prf}

\begin{lemma}[Generalized Handshake] \label{lem:handshake}
  For $1 \le r < k$, we have
  $\sum_{S \subseteq V:\, |S|=r} \codeg(S) = \binom{k}{r} m = O(m)$.
\end{lemma}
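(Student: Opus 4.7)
The plan is to prove the identity by a standard double-counting argument, counting incidences between $r$-element vertex subsets and hyperedges.

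First, I would observe that since $|S| = r < k$ and every $e \in E$ satisfies $|e| = k$, the condition ``$e \supsetneq S$'' in the definition \eqref{eq:codeg} of $\codeg(S)$ is equivalent to ``$e \supseteq S$''; the containment is automatically strict by cardinality. Thus
\[
  \sum_{S \subseteq V : |S| = r} \codeg(S)
  \;=\; \sum_{S \subseteq V : |S| = r} |\{e \in E : e \supseteq S\}|
  \;=\; \bigl|\{(S,e) : S \subseteq V,\ |S|=r,\ e \in E,\ S \subseteq e\}\bigr|.
\]

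Next, I would swap the order of summation and count the same incidence set by iterating over $e$ first. For each fixed $e \in E$, the number of $r$-element subsets $S$ with $S \subseteq e$ is exactly $\binom{|e|}{r} = \binom{k}{r}$ by $k$-uniformity. Summing over the $m$ hyperedges gives $\binom{k}{r} m$, which matches the left-hand side and proves the equality. Finally, since the paper treats $k$ as a constant (as noted in the $\tO$-notation footnote), $\binom{k}{r} = O(1)$, yielding the $O(m)$ bound.

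There is no real obstacle here: the lemma is a direct hypergraph analog of the graph handshake identity $\sum_v \deg(v) = 2m$ (recovered with $r=1$, $k=2$), and the proof is a one-line double count. The only subtlety worth flagging explicitly is the $\supsetneq$-vs-$\supseteq$ issue in the definition of $\codeg$, which is why the hypothesis $r < k$ is needed.
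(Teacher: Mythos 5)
Your proof is correct and is the same double-counting argument the paper uses: its one-line proof simply states that the sum counts each edge exactly $\binom{k}{r}$ times. Your additional observation about $\supsetneq$ versus $\supseteq$ under the hypothesis $r < k$ is a fine clarification but not a different approach.
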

\begin{prf}
  The sum counts each edge exactly $\binom{k}{r}$ times.
\end{prf}

\begin{definition}[Simplices]
  A {\em $k$-simplex} is a $k$-graph on $k+1$ vertices such that all possible
  $k$-sized edges are present. If $H = (V,E)$ is a $k$-graph and $X \subseteq
  V$ with $|X| = k+1$, we say that $H$ has a simplex at $X$ if the induced subhypergraph $H[X]
  := (X, \{e \in E: e \subseteq X\})$ is a simplex. Abusing notation, we also
  use $X$ to denote this simplex.

  We use $\cT_k(H)$ to denote the set of all $k$-simplices in $H$ and $T_k(H)
  := |\cT_k(H)|$ to denote the number of such simplices. We define
  $\Delta_E = \Delta_E(H)$ and $\Delta_V = \Delta_V(H)$ as follows:
  \begin{align}
    \Delta_E &:= \max_{e \in E} |\{X \in \cT_k(H):\, X \text{ contains edge } e\}| \,; 
      \label{eq:delta_e} \\
    \Delta_V &:= \max_{v \in V} |\{X \in \cT_k(H):\, X \text{ contains vertex } v\}| \,.
      \label{eq:delta_v}
  \end{align}
\end{definition}

Drawing inspiration from the {\em link} operation for abstract and simplicial
complexes in topology, we define a couple of operations that derive $(k-1)$-graphs
from $k$-graphs. 

\begin{definition}[Neighborhood and shadow hypergraphs] \label{def:nhd-shadow}
  Let $k \ge 3$ and let $H = (V,E)$ be a $k$-graph. For each $u \in V$, the
  {\em neighborhood hypergraph of $H$ at $u$} is the $(k-1)$-graph
  $\nhyp{H}{u} = (\Nhd(u), E_u)$, where
  \[
    E_u := \{\{x_1,\ldots, x_{k-1}\}:\, \{u, x_1, \ldots, x_{k-1}\} \in E\} 
    = \{e \setm u:\, e \in E \text{ and } e \ni u\} \,.
  \]
  The {\em shadow hypergraph} of $H$ is the $(k-1)$-graph $H' = (V',E')
  = \left( \bigcup_{u\in V} V'_u, \, \bigcup_{u \in V} E'_u \right)$, where
  \begin{align*}
      V'_u &:= \big\{ \flavor{u}{x}:\, x\in V \big\}
        \text{ is a copy of $V$ ``flavored'' with $u$, and} \\
      E'_u &:= \big\{\big\{ \flavor{u}{x_1}, \ldots, \flavor{u}{x_{k-1}} \big\}:\,
        \{u,x_1,\ldots,x_{k-1}\} \in E \text{ and }
        \ID(u) \leq \ID(x_i)~ \forall\, i\in[k-1] \big\} \,.
  \end{align*}
  Note that $E'_u$ is subtly different from $E_u$ in that it is induced by
  hyperedges incident on $u$ in which $u$ is the minimum-ID vertex. Observe
  that, as a result, $|E'| = |E|$.
\end{definition}
We use neighborhood hypergraphs as a tool to analyze our first major
algorithm, in \Cref{sec:ub-abundant}. We use shadow hypergraphs in a crucial
way to obtain our second major algorithm, in \Cref{sec:ub-meager-best}. 

In the literature on triangle counting in graphs, the structure formed by two
edges sharing a common vertex is called a {\em wedge}. It will be useful to 
define an analog of the notion for hypergraphs.
\begin{definition} \label{def:hyperwedge}
  A {\em $k$-hyperwedge} is the hypergraph obtained by deleting one hyperedge
  from a $k$-simplex. Thus, if its vertex set is $X$, then $|X| = k+1$ and
  it has $k$ hyperedges with exactly one common vertex, which we call the {\em apex}
  of the hyperwedge. The only $k$-sized subset of $X$ that is {\em not} present in
  the hyperwedge is called the {\em base} of the hyperwedge. Adding this base as
  a new hyperedge produces a $k$-simplex.
\end{definition}

Fix a $k$-graph $H$ and a $k$-simplex $X$ in $H$. Observe that $X$
contains $k+1$ distinct hyperwedges; each such hyperwedge $W$ corresponds to a 
$(k-1)$-simplex in the neighborhood hypergraph of $H$ at the apex of $W$.
On the other hand, $X$ corresponds to exactly one $(k-1)$-simplex in the
shadow hypergraph $H'$, namely the simplex at 
$\{\flavor{z}{u_1}, \ldots, \flavor{z}{u_k}\}$, where $z$ is the
minimum-ID vertex in $X$ and $\{u_1,\ldots,u_k\} = X \setm \{z\}$.

\bigskip
The first major algorithmic result we shall give estimates $T_k(H)$ in 
space $\tO(m^{1+1/k}/T)$. To appreciate this form of the bound, it is worth
studying the following simple result that upper bounds the maximum possible
value of $T_k(H)$. It is a direct analogue of the well-known graph theoretic
fact that a graph with $m$ edges has at most $O(m^{3/2})$ triangles. The proof
will serve as a useful warm-up for the work to come.

\begin{proposition}[Upper bound on number of simplices] \label{prop:max-simplices}
  For every $k$-graph $H = (V,E)$ with $m$ edges, we have the bound $T_k(H) =
  O\left(m^{1+1/k}\right)$.
\end{proposition}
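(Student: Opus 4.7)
The plan is induction on $k$, with base case $k=2$ being the classical bound of $O(m^{3/2})$ triangles in any $m$-edge graph. For the inductive step, fix a threshold $d := m^{(k-1)/k}$ and partition $V$ into \emph{high} vertices, with $\deg(v) > d$, and \emph{low} vertices, with $\deg(v) \le d$. By \Cref{lem:handshake}, $\sum_v \deg(v) = km$, so there are at most $km/d = km^{1/k}$ high vertices, and the ``all-high'' simplices are trivially bounded by $\binom{|V_{\text{high}}|}{k+1} \le (km^{1/k})^{k+1}/(k+1)! = O(m^{1+1/k})$.

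Every remaining simplex contains at least one low vertex, so I bound these by charging each such simplex $X$ to a low $v \in X$. The key structural fact (read off from \Cref{def:nhd-shadow}) is that $X \setminus \{v\}$ is a $(k-1)$-simplex in the neighborhood hypergraph $\nhyp{H}{v}$: the $k$ hyperedges of $X$ that contain $v$ become, after deleting $v$, the $k$ hyperedges of a $(k-1)$-simplex on $X \setminus \{v\}$. Hence the number of $k$-simplices in $H$ through $v$ is at most $T_{k-1}(\nhyp{H}{v})$, which by the induction hypothesis is $O(\deg(v)^{k/(k-1)})$. Summing over low $v$ and using $\deg(v)^{1/(k-1)} \le d^{1/(k-1)} = m^{1/k}$ inside the sum,
\[
  \sum_{v:\,\deg(v) \le d} \deg(v)^{k/(k-1)} \;\le\; m^{1/k} \sum_v \deg(v) \;=\; km \cdot m^{1/k} \;=\; O(m^{1+1/k}) \,.
\]
Combining the two contributions closes the induction.

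The main point to watch is the ``neighborhood-hypergraph correspondence'': the map from $k$-simplices of $H$ through $v$ to $(k-1)$-simplices of $\nhyp{H}{v}$ is well-defined but not surjective (the base edge $X \setminus \{v\}$ need not itself lie in $E(H)$), which is exactly why it yields only an upper bound, as required. Beyond this, the only real design decision is the choice of threshold $d$, picked so that the two cases balance at $m^{1+1/k}$; the remaining manipulations are routine.
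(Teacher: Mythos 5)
Your proof is correct and follows essentially the same route as the paper's: induction on $k$ with the degree threshold $m^{(k-1)/k}$, the injective correspondence between $k$-simplices through a small-degree vertex $v$ and $(k-1)$-simplices of $\nhyp{H}{v}$, and the identical final summation $\sum_v \deg(v)^{1/(k-1)}\deg(v) \le m^{1/k}\cdot km$. The only (immaterial) difference is the case split: the paper bounds all simplices containing a heavy vertex by $(\text{number of heavy vertices})\times m$ and applies induction only to all-light simplices, whereas you bound the all-high simplices by $\binom{O(m^{1/k})}{k+1}$ and apply induction to every simplex with at least one low vertex; both choices give $O\left(m^{1+1/k}\right)$.
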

\begin{proof}
  We use induction on $k$. The $k=2$ case is the aforementioned folklore
  result for triangles in graphs.

  Consider $k \ge 3$. For $v \in V$, call $v$ {\em heavy} if $\deg(v) >
  m^{(k-1)/k}$ and {\em light} otherwise. By the handshake lemma, there are
  $O(m^{1/k})$ heavy vertices, each of which can be in at most $m$ simplices.
  Therefore, the number of simplices that contain a heavy vertex is at most
  $O(m^{1+1/k})$.

  Now, let $\Tklight(H)$ be the number of simplices with all their vertices
  light. Each simplex containing a particular light vertex $v$ must, as
  observed above, correspond to a $(k-1)$-simplex in the neighborhood
  hypergraph $\nhyp{H}{v}$. Thus, by the induction hypothesis,
  \begin{align*}
    \Tklight(H) &= O\left(\sum_{v \text{ light}}\deg(v)^{k/(k-1)}\right) \\
    &= O\left(\sum_{v \text{ light}} \deg(v)^{1/(k-1)} \cdot \deg(v)\right) \\
    &= O\left(m^{1/k} \sum_{v \text{ light}}\deg(v)\right)
    = O\big(m^{1+1/k}\big) \,. \qedhere
  \end{align*}
\end{proof}
The above bound is tight. A complete $n$-vertex $k$-graph has $T_k(H) =
\Theta(n^{k+1})$ and $m = \Theta(n^{k})$.

\bigskip
A random variable $\widetilde{Q}$ is said to be an {\em
$(\eps,\delta)$-estimate} of a statistic $Q$ if $\Pr[\widetilde{Q} \in
(1\pm\eps) Q] \ge 1-\delta$. The following boosting lemma from the theory of
randomized algorithms is standard and we shall often use it.
\begin{lemma}[Median-of-Means improvement] \label{lem:median-of-means}
  If $\hat{Q}$ is an unbiased estimator of some statistic, then
  one can obtain an $(\eps,\delta)$-estimate of that statistic by
  suitably combining
  \[
    K := \frac{C}{\eps^2} \ln\frac{2}{\delta} \cdot
      \frac{\Var[\hat{Q}]}{\EE[\hat{Q}]^2}
  \]
  independent samples of $\hat{Q}$, where $C$ is a universal constant. \qed
\end{lemma}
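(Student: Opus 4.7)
The plan is to use the classical median-of-means amplification. I would split the $K$ samples into $t$ independent groups of size $s$, where $s := \lceil 4\Var[\hat{Q}]/(\eps^2 \EE[\hat{Q}]^2)\rceil$ and $t := \Theta(\ln(2/\delta))$, chosen so that $st \le K$ for a suitable universal constant $C$ in the statement. Letting $\bar{Q}_1, \ldots, \bar{Q}_t$ denote the arithmetic means of the respective groups, the final estimator would be $\widetilde{Q} := \operatorname{median}(\bar{Q}_1, \ldots, \bar{Q}_t)$.

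For the inner averaging step I would apply Chebyshev's inequality to each $\bar{Q}_i$. Since $\EE[\bar{Q}_i] = \EE[\hat{Q}]$ by unbiasedness and $\Var[\bar{Q}_i] = \Var[\hat{Q}]/s$ by independence within the group, the choice of $s$ forces
\[
\Pr\bigl[\,|\bar{Q}_i - \EE[\hat{Q}]| \ge \eps\,\EE[\hat{Q}]\,\bigr] \;\le\; \frac{\Var[\hat{Q}]/s}{\eps^2 \EE[\hat{Q}]^2} \;\le\; \tfrac{1}{4},
\]
so each group mean lands in $(1\pm\eps)\EE[\hat{Q}]$ with probability at least $3/4$.

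For the outer step, I would let $Y_i$ be the indicator that $\bar{Q}_i$ fails to lie in $(1\pm\eps)\EE[\hat{Q}]$. The $Y_i$ are mutually independent (since the $t$ groups use disjoint samples) with $\EE[Y_i] \le 1/4$, so a Chernoff bound on $\sum_i Y_i$ yields $\Pr[\sum_i Y_i \ge t/2] \le \exp(-\Omega(t)) \le \delta$ provided $C$ is chosen large enough in the definition of $t$. On the complementary event, strictly more than half of the $\bar{Q}_i$ lie in the target interval, and therefore so does their median.

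I do not expect any genuine obstacle here: the whole argument is standard, and the only work is routine constant-chasing to verify that a single universal $C$ makes $st \le (C/\eps^2)\ln(2/\delta) \cdot \Var[\hat{Q}]/\EE[\hat{Q}]^2$. The mildest subtlety is ensuring that the ``median of means'' construction and the Chernoff step really do use disjoint, hence independent, batches of $\hat{Q}$-samples; this is immediate from the partition into $t$ groups.
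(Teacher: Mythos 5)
Your proposal is correct and is exactly the canonical median-of-means argument (Chebyshev on each group mean, then a Chernoff bound on the number of failing groups); the paper itself states \Cref{lem:median-of-means} as a standard fact and omits the proof, so there is nothing to diverge from. The only cosmetic caveat is the usual one: the stated $K$ should implicitly be at least $\Omega(\ln(2/\delta))$ groups' worth of samples (i.e., one absorbs the ceiling in $s$ and the group count $t$ into the universal constant $C$), which your constant-chasing remark already acknowledges.
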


\section{An Optimal Algorithm Based on Targeted Sampling} \label{sec:ub-abundant}

In this section, we present the details of the algorithm behind
\cref{part:ub-abundant} in \Cref{thm:ubs-informal}, with a formal description
of the algorithm (as \Cref{alg:abundant}) followed by its analysis. This
algorithm is based on ``targeted sampling,'' as outlined in
\Cref{sec:techniques}. It runs in $\tO(m^{1+1/k}/T)$ space, which is optimal
in view of \Cref{thm:lb-main}. It is the method of choice for the ``abundant''
case, when $T$ is large enough: specifically, in view of the guarantees of
\Cref{alg:meager-best} to follow (see \Cref{thm:ub-meager-best}), ``large
enough'' should be defined as $T \ge m^{(k+1)/(k^2-k)}$.

As outlined in \Cref{sec:tech-abundant}, the algorithm presented in this
section is in fact the second (and the less straightforward) of two
targeted-sampling algorithms. The first of these algorithms gives a suboptimal
space bound of $\tO(m^{2-1/k})$: it is easier to describe but its space
guarantee is far from obvious to prove, requiring new methods for analyzing
the structure of hypergraphs. Since this analysis is interesting on its own,
we do give a full exposition and analysis of this suboptimal algorithm, but
defer it to \Cref{sec:subopt}.

\subsection{The Algorithm} \label{sec:abundant-alg}

The basic setup is as in the algorithm of \cite{BeraC17} for counting odd
cycles in graphs (and an analogous algorithm of \cite{McGregorVV16}). We use
one pass to pick an edge $e \in E$ uniformly at random and, in subsequent
passes, use further sampling to estimate the number of suitable apex vertices
$z$ at which there is a hyperwedge with base $e$, which would complete a
simplex together with $e$. As noted in
\Cref{sec:tech-abundant}, in order to control the variance of the resulting
estimator, we want the vertices of a detected simplex $e \cup \{z\}$ to
respect a certain ordering ``by degree.'' In the more straightforward
algorithm (given in \Cref{sec:subopt}), we use a total ordering of $V$ based
on actual vertex degrees, with ties broken by vertex IDs.  Here, however, this
ordering is more subtle, as hinted in \Cref{sec:techniques}. 

We let the sampled edge $e$ determine the ordering, for which we use {\em
relative} degrees, relative to certain subsets of $e$. Moreover, this
ordering---which we call the $e$-relative ordering---is only partial, as we
never need to compare two vertices that both lie outside $e$. Let $c_i(e)$ be
the $i$th vertex of $e$ according to this ordering that we shall soon define.
In our algorithm, we shall look for a suitable apex $z$ only in the
neighborhood $\Nhd(\{c_1(e),\ldots,c_{k-1}(e)\})$. Furthermore, we shall
require that $z$ have a larger degree than $c_1(e)$, a larger
$\{c_1(e)\}$-relative degree than $c_2(e)$, a larger
$\{c_1(e),c_2(e)\}$-relative degree than $c_3(e)$, and so on.  As we shall
see, the analysis of the resulting algorithm uses the recursive structure of
$k$-graphs and $k$-simplices through the notion of neighborhood hypergraphs
(\Cref{def:nhd-shadow}).

These next two, somewhat elaborate, definitions formalize the above ideas.

\begin{definition}[$\bm{e}$-relative ordering] \label{def:rel-order}
  Fix a hyperedge $e \in E$. The $e$-relative ordering, $\prec_e$, is a
  partial order on $V$ defined as follows. Set $S_0(e) = \varnothing$ and
  define the following, iteratively, for $i$ running from $1$ to $k$.
  \begin{quote}
    Let $c_i(e)$ be the vertex in $e \setm S_{i-1}(e)$ that
    minimizes $\reldeg{S_{i-1}(e)}{c_i(e)}$, with ties resolved in favor
    of the vertex with smallest ID. Set $S_i(e) := 
    S_{i-1}(e) \cup \{c_i(e)\} = \{c_1(e), \ldots, c_i(e)\}$.
  \end{quote}
  Then declare $c_1(e) \prec_e \cdots \prec_e c_k(e)$. Further, for each $z
  \in V \setm e$, declare $c_k(e) \prec_e z$ if the following two things hold:
  (a)~for $1 \le i \le k-1$, we have $\reldeg{S_{i-1}(e)}{c_i(e)} \le
  \reldeg{S_{i-1}(e)}{z}$, and (b)~we also have $\reldeg{S_{k-2}(e)}{c_k(e)}
  \le \reldeg{S_{k-2}(e)}{z}$. Ties in degree comparisons are resolved by
  declaring the vertex with smaller ID to be smaller.%
  \footnote{This is how ties are resolved from now on, even when not
  mentioned.}
\end{definition}

\begin{definition}[Simplex labeling] \label{def:labeling}
  Suppose that $H$ has a simplex at $X$, where $X \subseteq V$.  Number the
  vertices in $X$ as $u_1, \ldots, u_{k+1}$ iteratively, as follows.  For $i$
  from $1$ to $k-1$, let $u_i$ be the vertex in $X$ that minimizes
  $\reldeg{\{u_1, \ldots, u_{i-1}\}}{u_i}$. Let $u_k$ be the vertex among
  $\{u_k, u_{k+1}\}$ that minimizes $\reldeg{\{u_1, \ldots, u_{k-2}\}}{u_k}$.
  This automatically determines $u_{k+1}$.
  Based on this, define the {\em label} of the simplex at $X$ to be
  $(e(X),z(X))$, where $e(X) := \{u_1,\ldots,u_k\}$ and $z(X) := u_{k+1}$.
\end{definition}

Notice that if a simplex is labeled $(e,z)$, then the $e$-relative ordering
satisfies $\max_{\prec_e}(e) \prec_e z$. Conversely, if there is a simplex at
$X$ and an edge $e$ of this simplex such that $X = e \cup \{z\}$ and
$\max_{\prec_e}(e) \prec_e z$, then the label of this simplex is $(e,z)$.

For each $e \in E$, let $\nsimp_e$ denote the number of simplices that have a
label of the form $(e,\cdot)$. Thanks to the uniqueness of labels, we
immediately have
\begin{align}
  \sum_{e \in E} \nsimp_e = T_k(H) \,. \label{eq:sum-by-base}
\end{align}

Our algorithm is described in \Cref{alg:abundant}. Our analysis will make use
the following two combinatorial lemmas at key moments.  We defer the proofs of
these lemmas to \Cref{sec:abundant-proofs}.

\begin{lemma} \label{lem:simplices-on-base}
  \gdef\lemmaSimplicesOnBase{%
  For every edge $e \in E$, we have $\nsimp_e \leq km^{1/k}$.%
  }
  \lemmaSimplicesOnBase
\end{lemma}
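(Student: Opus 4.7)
My plan is to show $\nsimp_e \le k m^{1/k}$ by deriving $k$ separate upper bounds on $\nsimp_e$, one for each ``level'' of the chain $\varnothing = S_0(e) \subset S_1(e) \subset \cdots \subset S_{k-1}(e) \subset e$, and then multiplying them so that the intermediate codegrees telescope.

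\emph{Step 1 (structural constraints on each contributing $z$).} I will first unpack \Cref{def:rel-order,def:labeling} to verify that a simplex $X$ contributes to $\nsimp_e$ exactly when $X = e \cup \{z\}$ and the iterative labeling of $X$ selects $u_i = c_i(e)$ for $1 \le i \le k$ (forcing $u_{k+1} = z$). Abbreviating $d_i := \codeg(S_i(e))$ for $0 \le i \le k-1$, so that $d_0 = m$, the argmin conditions in \Cref{def:labeling} then force the codegree inequality $\codeg(S_{i-1}(e) \cup \{z\}) \ge d_i$ for every $1 \le i \le k-1$, while the simplex property supplies the ``top'' constraint that $S_{k-1}(e) \cup \{z\}$ is an edge of $H$.

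\emph{Step 2 (per-level counting bounds).} For $1 \le i \le k-1$, a one-step handshake gives
\[
  \sum_{v \notin S_{i-1}(e)} \codeg(S_{i-1}(e) \cup \{v\}) \;=\; (k - i + 1)\, d_{i-1},
\]
since each edge $f \supsetneq S_{i-1}(e)$ is counted once for every element of $f \setminus S_{i-1}(e)$. Restricting the sum to those $v = z$ counted by $\nsimp_e$ (all distinct and disjoint from $S_{i-1}(e) \subseteq e$) and inserting the Step~1 lower bound yields $\nsimp_e \cdot d_i \le (k - i + 1)\, d_{i-1}$. For the remaining index $i = k$, the simplex constraint from Step~1 directly gives $\nsimp_e \le d_{k-1}$, since each contributing $z$ produces a distinct edge $S_{k-1}(e) \cup \{z\}$ extending $S_{k-1}(e)$.

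\emph{Step 3 (telescope and conclude).} Setting $d_k := 1$, all $k$ bounds take the uniform form $\nsimp_e \le (k - i + 1)\, d_{i-1}/d_i$. Multiplying and telescoping,
\[
  \nsimp_e^k \;\le\; \prod_{i=1}^{k}(k - i + 1) \cdot \prod_{i=1}^{k}\frac{d_{i-1}}{d_i} \;=\; k!\,\cdot\,\frac{d_0}{d_k} \;=\; k!\, m \;\le\; k^k m,
\]
which yields $\nsimp_e \le k m^{1/k}$. I expect Step~1 to be the main place requiring care: one must verify that running the labeling rule of \Cref{def:labeling} on $X = e \cup \{z\}$ really does force $u_i = c_i(e)$ inductively, so that every contributing $z$ satisfies the full chain of argmin-derived codegree inequalities. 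Once that identification is in hand, Steps~2 and 3 are elementary double-counting.
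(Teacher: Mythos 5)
Your proof is correct, but it takes a genuinely different route from the paper's. The paper proves \Cref{lem:simplices-on-base} by induction on $k$: it splits into cases according to whether $\deg(c_1(e)) \ge m^{(k-1)/k}$, handles the heavy case via the generalized handshake lemma (few vertices can have such high degree, and any apex $z$ of a simplex labeled $(e,z)$ must), and handles the light case by passing to the neighborhood hypergraph $\nhyp{H}{c_1(e)}$, using that $S$-relative degrees there are $(\{c_1(e)\}\cup S)$-relative degrees in $H$, so that the induction hypothesis applies with $|E(\nhyp{H}{c_1(e)})| = \deg(c_1(e))$. You instead work entirely within $H$: from the labeling conditions you extract, for each contributing apex $z$, the chain $\codeg(S_{i-1}(e)\cup\{z\}) \ge \codeg(S_i(e))$ for $1 \le i \le k-1$ together with the fact that $S_{k-1}(e)\cup\{z\}$ is an edge, then bound $\nsimp_e$ at each level by a one-step handshake count and multiply so the codegrees telescope, giving $\nsimp_e^k \le k!\,m$. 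The one delicate point you flagged---that a simplex contributes to $\nsimp_e$ only if the labeling of $e\cup\{z\}$ reproduces $c_1(e),\ldots,c_k(e)$, forcing the stated inequalities (ties only help)---is exactly the content of the remark following \Cref{def:labeling}, and your per-level identity $\sum_{v\notin S_{i-1}(e)}\codeg(S_{i-1}(e)\cup\{v\}) = (k-i+1)\codeg(S_{i-1}(e))$ is a valid double count (note all $\codeg(S_i(e)) \ge 1$ since $e$ itself extends each $S_i(e)$, so the divisions are legitimate). What each approach buys: the paper's induction parallels and sets up the machinery reused in \Cref{lem:sum-codeg}, whereas your argument is self-contained, avoids neighborhood hypergraphs and the heavy/light split, and even sharpens the constant to $(k!)^{1/k} m^{1/k} \le k m^{1/k}$; it also does not need condition (b) of \Cref{def:rel-order}, only the first $k-1$ argmin comparisons plus the simplex property.
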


\begin{lemma} \label{lem:sum-codeg}
  \gdef\lemmaSumCodeg{%
  Let $k\geq 3$. If $H$ is a $k$-uniform hypergraph, then $\sum_{e \in
  E}\codeg(S_{k-1}(e)) = O\left(m^{1+1/k}\right)$.%
  }
  \lemmaSumCodeg
\end{lemma}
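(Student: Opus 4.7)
The plan is to induct on $k$. For the base case $k=2$, the set $S_1(e)$ is just the lower-degree endpoint of $e=\{u,v\}$, so the sum reduces to $\sum_{\{u,v\}\in E}\min\{\deg u,\deg v\}$, which is $O(m^{3/2})$ by the classical Chiba--Nagamochi estimate.

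For the inductive step, the first substep is to set up a recursive identity linking codegrees in $H$ to codegrees in a neighborhood hypergraph. By \Cref{def:rel-order}, once $v := c_1(e)$ is picked, the subsequent relative degrees $\reldeg{S}{u}$ with $v \in S$ satisfy $\reldeg{S}{u} = \codeg_{\nhyp{H}{v}}((S\setm\{v\})\cup\{u\})$, so the remaining iterates $c_2(e),\ldots,c_{k-1}(e)$ coincide with those of the $(e\setm\{v\})$-relative ordering inside $\nhyp{H}{v}$. This yields $S_{k-1}(e) = \{v\}\cup S_{k-2}^{\nhyp{H}{v}}(e\setm\{v\})$ and hence
\[
  \codeg_H(S_{k-1}(e)) \;=\; \codeg_{\nhyp{H}{v}}\bigl(S_{k-2}^{\nhyp{H}{v}}(e\setm\{v\})\bigr).
\]
Regrouping the outer sum by the value of $c_1(e)$, I would then split $V$ at the threshold $D := m^{(k-1)/k}$ into \emph{heavy} vertices ($\deg>D$) and \emph{light} vertices ($\deg\le D$); by \Cref{lem:handshake}, only $O(m^{1/k})$ vertices are heavy.

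For light $v$, I would relax the restriction $c_1(e)=v$ to summing over all $f \in E(\nhyp{H}{v})$ and invoke the inductive hypothesis on $\nhyp{H}{v}$ (which has $\deg(v) \le D$ edges), giving per-vertex contribution at most $C_{k-1}\deg(v)^{k/(k-1)} \le C_{k-1} D^{1/(k-1)}\deg(v)$; summing over light $v$ via the handshake lemma controls the light piece by $C_{k-1}\cdot m^{1/k}\cdot km = O(m^{1+1/k})$.

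The heavy case is the main obstacle: the same relaxation gives $\sum_{v\text{ heavy}}\deg(v)^{k/(k-1)}$, which a single extreme-degree vertex can push well past $m^{1+1/k}$ (e.g., a star-like hypergraph). The remedy is to exploit that $c_1(e)=v$ with $v$ heavy forces every vertex of $e$ to be heavy, so $e$ lies inside the heavy sub-hypergraph $H^+$ on $O(m^{1/k})$ vertices. Consequently, the relevant $f = e\setm\{v\}$ embed into the restricted neighborhood hypergraph $\Gamma_v := \nhyp{H}{v}[V(H^+)]$, which has only $\binom{|V(H^+)|}{k-1} = O(m^{(k-1)/k})$ edges. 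Applying the inductive hypothesis to $\Gamma_v$ then yields an $O(|E(\Gamma_v)|^{k/(k-1)}) = O(m)$ bound per heavy vertex, and summing over $O(m^{1/k})$ heavy vertices gives the claimed $O(m^{1+1/k})$. The step I expect to demand the most care is making this last move rigorous: $\codeg_{\nhyp{H}{v}}\ge\codeg_{\Gamma_v}$ because of $H$-edges extending a heavy $(k-1)$-subset by a light vertex, and the $S_{k-2}$ chosen inside $\nhyp{H}{v}$ may differ from the one chosen inside $\Gamma_v$ since the two ambient hypergraphs induce different degree rankings. I would handle this by decomposing each inner codegree into a $\Gamma_v$-part (bounded by the IH on $\Gamma_v$) plus a ``light extension'' part, and double-count the latter by grouping around each light vertex and invoking the handshake lemma to keep its aggregate within $O(m^{1+1/k})$.
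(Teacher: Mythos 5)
Your base case and your treatment of the light case coincide with the paper's proof: the recursive identity $\codeg_H(S_{k-1}(e)) = \codeg_{\nhyp{H}{v}}\bigl(S_{k-2}^{\nhyp{H}{v}}(e\setm\{v\})\bigr)$ for $v=c_1(e)$, the relaxation from $\{e: c_1(e)=v\}$ to all edges of $\nhyp{H}{v}$, and the $\deg(v)^{1/(k-1)}\le m^{1/k}$ summation are exactly the moves made there (incidentally, the $k=2$ bound is due to Chiba--Nishizeki, not Chiba--Nagamochi). The gap is in your heavy case, precisely at the step you flagged: you propose to bound the ``$\Gamma_v$-part'' by the induction hypothesis applied to $\Gamma_v := \nhyp{H}{v}[V(H^+)]$. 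But the IH controls $\sum_{f\in E(\Gamma_v)} \codeg_{\Gamma_v}\bigl(S_{k-2}^{\Gamma_v}(f)\bigr)$, where the $(k-2)$-set attached to each edge is produced by $\Gamma_v$'s \emph{own} greedy ordering, whereas the quantity you must bound is $\sum_{f} \codeg_{\Gamma_v}\bigl(S_{k-2}^{\nhyp{H}{v}}(f)\bigr)$, with the set selected by the ordering of the larger ambient hypergraph. Your decomposition into a $\Gamma_v$-part plus a light-extension part repairs the mismatch in \emph{which extensions are counted}, but not the mismatch in \emph{which subset is selected}, and that mismatch is not cosmetic: the lemma fails if one is allowed to attach an arbitrary $(k-2)$-subset to each edge (attach one popular subset $P$ to all $d$ edges through it and the sum is $d^2$, which can far exceed $m'^{1+1/(k-1)}$), so the minimizing property of the ordering actually used is essential. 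Since the $\nhyp{H}{v}$-ordering minimizes codegrees in $\nhyp{H}{v}$, not in $\Gamma_v$, the IH as invoked does not apply.

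The step is repairable, and in fact no recursion is needed in the heavy case. Every extension counted by $\codeg_{\Gamma_v}(\cdot)$ is a heavy vertex, and you have already established that there are only $O(m^{1/k})$ heavy vertices; hence each term of the $\Gamma_v$-part is $O(m^{1/k})$, and with at most $m$ edges having heavy $c_1(e)$ that part is $O(m^{1+1/k})$ outright. Your light-extension bookkeeping also needs one ingredient beyond ``group by light vertex and handshake'': a fixed edge through a light vertex $w$ can be reused by up to $|V(H^+)|$ different completions $f$ (one per heavy leftover vertex), so the per-$w$ count is $O(\deg(w)\, m^{1/k})$, and only then does the handshake lemma give $O(m^{1+1/k})$ in aggregate. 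These repairs essentially reconstruct the paper's heavy-case argument, which avoids neighborhood hypergraphs altogether: since the leftover vertex $e\setm S_{k-1}(e)$ has degree at least $\deg(c_1(e))\ge m^{(k-1)/k}$, each $(k-1)$-set $P$ occurs as $S_{k-1}(e)$ for at most $O(m^{1/k})$ edges $e$ with heavy $c_1(e)$, and $\sum_P \codeg(P)\le km$ by the generalized handshake lemma, so the heavy contribution is $O(m^{1+1/k})$ in two lines.
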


\begin{algorithm*}[!htbp]
\begin{algorithmic}[1]
\Procedure{$k$-Simplex-Count-Abundant}{$\sigma :$ stream of edges of $k$-graph $H = (V,E)$}

\State \textbf{pass 1:} \Comment{$O(1)$ space}
\Indent
	\State pick edge $e=\{u_1,u_2,\ldots,u_k\} \in E$ u.a.r. using reservoir sampling
\EndIndent
\State \textbf{pass 2:} \Comment{$O(2^k) = O(1)$ space}
\Indent
	\State compute $\codeg(S)$ for all non-trivial $S \subset e$
\EndIndent
\State \textbf{pass 3:} \Comment{$O(R)$ space}
\Indent
	\State re-arrange (if needed) $u_1,\ldots,u_k$ so that 
        $u_i = c_i(e)$ for $i \in [k]$, using co-degrees from pass 2
	\State $R \gets \left\lceil \codeg(S_{k-1}(e))\cdot m^{-1/k}\right\rceil$
	        \Comment{note that $\codeg(S_{k-1}(e)) = |\Nhd(S_{k-1}(e))|$}
	\For{$j \gets 1$ to $R$, in parallel,}
		\State $Z_j \gets 0$
		\State pick vertex $x_j \in \Nhd(S_{k-1}(e))$ u.a.r. using reservoir sampling
		from relevant substream of $\sigma$
	\EndFor
\EndIndent
\State \textbf{pass 4:} \Comment{$O(kR) = O(R)$ space}
\Indent
	\State compute the relative degrees $\reldeg{S_{i-1}(e)}{x_j}$ for all $i\in [k-1]$ and $j\in [R]$
	\For {$j \gets 1$ to $R$, in parallel,}
    \If {$e \cup \{x_j\}$ determines a simplex with label $(e,x_j)$} 
    	\State $Z_j \gets \codeg(S_{k-1}(e))$
    \EndIf
	\EndFor
\EndIndent
\State \Return $Y \gets (m/R) \sum_{j=1}^R Z_j$ \Comment{Average out the trials and scale}
\EndProcedure
\end{algorithmic}
\caption{Counting $k$-simplices in $k$-uniform hypergraph streams: the ``abundant'' case}
\label{alg:abundant}
\end{algorithm*}

\subsection{Analysis of the Algorithm} \label{sec:abundant-anal}

We begin by proving that the estimator $Y$ computed by \Cref{alg:abundant} is
unbiased. Let $\cE_e$ denote the event that the edge sampled in pass~1 is $e$.
For each $j \in [R]$, the vertex $x_j$ is picked uniformly at random from
$\Nhd(S_{k-1}(e))$ and exactly $\nsimp_e$ choices cause $Z_j$ to be set to the
nonzero value $\codeg(S_{k-1}(e))$. Therefore,
\begin{align}
  \EE[Z_j\mid \cE_e] 
  = \frac{\nsimp_e}{|\Nhd(S_{k-1}(e))|} \cdot \codeg(S_{k-1}(e)) 
  = \nsimp_e
\end{align}
because $\codeg(S_{k-1}(e)) = |N(S_{k-1}(e))|$, by \Cref{lem:deg-nhd}. By 
the law of total expectation and \cref{eq:sum-by-base},
\begin{align}
  \EE[Y] 
  = \frac{m}{R}\sum_{j=1}^R\sum_{e \in E}
    \frac{1}{m}\EE[Z_j \mid \cE_e] 
  = \frac{1}{R}\sum_{j=1}^R \sum_{e \in E} \nsimp_e 
  = T_k(H) \,.
\end{align}

Next, we bound the variance of the estimator. Proceeding
along similar lines as above, we have
\begin{align}
  \EE[Z_j^2\mid \cE_e] = \codeg(S_{k-1}(e))\cdot \nsimp_e 
  \quad\text{and}\quad 
  \EE[Z_{j_1}Z_{j_2} \mid \cE_e] = \nsimp_e^2 
\end{align}
for all $j_1 \ne j_2$, because $Z_{j_1}$ and $Z_{j_2}$ are independent
conditioned on $\cE_e$. So,
\begin{align*}
  \EE[Y^2\mid \cE_e] 
  &= \EE\left[\left( \frac{m}{R}\sum_{j=1}^R Z_j \right)^2 
    ~\bigg|~ \cE_e\right] \\ 
  &= \frac{m^2}{R^2}\sum_{j=1}^R\EE[Z_j^2 \mid \cE_e] + 
    \frac{m^2}{R^2}\sum_{j_1\neq j_2} \EE[Z_{j_1}Z_{j_2} \mid \cE_e] \\ 
  &\leq \frac{m^2 \cdot \codeg(S_{k-1}(e)) \cdot \nsimp_e}{R} +
    \frac{m^2 R(R-1) \nsimp_e^2}{R^2} \\
  &\leq m^{2+1/k}\nsimp_e + m^2\nsimp_e^2 \,,
\end{align*}
since $\codeg(S_{k-1}(e)) \leq m^{1/k} R$ by the definition of $R$.
Therefore,
\begin{align}
  \Var[Y] \leq \EE[Y^2]
  &= \sum_{e \in E} \frac{1}{m} \EE[Y^2\mid \cE_e] \notag\\
  &\le m^{1+1/k} \sum_{e \in E} \nsimp_e+ m\sum_{e \in E} \nsimp_e^2 \notag\\
  &\le m^{1+1/k}T_k(H) + m\sum_{e \in E} \nsimp_e^2 \,.
  \label{eq:abundant-var-bound}
\end{align}
To bound the final expression in \cref{eq:abundant-var-bound}, we invoke
\Cref{lem:simplices-on-base} and \cref{eq:sum-by-base} to derive
\begin{align*}
  \Var[Y] 
  &\le m^{1+1/k}T_k(H) + m \cdot (\max_{e \in E} \nsimp_e) \cdot
    \sum_{e \in E} \nsimp_e \\
  &= O\left(m^{1+1/k} T_k(H)\right) \,.
\end{align*}

Having bounded the variance, we may apply the median-of-means technique
(\Cref{lem:median-of-means}) to the basic estimator of \Cref{alg:abundant} to
obtain a final algorithm that provides an $(\eps,\delta)$-estimate. As noted
in the comments within \Cref{alg:abundant}, the basic estimator uses $O(R)$
space, where $R$ is a random variable. Therefore, the final algorithm, which
still uses four passes, has expected space complexity $\tO(\EE[R] \cdot
\Var[Y]/\EE[Y]^2) = \EE[R] \cdot \tO(m^{1+1/k}/T_k(H)) = \EE[R] \cdot
\tO(m^{1+1/k}/T)$, thanks to the promise that $T_k(H) \ge T$.

Finally, we need to bound $R$ in expectation. We find that
\begin{align*}
  \EE[R] = \frac{1}{m} \sum_{e \in E} \EE[R \mid \cE_e]
  &= \frac{1}{m} \sum_{e \in E} 
    \left\lceil \codeg(S_{k-1}(e))\cdot m^{-1/k}\right\rceil \\
  &\le 1 + m^{-1-1/k} \sum_{e \in E} \codeg(S_{k-1}(e)) \,.
\end{align*}
Invoking \Cref{lem:sum-codeg}, we conclude that
\begin{align}
  \EE[R] = O(1) \,, \label{eq:r-bound}
\end{align}
which proves our first main algorithmic result.

\begin{theorem} \label{thm:ub-abundant}
  There is a $4$-pass algorithm that reads a stream of $m$ hyperedges
  specifying a $k$-uniform hypergraph $H$ and produces an
  $(\eps,\delta)$-estimate of $T_k(H)$; under the promise that $T_k(H) \ge T$,
  the algorithm runs in expected space $O(\eps^{-2}\log\delta^{-1}\log n\cdot
  m^{1+1/k}/T)$. \qed
\end{theorem}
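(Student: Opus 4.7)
The plan is to assemble the theorem from four steps: (i)~the basic estimator $Y$ produced by \Cref{alg:abundant} is unbiased, (ii)~its variance is $O(m^{1+1/k} T_k(H))$, (iii)~the median-of-means technique (\Cref{lem:median-of-means}) boosts $Y$ to an $(\eps,\delta)$-estimate by averaging enough independent copies, and (iv)~the expected per-copy space is $O(\log n)$ bits, so the total expected bit-complexity has the claimed form.

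For unbiasedness I would condition on the event $\cE_e$ that pass~1 samples $e$. Given $\cE_e$, each $x_j$ is uniform in $\Nhd(S_{k-1}(e))$, and the definition of simplex labels (\Cref{def:labeling}) guarantees that exactly $\nsimp_e$ of those choices complete a simplex with label $(e,x_j)$. Hence $\EE[Z_j \mid \cE_e] = \nsimp_e$ after using $|\Nhd(S_{k-1}(e))| = \codeg(S_{k-1}(e))$, the tight form of \Cref{lem:deg-nhd} (which applies since $S_{k-1}(e)$ has size $k-1$, so each extending edge contributes exactly one new neighbor). Summing over $e$ and invoking \eqref{eq:sum-by-base} gives $\EE[Y] = T_k(H)$.

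The main obstacle is the variance bound. Conditional on $\cE_e$, the $Z_j$ are i.i.d., so $\EE[Y^2 \mid \cE_e]$ splits into a diagonal piece with $\EE[Z_j^2 \mid \cE_e] = \codeg(S_{k-1}(e)) \cdot \nsimp_e$ and an off-diagonal piece bounded by $m^2 \nsimp_e^2$. The choice $R = \lceil \codeg(S_{k-1}(e)) / m^{1/k} \rceil$ was made precisely to cancel the codegree factor in the diagonal piece, leaving a clean $m^{1+1/k} T_k(H)$ after outer expectation and \eqref{eq:sum-by-base}. For the off-diagonal piece $m \sum_e \nsimp_e^2$, I would factor out $\max_e \nsimp_e$ and apply the key combinatorial input \Cref{lem:simplices-on-base}, which caps this maximum by $k m^{1/k}$; the remaining sum collapses again by \eqref{eq:sum-by-base}. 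Both reductions crucially rely on the $\prec_e$-ordering of \Cref{def:rel-order} to guarantee uniqueness of labels; without uniqueness the counts $\nsimp_e$ would overcount and \eqref{eq:sum-by-base} would fail.

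Having established $\Var[Y] = O(m^{1+1/k} T_k(H))$ alongside $\EE[Y] = T_k(H) \ge T$, \Cref{lem:median-of-means} calls for $K = O(\eps^{-2} \log \delta^{-1} \cdot m^{1+1/k}/T)$ parallel copies. Each copy stores $O(R)$ sampled vertex IDs, so $O(R \log n)$ bits; it then remains to bound $\EE[R]$. Taking expectation with respect to the uniformly chosen $e$ and applying \Cref{lem:sum-codeg} gives $\EE[R] \le 1 + m^{-1-1/k} \cdot O(m^{1+1/k}) = O(1)$, matching \eqref{eq:r-bound}. Multiplying the $O(\log n)$ bits per copy by the $K$ copies yields the stated $O(\eps^{-2} \log \delta^{-1} \log n \cdot m^{1+1/k}/T)$ expected space bound, completing the proof.
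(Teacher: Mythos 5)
Your proposal is correct and mirrors the paper's own proof: unbiasedness via conditioning on $\cE_e$ and the equality $\codeg(S_{k-1}(e)) = |\Nhd(S_{k-1}(e))|$, the variance bound $O(m^{1+1/k}T_k(H))$ using \Cref{lem:simplices-on-base} with \eqref{eq:sum-by-base}, median-of-means boosting, and the $\EE[R] = O(1)$ bound from \Cref{lem:sum-codeg}. No gaps to report.
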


We conclude with two remarks about this algorithm. First, the space bound can
easily be made worst-case, by applying the following modification to the
median-of-means boosting process. Notice that in each instance of the basic
estimator, at the end of pass~2, we know the value of $R$ and thus the actual
space that this instance would use.  For each batch of estimators for which we
compute a mean, we determine whether this batch would use more than $A$ times
its expected space bound, for some large constant $A$. If so, we abort this
batch and don't use it in the median computation. By Markov's inequality, the
probability that we abort a batch is at most $1/A$, which does not affect the
rest of the standard analysis of the quality of the final median-of-means
estimator.

Second, while we have not dwelt on {\em time} complexity in our exposition, it
is not hard to convince oneself that the algorithm processes each incoming
edge in $\tO(R)$ time, which is $\tO(1)$ in expectation.

\subsection{Proofs of Combinatorial Lemmas} \label{sec:abundant-proofs}

We now prove the two combinatorial lemmas that were used in the above
analysis.

\begin{reminder}[\Cref{lem:simplices-on-base}]
  \lemmaSimplicesOnBase
\end{reminder}
\begin{proof}
  We use induction on $k$. 
  For $k=2$, let $e = \{c_1(e), c_2(e)\}$. Define
  \[
    \Nhd^+(e) = \{z \in V \setm e:\, z \text{ is a neighbor of } c_1(e)
    \text{ and } \deg(z) \ge \deg(c_2(e))\} \,.
  \]
  Then, thanks to \Cref{def:labeling}, $\nsimp_e \le |\Nhd^+(e)|$. Now,
  starting with the handshake lemma, we obtain
  \[
    2m \ge \sum_{z \in \Nhd^+(e)} \deg(z)
    \ge |\Nhd^+(e)| \cdot \deg(c_2(e)) 
    \ge |\Nhd^+(e)| \cdot \deg(c_1(e)) \,,
  \]
  where the final step holds by the definition of $e$-relative ordering
  (\Cref{def:rel-order}). Since $\Nhd^+(e) \subseteq \Nhd(c_1(e))$, we obtain
  $2m \ge |\Nhd^+(e)|^2$. Finally, $\nsimp_e \le |\Nhd^+(e)| \le \sqrt{2m} \le
  2\sqrt{m}$.

  Now let $k \geq 3$. We consider two cases.

  Suppose that $\deg(c_1(e)) \geq m^{(k-1)/k}$. Appealing again to
  \Cref{def:labeling}, in any simplex labeled $(e,z)$, we must have $\deg(z)
  \geq \deg(c_1(e)) \ge m^{(k-1)/k}$. But there are at most $km^{1/k}$
  vertices with degree this high, due to the generalized handshake lemma
  (\Cref{lem:handshake}). It follows that $\nsimp_e \leq km^{1/k}$.

  Suppose, instead, that $\deg(c_1(e)) < m^{(k-1)/k}$. If $X = e \cup \{z\}$
  determines a $k$-simplex labeled with $(e,z)$, then in the neighborhood
  hypergraph $H' := \nhyp{H}{c_1(e)}$, the set $X' = X\setm \{c_1(e)\}$ must
  determine a $(k-1)$-simplex.  Further, $X'$ must be labeled with
  $(e\setm\{c_1(e)\},z)$ in $H'$ because, for all $S$, $S$-relative degrees in
  $\nhyp{H}{c_1(e)}$ are $(\{c_1(e)\} \cup S)$-relative degrees in $H$. It
  follows that $\nsimp_e \leq \nsimp_{e\setm \{c_1(e)\}}^{H'}$.

  Using the induction hypothesis on the $(k-1)$-graph $H'$, we
  have $\nsimp_{e\setm\{c_1(e)\}}^{H'} \leq (k-1)|E(H')|^{1/(k-1)}$. But
  $|E(H')| = \deg(c_1(e))$, yielding
  \[
    \nsimp_e \leq (k-1) \cdot \deg(c_1(e))^{1/(k-1)} < km^{1/k} \,. \qedhere
  \]
\end{proof}

\begin{reminder}[\Cref{lem:sum-codeg}]
  \lemmaSumCodeg
\end{reminder}
\begin{proof}
  We again use induction on $k$. 
  For $k=2$, each set $S_{k-1}(e)$ is a singleton, consisting of the
  smaller-degree endpoint of $e$. The result now follows by combining
  Lemmas~1(a) and~2 in \cite{ChibaN85}: specifically, we obtain
  $\sum_{\{u,v\} \in E} \min\{\deg(u), \deg(v)\} \le 2m^{3/2}$.

  For $k \geq 3$, consider partitioning the vertex set into ``light'' and
  ``heavy'' vertices as follows:
  \begin{align*}
    \light &:= \{u \in V:\, \deg(u) < m^{(k-1)/k}\} \,; \\
    \heavy &:= \{u \in V:\, \deg(u) \geq m^{(k-1)/k}\} \,.
  \end{align*}

  Let $M_u := \{e \in E\mid u = c_1(e)\}$ be the set of edges in which
  vertex $u$ has the minimum degree. Then $E = \bigcup_{u \in V} M_u$, whence
  \begin{align}
    \sum_{e\in E}\codeg(S_{k-1}(e)) 
    &= \sum_{u \in \light} \sum \limits_{e \in M_u} \codeg(S_{k-1}(e)) +
      \sum_{u \in \heavy} \sum_{e \in M_u} \codeg(S_{k-1}(e)) \,.
      \label{eq:sum-codeg}
  \end{align}

  We first take care of the first term above: the sum over $u \in \light$.
  Each edges $e \in M_u$ corresponds to an edge $e' = e \setm \{u\}$
  in the $(k-1)$-uniform hypergraph $\nhyp{H}{u}$. Edge $e'$ has $k-1$
  vertices and we have that $S_{k-2}(e') \cup \{u\} = S_{k-1}(e)$. Therefore,
  $\codeg_H(S_{k-1}(e)) = \codeg_{\nhyp{H}{u}}(S_{k-2}(e'))$ and we have
  \begin{align*}
    \sum_{u \in \light} \sum_{e \in M_u} \codeg(S_{k-1}(e)) 
    \leq \sum_{u \in \light} \sum_{e' \in E(\nhyp{H}{u})} 
      \codeg_{\nhyp{H}{u}}(S_{k-2}(e')) \,.
  \end{align*}
  Applying the induction hypothesis, we obtain
  \begin{align*}
    \sum_{e' \in E(\nhyp{H}{u})} \codeg_{\nhyp{H}{u}}(S_{k-2}(e')) 
    &= O\left(|E(\nhyp{H}{u})|^{k/(k-1)}\right) \\
    &= O(\deg(u)^{k/(k-1)}) \,,
  \end{align*}
  implying
  \begin{align}
    \sum_{u \in \light} \sum_{e' \in E(\nhyp{H}{u})} \codeg_{\nhyp{H}{u}}(S_{k-2}(e')) 
    &= O(1) \cdot \sum_{u \in \light}\deg(u)^{k/(k-1)} \notag \\
    &= O(1) \cdot \max_{u \in \light} \left( \deg(u)^{1/(k-1)} \right) \cdot
      \sum_{u \in \light}\deg(u) \notag \\
    &= O(1) \cdot m^{1/k} \cdot km 
    = \textstyle O\left(m^{1+1/k}\right) \,.
      \label{eq:sum-codeg-light}
  \end{align}
 
  Next, we want to bound the second term in \cref{eq:sum-codeg}: the sum over
  $u \in \heavy$.

  Let $M$ be the multiset whose elements are the $(k-1)$-sized subsets of $V$
  whose codegrees appear in the sum over $u \in \heavy$. Let $M'$ be the
  support of $M$. If $P = \{u_1,\ldots,u_{k-1}\}$ is one subset in $M$, let
  $f_P$ be its multiplicity: the number of edges $e$ which are such that
  $c_1(e) \in \heavy$ and $S_{k-1}(e) = P$. We write
  \begin{align*}
    M = \bigcup_{P\in M'}P^{(f_P)} \,,
  \end{align*}
  a multiset union. 
  We claim that $f_P = O\left(m^{1/k}\right)$ for all $P \in M$. For some
  hyperedge we have that $P = S_{k-1}(e)$. Let $z = e \setm P$ be the
  vertex of $e$ not in $P$. Then, by definition, $\deg(z) \geq
  \deg(c_1(e)) \geq m^{(k-1)/k}$. So we have
  \[
    f_P \leq \left|\left\{z \in V:\, \deg(z) 
    \geq m^{(k-1)/k)}\right\}\right| 
    = O(m^{1/k})
  \]
  by the handshake lemma. Using this bound on $f_P$ and the handshake lemma
  again, we obtain
  \begin{align}
    \sum_{u\in \heavy}\sum_{e \in {M}_u} \codeg(S_{k-1}(e)) 
    &= \sum_{P \in M} \codeg(P) \notag\\
    &= \sum_{P \in M'} \codeg(P)\cdot f_P \notag\\ 
    &= O(m^{1/k}) \cdot \sum_{P \in M'} \codeg(P) \notag\\
    &\leq O(m^{1/k})\cdot km = O(m^{1+1/k}) \,. 
      \label{eq:sum-codeg-heavy}
  \end{align}

  The proof is completed by combining
  \cref{eq:sum-codeg,eq:sum-codeg-light,eq:sum-codeg-heavy}.
\end{proof}

\section{Algorithms Based on Oblivious Sampling} \label{sec:ub-meager}

We now turn to our second family of algorithms, which use oblivious sampling,
as outlined in \Cref{sec:tech-meager}. To recap, whereas the algorithms in the
previous section used an initially-sampled edge to guide their subsequent
sampling, an oblivious sampling strategy decides whether or not to store an
edge based only coin tosses specific to that edge and perhaps its constituent
vertices (and not on any edges in store).  The main result of this section is
an $\tO(m/T^{2/(k+1)})$-space algorithm, given as \Cref{alg:meager-best},
which is the method of choice for the ``meager'' case, when $T$ is not too
large (specifically, $T < m^{(k+1)/(k^2-k)}$). Along the way, we give two
other algorithms that have worse space guarantees but help build up the ideas
towards \Cref{alg:meager-best}. In the sequel, we give a $1$-pass algorithm
whose space guarantee depends on additional structural parameters of the input
hypergraph $H$. 

\subsection{A Unifying Framework for Oblivious Sampling Strategies}
\label{sec:obliv-framework}

\Cref{sec:tech-meager} outlined a framework for analyzing algorithms based on
oblivious sampling.  This framework streamlines the presentations of the
algorithms below and is also valuable because it unifies a number of
approaches seen in previous work on triangle counting \cite{McGregorVV16,
JayaramK21,KallaugherP17,BravermanOV13}.  We now spell out the details.

Suppose that we are trying to estimate the number, $T(H)$, of copies of a
target substructure (motif) $\Xi$ in a streamed hypergraph $H$. Suppose we do
this by collecting certain edges into a random sample $Q$, using some random
process parameterized by a quantity $p$. The sample $Q$ determines which of
the $T(H)$ copies of $\Xi$ get ``detected'' by the rest of the logic of the
algorithm. Let $\Xi_1,\ldots,\Xi_{T(H)}$ be the copies of $\Xi$ in $H$ (under
some arbitrary enumeration scheme) and let $\Lambda_i$ be the indicator random
variable for the event that $\Xi_i$ is detected. 

In our framework, we will want to identify parameters $\alpha, \beta$, and
$\gamma$ such that the following conditions hold for all $i, j \in [T(H)]$
with $i \ne j$ and all $e \in E$: 
\begin{align}
  \Pr[\Lambda_i = 1] &= p^\alpha \,, \label{eq:alpha-def} \\
  \Pr[\Lambda_i = \Lambda_j = 1] &\le p^\beta \,, \label{eq:beta-def} \\
  \Pr[e \in Q] &= p^\gamma \,, \label{eq:gamma-def}
\end{align}
In instantiations of the framework, we will of course have $\beta > \alpha$
and further, we will have $\beta < 2\alpha$, which is natural because for
certain pairs $i,j$, the variables $\Lambda_i$ and $\Lambda_j$ will be
positively correlated. We shall further require that our detection process
satisfy the {\em separation property}, where $\Xi \cap \Xi'$ denotes the set
of {\em edges} common to $\Xi$ and $\Xi'$ and the notation ``$X \perp Y$''
means that random variables $X$ and $Y$ are independent.

\begin{align}
  \textsc{Separation Property:~} \text{If } \Xi_i \cap \Xi_j = \varnothing, 
  \text{ then } \Lambda_i \perp \Lambda_j \,. \label{eq:sep}
\end{align}
The logic of the algorithm will count the number of detected copies of $\Xi$
in an accumulator $A := \sum_{i=1}^{T(H)} \Lambda_i$. Defining the estimator
$\tT := A/p^\alpha$, we see that
\begin{align}
  \EE\left[ \frac{A}{p^\alpha} \right] 
  = \sum_{i=1}^{T(H)} \frac{1}{p^\alpha} \EE[\Lambda_i] 
  = T(H) \,,
\end{align}
which makes it unbiased. To establish concentration, we estimate the variance
of $A$ as follows.
\begin{align}
  \Var[A] 
  &= \Var\left[\sum_{i=1}^{T(H)} \Lambda_i\right] \notag\\
  &= \sum_{i=1}^{T(H)} \Var[\Lambda_i] + \sum_{i \ne j}\Cov[\Lambda_i,\Lambda_j] \notag\\
  &\le \sum_{i=1}^{T(H)} \EE[\Lambda_i^2] + \sum_{i\ne j}\Cov[\Lambda_i,\Lambda_j] \notag\\
  &\le T(H)p^\alpha + \sum_{\Lambda_i \nperp \Lambda_j}\EE[\Lambda_i\Lambda_j] \notag\\
  &\le T(H) p^\alpha + \sum_{\Lambda_i \nperp \Lambda_j} p^\beta \label{eq:detect-two} \\
  &= T(H) p^\alpha + p^\beta \cdot \underbrace{\sum_{e \in E}
    \sum_{\Xi_i \cap \Xi_j = \{e\}} 1}_{\scor} \label{eq:var-bound} \,,
\end{align}
where \eqref{eq:detect-two} follows from \eqref{eq:beta-def} and
\eqref{eq:var-bound} holds because of the separation property and the
reasonable structural assumption (true of simplices) that two distinct copies 
of the motif $\Xi$ can intersect in at most one edge. 

The term $\scor$ captures the amount of correlation in the sampling process.
Let us now specialize the discussion to the problem at hand, where the motif
$\Xi$ is a $k$-simplex and so, $T(H) = T_k(H)$. Let $\Nsimp_e$ be the number
of simplices that share hyperedge $e$ (note that this is subtly different from
$\nsimp_e$ in \Cref{sec:ub-abundant}). Recalling the definition of $\Delta_E$
in \cref{eq:delta_e}, we see that $\max_{e\in E} \Nsimp_e = \Delta_E$.  We can
therefore upper bound $\scor$ as follows:
\begin{align}
  \scor
    \le \sum_{e \in E} \binom{\Nsimp_e}{2} 
    \le \sum_{e \in E} \frac{\Nsimp_e^2}{2}
    \le \frac{\Delta_E}{2} \sum_{e \in E} \Nsimp_e 
    = \frac{(k+1)\Delta_E T_k(H)}{2} \,.
    \label{eq:s_cor_bound}
\end{align}
Plugging this bound into \eqref{eq:var-bound} and applying Chebyshev's
inequality gives the following ``error probability'' bound, where $B > 0$ will
be chosen later.
\begin{align}
  \Perr^{(B)}\big[\, \tT \,\big] 
  &:= \Pr\left[ |\tT-T(H)| \geq \eps B \right] \notag\\
  &\le \frac{\Var[A]}{p^{2\alpha} \eps^2 B^2} \notag\\
  &\le \frac{T_k(H)}{\eps^2 p^\alpha B^2} + 
    \frac{(k+1)\Delta_E T_k(H)}{2\eps^2 p^{2\alpha-\beta} B^2} \,.
    \label{eq:error_common_edge}
\end{align}
It may be helpful to imagine $B \approx T_k(H)$ and compare the above bound to
its informal version in \cref{eq:estim-deviates}.

\paragraph{Heavy/Light Edge Partitioning via an Oracle.}
Using the above estimator $\tT$ directly on the input hypergraph $H$ does not
give a good bound in general, because the quantity $\Delta_E$ appearing in
\cref{eq:error_common_edge} could be as large as $T_k(H)$, making the upper
bound on $\Perr$ useless. One remedy is to apply the technique so far to a
subgraph of $H$ where $\Delta_E$ is {\em guaranteed} to be a fractional power
of $T_k(H)$ and deal separately with simplices that don't get counted in such
a subgraph. Such a subgraph can be obtained by taking only the ``light'' edges
in $H$, where an edge $e$ is considered to be light if $\Nsimp_e =
O(T_k(H)^{\theta})$, for some parameter $\gamma < 1$ that we will soon choose.
We remark that this kind of heavy/light partitioning is a standard technique
in this area: it occurs in the triangle counting algorithms of 
\cite{CormodeJ14, McGregorVV16} and in fact in the much earlier triangle
listing algorithm of \cite{ChibaN85}.

More precisely, we create an {\em oracle} to label each edge as either \heavy
or \light. We then have
\begin{align}
  T_k(H) = \Tklight(H) + \Tkheavy(H) \,, \label{eq:heavy_light}
\end{align}
where $\Tklight(H)$ is the number of simplices containing only light edges,
and $\Tkheavy(H)$ is the number of simplices containing at least one heavy
edge. The key insight in the partitioning technique is that we can estimate
the two terms in \cref{eq:heavy_light} separately, using the estimator $\tT$
described above for $\Tklight(H)$ and a different estimator for $\Tkheavy(H)$
that we shall soon describe. In what follows, we allow an additive error of
$\pm \eps T_k(H)$ in the estimate of each term, leading to a multiplicative
error of $1\pm 2\eps$ in the estimation of $T_k(H)$.

The oracle is implemented by running the following one-pass randomized
streaming algorithm implementing a function $\oracle \colon E \to
\{\heavy,\light\}$ to label the edges of $H$.  Form a random subset $Z
\subseteq V$ by picking each vertex $v \in V$, independently, with probability
\begin{align}
  q := \xi\eps^{-2}\ln n \cdot T^{-\theta} \,,
  \label{eq:q-def}
\end{align}
where $T$ is the promised lower bound on $T_k(H)$ given to the algorithm, and
$\xi \geq 12k(k+1)$ is a constant.  Then, in one pass over the input stream,
collect all hyperedges $e$ containing at least one vertex from $Z$. Let $S$
denote the random sample of edges thus collected. Note that this is distinct
from (and independent of) the sample $Q$ collected by the oblivious sampling
scheme discussed above.  Now, for each hyperedge $e = \{u_1,\ldots,u_k\} \in
E$, let
\begin{align}
    \widetilde{\Nsimp}_e := |\{z \in Z\,:\,e\cup\{z\} \in \cT_k(H)\}|
\end{align}
be the number of simplices that $e$ completes with respect to $Z$. Then we define: 
\begin{align}
    \oracle(e) = \begin{cases}
        \heavy \,, & \text{if } \widetilde{\Nsimp}_e < qT^{\theta} \,, \\
        \light \,, & \text{otherwise.}
    \end{cases}
    \label{eq:oracle}
\end{align}

The next lemma says that the oracle's predictions correspond, with high
probability, to our intuitive definition of heavy/light edges, up to a factor
of $2$. 

\begin{lemma} \label{lem:oracle}
  With high probability, the oracle given by \eqref{eq:oracle} is ``correct,''
  meaning that it satisfies
  \begin{align}
    \forall\, e \in E:~
    &\oracle(e) = \light ~\Rightarrow~ \Nsimp_e < 2T^\theta ~\wedge~ \notag\\
    &\oracle(e) = \heavy ~\Rightarrow~ \Nsimp_e > \tfrac12 T^\theta \,.
    \label{eq:oracle_correct}
  \end{align}
\end{lemma}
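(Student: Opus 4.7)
The plan is to observe that $\widetilde{\Nsimp}_e$ is, for each fixed hyperedge $e$, a sum of $\Nsimp_e$ independent $\Bin(1,q)$ indicators: one for each vertex $z$ such that $e \cup \{z\} \in \cT_k(H)$, and these indicators are independent because the set $Z$ is formed by picking each vertex of $V$ independently with probability $q$. In particular, $\mu_e := \EE[\widetilde{\Nsimp}_e] = q\Nsimp_e$. The contrapositive of what we want in \eqref{eq:oracle_correct} reads: if $\Nsimp_e \ge 2T^\theta$ then $\oracle(e)=\heavy$, and if $\Nsimp_e \le \tfrac12 T^\theta$ then $\oracle(e)=\light$. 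The intermediate range $\Nsimp_e \in (\tfrac12 T^\theta,\,2T^\theta)$ is automatically consistent with both implications, so only these two extremal cases need to be argued.

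Each case is a one-sided multiplicative Chernoff bound. In the heavy case $\Nsimp_e \ge 2T^\theta$, we have $\mu_e \ge 2qT^\theta$, so the event $\oracle(e)=\light$ corresponds to $\widetilde{\Nsimp}_e < qT^\theta \le \tfrac12 \mu_e$; a standard lower-tail Chernoff gives probability at most $\exp(-\mu_e/8) \le \exp(-qT^\theta/4)$. In the light case $\Nsimp_e \le \tfrac12 T^\theta$, we have $\mu_e \le \tfrac12 qT^\theta$, so the event $\oracle(e)=\heavy$ corresponds to an upward deviation of $\widetilde{\Nsimp}_e$ to at least $qT^\theta \ge 2\mu_e$; the upper-tail Chernoff $\Pr[X \ge (1+\delta)\mu] \le \exp(-\delta\mu/3)$ applied with $(1+\delta)\mu_e = qT^\theta$ (so $\delta \ge 1$) gives probability at most $\exp(-(qT^\theta-\mu_e)/3) \le \exp(-qT^\theta/6)$.

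Plugging in $qT^\theta = \xi \eps^{-2}\ln n$ from \eqref{eq:q-def} and using $\xi \ge 12k(k+1)$ makes each of these individual failure probabilities at most $n^{-2k(k+1)\eps^{-2}}$, which is comfortably smaller than $n^{-(k+1)}$. A union bound over the $m \le \binom{n}{k} < n^k$ hyperedges then yields failure probability at most $n^{-1}$, establishing that \eqref{eq:oracle_correct} holds with high probability.

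The only mildly delicate point is the light case, where the expected value $\mu_e$ may be smaller than $1$ and therefore one cannot use the small-deviation ($\delta \le 1$) form of the Chernoff inequality; but the large-deviation form ($\delta \ge 1$) applies uniformly in $\Nsimp_e \le \tfrac12 T^\theta$ and in fact gives a bound that only improves as $\Nsimp_e$ shrinks, so this case causes no real obstacle. The main conceptual step, rather, is simply to note the three-way split $(\Nsimp_e < \tfrac12 T^\theta,\ \Nsimp_e \in [\tfrac12 T^\theta,2T^\theta],\ \Nsimp_e > 2T^\theta)$ and to recognize that the middle range is vacuously handled by \eqref{eq:oracle_correct}.
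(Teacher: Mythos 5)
Your proposal is correct and follows essentially the same route as the paper: view $\widetilde{\Nsimp}_e$ as $\Bin(\Nsimp_e,q)$, handle the two extremal cases $\Nsimp_e \ge 2T^\theta$ and $\Nsimp_e \le \tfrac12 T^\theta$ with lower- and upper-tail Chernoff bounds (the middle range being vacuous), and finish with a union bound over the at most $n^k$ hyperedges using $qT^\theta = \xi\eps^{-2}\ln n$. The only differences are in the particular constants and the exact forms of the Chernoff inequality used, which do not affect the conclusion.
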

\begin{proof}
  Pick a particular edge $e$. Observe that $\widetilde{\Nsimp}_e$ follows a
  binomial distribution: $\widetilde{\Nsimp}_e\sim\Bin(\Nsimp_e,q)$. In the
  case when $\Nsimp_e \geq 2T^{\theta}$, one form of the Chernoff bound gives
  \begin{align*}
      \Pr\big[\oracle(e) = \light\big] 
      &= \Pr\left[\widetilde{\Nsimp}_e < (1-\tfrac12) q\cdot 2T^{\theta}\right] \\
      &\le \exp\left(-2qT^{\theta}\cdot (\tfrac12)^2/2\right) 
      \le n^{-2k} \,,
  \end{align*}
  where, in the final step, we used $\xi \ge 8k$ and $\eps \le 1$.
  Analogously, in the case when $\Nsimp_e \le \frac12 T^\theta$, we can use
  another form of the Chernoff bound to get
  \begin{align*}
      \Pr\big[\oracle(e) = \heavy\big] 
      &= \Pr\left[\widetilde{\Nsimp}_e \ge qT^{\theta}\right] \\
      &\le \exp\left(-qT^\theta \ln(2(1-q))\right)
      \le n^{-2k} \,.
  \end{align*}
  Taking a union bound over the at most $n^k$ edges $e$ in the hypergraph, the
  probability that the oracle is incorrect, i.e., that
  condition~\eqref{eq:oracle_correct} does not hold, is at most $n^{-O(1)}$.
\end{proof}

Notice that the space required to implement the oracle is that required to
store the sample $S$. Since 
\begin{align}
  \EE[|S|] = O\left(\sum_{v \in V} q\deg(v)\right) = O(qm) \,,
  \label{eq:s-size}
\end{align}
the expected space requirement is $O(qm \log n) = O(\eps^{-2}\log^2 n \cdot
T^{-\theta} m) = \tO(m/T^\theta)$, by \cref{eq:q-def}.

\paragraph{Counting the Light and Heavy Simplices.}
We can now describe our overall algorithm. We use two streaming passes. In the
first pass, we collect the samples $S$ and $Q$. At the end of the pass, we
have the information necessary to prepare the heavy/light labeling oracle and
using it, we remove all heavy edges from $Q$. In the second pass, for each
edge $e$ encountered, we use the oracle to classify it. If $e$ is light, we
feed it into the detection logic for the oblivious sampling scheme (which uses
$Q$). If $e$ is heavy, we use $S$ to estimate $\Nsimp_e$, the number of
simplices containing $e$; this requires a little care, as we explain below.

In analyzing the algorithm, we start by assuming that the oracle is correct
(which happens w.h.p., by \Cref{lem:oracle}).  Let $\tT^L$ be the estimator
for $\Tklight(H)$ produced by the oblivious sampling scheme. We analyze its
error probability using \cref{eq:error_common_edge}, noting that $\Delta_E \le
2T^\theta$ in the light-edges subgraph and choosing $B = T_k(H)$. This leads
to 
\begin{align}
  \Perr[\tT^L]
  &:= \Pr\left[ |\tT^L - \Tklight(H)| \ge \eps T_k(H) \right] \notag\\
  &\le \frac{\Tklight(H)}{\eps^2 p^\alpha T_k(H)^2} + 
    \frac{(k+1)\cdot 2T^\theta\cdot \Tklight(H)}{2 \eps^2 p^{2\alpha-\beta} T_k(H)^2} \notag \\
  &\le \frac{1}{\eps^2 p^\alpha T} + 
    \frac{k+1}{\eps^2 p^{2\alpha-\beta} T^{1-\theta}} \,,
    \label{eq:light_error}
\end{align}
where we used the inequalities $\Tklight(H) \le T_k(H)$ and $T_k(H) \ge T$.
This bound on the error probability will inform our choices of parameters $p,
\alpha, \beta,$ and $\theta$, below.

Finally, we estimate $\Tkheavy(H)$. Define a simplex in $\cT_k(H)$ to be {\em
$i$-heavy} if it has exactly $i$ heavy edges. For each edge $e$ and $1 \le i
\le k+1$, let $\Nsimp_e^i$ be the number of $i$-heavy simplices containing
$e$.  Notice that
\begin{align}
  \sum_{e \text{ heavy}} \Nsimp_e^i 
  &= i \cdot |\{\Xi \in \cT_k(H):\, \Xi \text{ is $i$-heavy}\}| \,, 
  \text{\rlap{~~whence}} \notag\\
  \Tkheavy(H) 
  &= \sum_{e \text{ heavy}} \sum_{i=1}^{k+1} \frac{\Nsimp_e^i}{i} \,.
  \label{eq:heavy_sum}
\end{align}
During our second streaming pass, we compute unbiased estimators for each
$\Nsimp_e^i$ as follows.  Recall that $S$ is the set of all edges incident to
some vertex in the set $Z$ of sampled vertices. Define
$\widetilde{\Nsimp}_e^i$ to be the number of vertices in $Z$ that combine with
$e$ to form an $i$-heavy simplex; we consult the oracle to test whether a
simplex in question is indeed $i$-heavy. Notice that $\widetilde{\Nsimp}_e^i
\sim \Bin(\Nsimp_e^i,q)$. Therefore, by \cref{eq:heavy_sum},
\begin{align}
  \tT^H := \frac{1}{q}\sum_{e \text{ heavy}}\sum_{i=1}^{k+1}
    \frac{\widetilde{\Nsimp}_e^i}{i} \label{eq:heavy-est}
\end{align}
is an unbiased estimator for $\Tkheavy(H)$. To establish concentration, pick
any particular heavy edge $e$ and consider the inner sum corresponding to this
$e$. The sum can be rewritten as a sum of $\Nsimp_e$ mutually independent
indicator variables scaled by factors in $\{1, 1/2, \ldots, 1/(k+1)\}$.
Since the oracle is correct, the heaviness of $e$ implies that $\Nsimp_e \ge
\frac12 T^\theta$ and then a Chernoff bound gives
\begin{align*}
  \Pr\left[\sum_{i=1}^{k+1} \frac{\widetilde{\Nsimp}_e^i}{i} \notin 
    (1\pm\eps) q\sum_{i=1}^{k+1} \frac{\Nsimp_e^i}{i}\right] 
  &\le 2\exp\left(-\frac13 \eps^2 q \cdot 
    \frac12 T^\theta \cdot \frac{1}{k+1}\right) \\
  &\le n^{-2k} \,,
\end{align*}
where the final step uses $\xi \ge 12k(k+1)$.
Applying a union bound over the at most $n^k$ heavy edges gives
\begin{align}
  \Perr[\tT^H]
  &:= \Pr\left[ |\tT^H - \Tkheavy(H)| \ge \eps T_k(H) \right] \notag \\
  &\le \Pr\left[ \tT^H \notin (1\pm\eps) \Tkheavy(H) \right] \notag \\
  &\le \Pr\left[ \bigvee_{e \text{ heavy}} \left\{
    \sum_{i=1}^{k+1} \frac{\widetilde{\Nsimp}_e^i}{i} \notin 
      (1\pm\eps) q\sum_{i=1}^{k+1} \frac{\Nsimp_e^i}{i} \right\} \right]
  \le n^{-O(1)} \,.
  \label{eq:heavy_error}
\end{align}

\paragraph{Determining the Parameters and Establishing a Space Bound.}
We have now described the overall logic of a generic algorithm in our
framework.  To instantiate it, we must plug in a specific oblivious sampling
scheme, whose logic will determine the parameters $\alpha, \beta$, and
$\gamma$ as given by \cref{eq:alpha-def,eq:beta-def,eq:gamma-def}.  Now we
choose $p$ to ensure that $\Perr[\tT^L]$ is less than a small constant.
Thanks to \cref{eq:light_error}, it suffices to ensure that
\[
  \frac{1}{\eps^2 p^\alpha T} + \frac{k+1}{\eps^2 p^{2\alpha-\beta}
  T^{1-\theta}} = o(1) \,.
\]
Setting $p = (\log n / (\eps^2T))^{1/\alpha}$ reduces the first term to
$1/\log n$. To make the second term small as well, we set the threshold
parameter $\theta$ used by the heavy/light partitioning oracle to $\theta :=
\beta/\alpha - 1$; as we remarked after \cref{eq:gamma-def}, we will have
$\alpha < \beta < 2\alpha$, which implies $0 < \theta < 1$. The second term
now reduces to
\[
  \frac{(k+1)p^{\beta-\alpha}T^\theta}{\log n}
  = \frac{k+1}{(\log n)^{1-\theta} \eps^{2\theta}} = o(1) \,,
\]
for $n$ sufficiently large. We therefore obtain $\Perr[\tT^L] = o(1)$ and,
putting this together with \cref{eq:heavy_error} and \Cref{lem:oracle}, we
conclude that the overall algorithm computes a $(1\pm2\eps)$-approximation to
$T_k(H)$ with probability at least $1 - o(1)$.

With these parameters now set, the space complexity analysis is
straightforward. The space usage is dominating by the storing of the samples
$S$ and $Q$. By \cref{eq:s-size} and \cref{eq:q-def}, we have $\EE[|S|] =
\tO(m/T^\theta)$ and by \cref{eq:gamma-def}, we have
\[
  \EE[|Q|] 
  = p^\gamma m 
  = \left(\frac{\log n}{\eps^2}\right)^{\gamma/\alpha}
    \frac{m}{T^{\gamma/\alpha}} \,.
\]
It is natural that the probability of detecting a particular simplex is at
most the probability of storing one particular edge, so $\gamma \le \alpha$.
Therefore, $\EE[|Q|] \le \eps^{-2}\log n\cdot m/T^{\gamma/\alpha}$.
Altogether, considering that storing a single edge uses $O(\log n)$ bits, the
total space usage is $\tO(m/T^\lambda)$, where
\begin{align}
  \lambda 
  = \min\left\{ \frac{\gamma}{\alpha},\, \theta \right\}  
  = \min\left\{ \frac{\gamma}{\alpha},\, \frac{\beta}{\alpha} - 1 \right\} \,.
    \label{eq:lambda-def}
\end{align}

This completes the description of our framework for designing and analyzing
simplex counting algorithms based on oblivious sampling. Next, we instantiate
the framework and obtain concrete algorithms.

\subsection{Simplest Instantiation: Sampling Edges Independently}
\label{sec:ub-meager-simplest}

Instead of going straight to our best algorithm in this family, we shall build
up to it to drive home the point that our framework allows us to quickly
design and analyze an oblivious-sampling-based algorithm. 

For our first algorithm, as outlined in \Cref{sec:tech-meager}, we directly
generalize the $2$-pass, $\tO\left(m/\sqrt{T}\right)$-space algorithm of
\cite{McGregorVV16}. The oblivious sample $Q$ is obtained by choosing each
edge with probability $p$, independently. We then detect a light simplex when
one of its edges appears in the stream and the $k$ edges forming the remaining
hyperwedge are all in $Q$. Notice that each $k$-simplex in $H$ can be detected
in exactly $k+1$ ways, so the expectation of the resulting estimator is
exactly $(k+1)T_k(H)$. To fit this algorithm into our framework, we take
$\Xi$, the motif being counted, to be a hyperwedge arising from a $k$-simplex
in $H$. 

We must now determine $\alpha, \beta$, and $\gamma$ based on
\cref{eq:alpha-def,eq:beta-def,eq:gamma-def} and check that the separation
property \eqref{eq:sep} holds. The probability of detected any particular copy
of $\Xi$ is $p^k$, so $\alpha = k$. Two hyperwedges are correlated iff they
have an edge $e$ in common, so the separation property holds. Further, when
such correlation exists, the probability that they are simultaneously detected
is $p^{2k-1}$, so $\beta = 2k-1$. Since each edge is stored with probability
$p$, we have $\gamma = 1$. Therefore, \cref{eq:lambda-def} gives us $\lambda =
1/k$ and we obtain the following result.

\begin{theorem} \label{thm:ub-meager-worst}
  We can estimate $T_k(H)$ using a $2$-pass algorithm that runs in
  $\tO(m/T^{1/k})$ space. \qed
\end{theorem}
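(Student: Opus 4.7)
The plan is to apply the framework of Section~\ref{sec:obliv-framework} with the simplest possible oblivious sampling rule: each incoming hyperedge is independently placed into a sample set $Q$ with probability $p$, to be determined. The motif $\Xi$ being counted is a $k$-hyperwedge (as in \Cref{def:hyperwedge}), so the framework's $T(H)$ equals $(k+1)\,T_k(H)$; the final estimate for $T_k(H)$ will be obtained by dividing the framework's unbiased estimator by $k+1$. A hyperwedge $W$ is declared detected precisely when all $k$ of its edges lie in $Q$. Algorithmically this fits into two passes: the first pass collects $Q$ together with the oracle's vertex sample $S$ (as in \eqref{eq:q-def}), and the second pass classifies each arriving edge as \heavy or \light via the oracle, updates the accumulator $A$ for every light hyperwedge whose missing base is the current edge, and runs the Chernoff-based heavy estimator $\tT^H$ of \eqref{eq:heavy-est}.

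The next step is to identify the three framework parameters. Because the $k$ edges of a fixed hyperwedge are sampled by independent coin flips, $\Pr[\Lambda_i = 1] = p^k$, so $\alpha = k$; and since each edge is stored with probability exactly $p$, $\gamma = 1$. To verify the separation property~\eqref{eq:sep}, observe that if $\Xi_i \cap \Xi_j = \varnothing$ then the events ``$W_i$ detected'' and ``$W_j$ detected'' are determined by disjoint independent coin flips and are therefore independent. Whenever $\Xi_i \cap \Xi_j \ne \varnothing$, the two hyperwedges share at least one edge, and so the union of their edge sets has size at most $2k-1$; independence of the coin flips then gives $\Pr[\Lambda_i = \Lambda_j = 1] \le p^{2k-1}$, which justifies setting $\beta = 2k-1$ in~\eqref{eq:beta-def}.

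Plugging these values into the space-exponent formula $\lambda = \min\{\gamma/\alpha,\, \beta/\alpha - 1\}$ from~\eqref{eq:lambda-def} yields $\lambda = \min\{1/k,\, (k-1)/k\} = 1/k$ for every $k \ge 2$. The framework then immediately delivers a two-pass algorithm whose storage is dominated by $|S|$ and $|Q|$ and is $\tO(m/T^{1/k})$ in expectation (which can be made worst-case by the Markov-based abort trick noted at the end of \Cref{sec:abundant-anal}). The quality guarantee is also immediate from~\eqref{eq:light_error},~\eqref{eq:heavy_error}, and \Cref{lem:oracle}: the algorithm outputs a $(1 \pm O(\eps))$-approximation to $T_k(H)$ with constant success probability, which is then boosted to $1-\delta$ via \Cref{lem:median-of-means} at the cost of a $\log\delta^{-1}$ factor absorbed into $\tO$.

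There is really no serious obstacle in this proof; the only sanity check required before invoking the framework is that $\alpha < \beta < 2\alpha$, which holds since $k < 2k-1 < 2k$ for $k \ge 2$, and hence the heaviness threshold exponent $\theta = \beta/\alpha - 1 = (k-1)/k$ lies in $(0,1)$ as required. The theorem thus follows as a direct unpacking of the framework specialized to independent edge sampling, generalizing the analogous graph-case statement of~\cite{McGregorVV16}.
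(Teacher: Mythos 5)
Your proposal matches the paper's own proof of this theorem essentially verbatim: independent edge sampling into $Q$, the motif taken to be the hyperwedges arising from simplices (so the framework's count is $(k+1)T_k(H)$), the parameter values $\alpha=k$, $\beta=2k-1$, $\gamma=1$, verification of the separation property, and $\lambda=1/k$ via \eqref{eq:lambda-def}, all exactly as in \Cref{sec:ub-meager-simplest}. One small wording slip: the bound $\Pr[\Lambda_i=\Lambda_j=1]\le p^{2k-1}$ requires the union of the two hyperwedges' edge sets to have size \emph{at least} $2k-1$ (i.e., exactly one shared edge, which is the only case the framework's $\scor$ accounting uses), not ``at most'' as you wrote---but this is at the same level of detail as the paper's own treatment, so it does not constitute a genuine gap.
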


\subsection{Better Instantiation: Sampling by Coloring}
\label{sec:ub-meager-better}

Our second idea, also outlined in \Cref{sec:tech-meager}, is inspired by the
graph algorithm of \cite{PaghT12} that colors vertices independently and
samples monochromatic edges. As noted there, this idea does not translate
directly to the hypergraph setting, because this sampling scheme breaks the
separation property. So we instead uniformly and independently \emph{color}
$(k-1)$-sized subsets of $V$ and select an edge into the sample $Q$ iff all
$k$ of its subsets receive the same color. We call such an edge
\emph{monochromatic}.

\begin{algorithm*}[!htb]
\begin{algorithmic}[1]
	\State $d \gets 2\binom{k+1}{k-1}$;~ $r \gets \binom{n}{k-1}$;~
                $\alpha \gets \binom{k+1}{k-1}-1$;~
		$p \gets (\eps^{-2} T^{-1} \log n)^{1/\alpha}$;~ $N \gets \ceil{1/p}$
	\State let $\cF \subseteq [N]^{[r]}$ be a $d$-wise independent family of hash functions,
                each of the form $h\colon [r]\to [N]$
	\State select a function $\col(~)$ uniformly at random from $\cF$ to color the elements of $\binom{V}{k-1}$
	\State when the stream arrives, collect all edges that are monochromatic into a set $Q$
	\State $\tT \gets$ number of simplices induced by $Q$
	\State \Return $\tT/p^\alpha$
\end{algorithmic}
\caption{Simplex counting in the ``meager'' case using $O(m/T^{2/(k+2)})$ space: estimating $\Tklight(H)$}
\label{alg:meager-better}
\end{algorithm*}

If $N$ colors are used, then the probability that any $(k-1)$-sized subset is
colored with any particular color is $p = 1/N$. Then an edge is sampled with
probability $N(1/N)^k = p^{k-1}$, making $\gamma = k-1$. A particular simplex
in $H$ is detected when all $\binom{k+1}{k-1}$ of the involved subsets are
colored the same, which happens with probability $p^\alpha$ for
\[
  \alpha = \binom{k+1}{k-1}-1 = \frac{(k+2)(k-1)}{2} \,.
\]

Now consider a pair of distinct simplices $\Xi, \Xi'$ in $H$. If $\Xi$ and
$\Xi'$ do not share an edge, then they may have at most $k-1$ vertices in
common. Hence, it is easy to see that the probability that both $\Xi$ and
$\Xi'$ are detected is $p^{2\alpha}$, so there is no correlation; the
separation property holds. If, on the other hand, $\Xi$ and $\Xi'$ have an
edge in common, then they share $k$ common $(k-1)$-sized subsets, so the
probability that both $\Xi$ and $\Xi'$ are detected is $p^\beta$, where
\[
  \beta = 2\binom{k+1}{k-1}-k-1 = k^2-1 \,.
\]
Therefore, \cref{eq:lambda-def} gives us $\lambda = \gamma/\alpha = 2/(k+2)$.

In the above description, we took the coloring function to be uniformly
random. However, as the analysis shows, all we need is that the above
probability calculations hold. For this, it suffices for the coloring function
to be independent on the family of all $(k-1)$-sized subsets within any two
distinct simplices. The number of such subsets is at most $d :=
2\binom{k+1}{k-1}$, so we may use a $d$-wise independent family of hash
functions to provide the coloring. This is shown in the formal description of
the algorithm (see \Cref{alg:meager-better}). For ease of presentation, we
show only the light-simplex counting portion and not the final algorithm,
which would use the heavy/light partitioning technique on top of this.


Since we obtained $\lambda = 2/(k+2)$, we have the following result.

\begin{theorem} \label{thm:ub-meager-better}
  We can estimate $T_k(H)$ using a $2$-pass algorithm that runs in
  $\tO\left(m/T^{2/(k+2)}\right)$ space. \qed
\end{theorem}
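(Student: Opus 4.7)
The plan is to invoke the generic framework of \Cref{sec:obliv-framework} with the coloring-based sampler of \Cref{alg:meager-better} as the oblivious sampling scheme, combining the resulting light-simplex estimator with the heavy-simplex estimator from \eqref{eq:heavy-est} via the heavy/light oracle of \Cref{lem:oracle}. Concretely, I need to identify parameters $\alpha, \beta, \gamma$ satisfying \eqref{eq:alpha-def}--\eqref{eq:gamma-def}, verify the separation property \eqref{eq:sep}, and then read off the space bound from \eqref{eq:lambda-def}.

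For the parameters, first note that an edge is monochromatic iff all $k$ of its $(k-1)$-sized subsets receive the common color chosen by $\col$, which has probability $N\cdot (1/N)^k = p^{k-1}$, giving $\gamma = k-1$. A $k$-simplex spans $\binom{k+1}{k-1}$ distinct $(k-1)$-sized subsets, so it is detected precisely when all of these subsets share one color, an event of probability $p^{\binom{k+1}{2}-1}$; this yields $\alpha = (k+2)(k-1)/2$. For $\beta$, I consider two distinct simplices that share an edge: they share exactly $\binom{k}{k-1} = k$ common $(k-1)$-subsets (those inside the shared edge), so the joint detection event corresponds to $2\binom{k+1}{2} - k$ subsets all taking the same color, giving $\beta = k^2 - 1$.

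The main conceptual step is verifying the separation property, which I expect to be the most delicate part. Two simplices with no common edge share at most $k-1$ vertices, hence at most $\binom{k-1}{k-1} = 1$ common $(k-1)$-subset. When zero subsets are shared, independence is immediate. When exactly one subset is shared, a short counting argument under full independence shows the joint probability equals $N \cdot (1/N)^{2\binom{k+1}{2}-1} = p^{2\alpha}$, which coincides with the product of marginals; since the detection indicators are Bernoulli, equality of the joint and product-of-marginals forces independence. The $d$-wise independence of the hash family with $d = 2\binom{k+1}{k-1}$ suffices because no probability computation above involves more than $d$ subsets at once. It is a slightly surprising ``cancellation''---a single shared subset would naively seem to introduce correlation, but the factor of $N$ from choosing the shared color exactly compensates for the missing subset in the union---and this is what distinguishes the hypergraph analysis from the analogous vertex-coloring argument for graphs.

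With the parameters in hand, \eqref{eq:lambda-def} gives $\lambda = \min\{\gamma/\alpha,\, \beta/\alpha - 1\} = \min\{2/(k+2),\, k/(k+2)\} = 2/(k+2)$, as claimed. The overall space is dominated by the samples $Q$ and $S$: using $\EE[|Q|] = \tO(m/T^{\gamma/\alpha})$ from the framework and $\EE[|S|] = \tO(m/T^\theta)$ for $\theta = \beta/\alpha - 1$, both contributions fit inside $\tO(m/T^{2/(k+2)})$. A worst-case space bound follows from the Markov-abort trick noted at the end of \Cref{sec:abundant-anal}, and the two passes come from first collecting $Q$ and $S$ (pass 1, during which the oracle is prepared) and then sweeping again to evaluate the light and heavy estimators (pass 2).
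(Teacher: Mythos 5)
Your proposal is correct and matches the paper's proof essentially step for step: the same framework instantiation with $(k-1)$-subset coloring, the same parameters $\gamma = k-1$, $\alpha = \binom{k+1}{k-1}-1 = (k+2)(k-1)/2$, $\beta = 2\binom{k+1}{k-1}-k-1 = k^2-1$, and the same conclusion $\lambda = 2/(k+2)$ via \eqref{eq:lambda-def}. Your explicit verification of the separation property in the single-shared-subset case (the factor $N$ exactly cancelling, plus the Bernoulli argument) is a welcome expansion of what the paper dismisses as ``easy to see,'' but it is the same argument the paper intends.
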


\subsection{Our Method of Choice: Oblivious Sampling Using the Shadow Hypergraph}
\label{sec:ub-meager-best}

We are ready to introduce our best algorithm in the oblivious sampling
framework. It achieves a space complexity of $\tO(m/T^{2/(k+1)})$, which is
our best bound of the form $\tO(m/T^\lambda)$.  It is therefore our method of
choice for the ``meager'' case. Comparing with \Cref{thm:ub-abundant}, we see
that this bound wins when $T < m^{(k+1)/(k^2-k)}$.

For this, we again use our unified framework for the main design and analysis.
We use the above idea of coloring appropriately-sized subsets of vertices. The
added twist is that we apply coloring to vertices of the {\em shadow
hypergraph} $H' = (V',E')$, defined in \Cref{def:nhd-shadow}.

To be precise, we uniformly and independently color $(k-2)$-sized subsets of
vertices $V'$ of $H'$ using $N$ colors. Since each vertex in $V'$ is
essentially an ordered pair of vertices in $V$, we need a coloring function
that maps $\binom{V\times V}{k-2}$ to $[N]$. Recall that $H' = (V',E')$ is a
$(k-1)$-uniform hypergraph. During the first pass, when an edge $e =
\big\{u_1,\ldots,u_k\big\}$ arrives in the stream, it implies the arrival of
an edge $e' = E'$ which we store in $Q$ iff all $(k-2)$-sized subsets within
$e'$ receive the same color. We call such an edge $e'$ {\em monochromatic}.

We detect simplices using this sample $Q$ as follows.  As observed after
\Cref{def:hyperwedge}, each $k$-simplex in $H$ corresponds to exactly one
$(k-1)$-simplex in $H'$; we try to detect this ``shadow simplex.'' In greater
detail, let $\Xi$ be a particular simplex in $H$. Let $z$ be its minimum-ID
vertex, let $e_z$ be the edge of $\Xi$ not incident to $z$ and let $W_z$ be
the hyperwedge of $\Xi$ with apex $z$. Note that the edges of $W_z$ correspond
to edges in $H'$ that lie in $E'_z$ (see \Cref{def:nhd-shadow}) and, since $W_z$
is a hyperwedge, these edges form a $(k-1)$-simplex in the $z$-flavored
component of $H'$. In the second pass, when $e_z$ arrives in the stream, we
detect $\Xi$ iff this entire $(k-1)$-simplex has been stored in $Q$, i.e., iff
all its edges are monochromatic. Algorithmically, when we see an edge $e$ in
the second pass, we go over all vertices $z \in V$, detecting a simplex
whenever this $e$ plays the role of $e_z$ for a simplex on the vertices in $e
\cup \{z\}$ in the manner just described.

The remaining details are as before: we apply the heavy/light edge
partitioning technique on top of the above sampling scheme and proceed as in
our framework. We formalize the overall algorithm, with full details, as
\Cref{alg:meager-best}. As in \Cref{sec:ub-meager-better}, the coloring
function need not be fully random and can be chosen from a $d$-wise
independent hash family for an appropriate constant $d$.  The update logic for
the accumulator $A_L$ (\cref{line:count-light}) implements the detection
method just described. The update logic for the accumulator $A_H$
(\cref{line:count-heavy}) implements the estimator in \cref{eq:heavy-est}.
Note that the heavy-simplex estimator does not use the shadow hypergraph.


\begin{algorithm*}[!th]
\begin{algorithmic}[1]
\Procedure{$k$-Simplex-Count-Meager}{$\sigma :$ stream of edges of $k$-graph $H = (V,E)$}

	\State $d \gets 2\binom{k}{k-2}$;~ $r \gets \binom{n^2}{k-2}$;~
                $\alpha \gets \binom{k}{k-2}-1$;~
		$p \gets (\eps^{-2} T^{-1} \log n)^{1/\alpha}$;~ $N \gets \ceil{1/p}$
        \State $\theta \gets \big(2\binom{k}{2}-k\big) / \big(\binom{k}{2}-1\big)$;~
                $q \gets \xi \eps^{-2}\log n \cdot T^{-\theta}$ 
                \Comment{$\xi$ is a suitable constant; see \cref{eq:q-def}}
	\State $Z \gets$ sample of $V$, picking each vertex independently with probability $q$
	\State let $\cF \subseteq [N]^{[r]}$ be a $d$-wise independent family of hash functions,
                each of the form $h\colon [r]\to [N]$
	\State select a function $\col(~)$ uniformly at random from $\cF$ to color the elements of $\binom{V'}{k-2}$

\vskip4pt
\State \textbf{pass 1:}
\Indent
	\State $(Q, S) \gets (\varnothing, \varnothing)$
	\For{each hyperedge $e = \{u_1,\ldots,u_k\}$ in the stream}
		\State let $i \in [k]$ be such that $u_i$ is the vertex in $e$ with the smallest ID
		\If{all $(k-2)$-sized subsets of $e\setm \{u_i\} \in E_{u_i}$ are colored the same}
			\State $Q \gets Q \cup \smash[b]{\left\{\big\{
				\flavor{u_i}{u_1},\ldots,\flavor{u_i}{u_{i-1}},
                                \flavor{u_i}{u_{i+1}},\ldots,\flavor{u_i}{u_k}\big\}\right\}}$
                                \Comment{store the implied shadow-hypergraph edge}
		\EndIf
		\If{$e \cap Z \ne \varnothing$}
			\State $S \gets S \cup \{e\}$
		\EndIf
	\EndFor
	\State create heavy/light oracle based on $S$
\EndIndent
\vskip2pt
\State \textbf{pass 2:}
\Indent
	\State $(A_L, A_H) \gets (0, 0)$
	\State let $Q^L$ be the set of hyperedges in $Q$ corresponding to light hyperedges in $E$
	\For{each hyperedge $e = \{u_1,\ldots,u_k\}$ in the stream}
		\If{$\oracle(e) = \light$}
			\State $A_L \gets A_L + \left|\left\{z\in V\,:\,\bigwedge_{i=1}^k\left\{
				\big\{\flavor{z}{u_1},\ldots,\flavor{z}{u_{i-1}},
                                \flavor{z}{u_{(i+1)}},\ldots,\flavor{z}{u_k}\big\}
				\subseteq Q^L\right\}\right\}\right|$
				\label{line:count-light}
		\Else
			\State $A_H \gets A_H + \sum_{i=1}^{k+1}\hat{\Nsimp}_e^i/i$, 
                                where $\hat{\Nsimp}_e^i := \left|\big\{z\in Z\,:\,e\cup\{z\}
                                \text{ a simplex with $i$ heavy edges} \big\}\right|$
				\label{line:count-heavy}
		\EndIf
	\EndFor
\EndIndent
\vskip2pt
\State \Return $A_L/p^\alpha + A_H/q$
\EndProcedure
\end{algorithmic}
\caption{Simplex counting in the ``meager'' case using $\tO\big(m/T^{2/(k+1)}\big)$ bits of space}
\label{alg:meager-best}
\end{algorithm*}

\paragraph{Analysis.}
We proceed in the now-established manner within our framework: we need to work
out the parameters $\alpha, \beta$, and $\gamma$. 

Consider a particular $k$-simplex $\Xi$ in $H$ and let the $(k-1)$-simplex
$\Xi'$ be its ``shadow simplex'' in $H'$, as in the above discussion. Then, if
$\Xi$ is light, it is detected at most once, when the hyperedge opposite to
its lowest-ID vertex arrives in the stream. For this detection to actually
happen, all edges of $\Xi'$ must be monochromatic, i.e., all
$\binom{k}{k-2}$ possible $(k-2)$-sized subsets of vertices in $\Xi'$ must
receive the same color. Let $p = 1/N$. The probability of this event is
$p^\alpha$, where
\[
  \alpha = \binom{k}{k-2} - 1 = \binom{k}{2} - 1 \,.
\]

Next, we observe as in \Cref{sec:ub-meager-better}, that for two distinct
simplices $\Xi_1$ and $\Xi_2$ in $H$, the events of their detection have
nonzero correlation iff they have an edge in common. Thus, the separation
property \eqref{eq:sep} holds. Moreover, when $\Xi_1 \cap \Xi_2 = \{e\}$, the
correlation is in fact nonzero only if both $\Xi_1$ and $\Xi_2$ have the same
minimum-ID vertex and that vertex lies in $e$. When this happens, the
simultaneous detection event occurs iff all $(k-2)$-sized subsets of shadow
vertices arising from vertices of $\Xi_1$ and $\Xi_2$ receive the same color.
Counting the number of such subsets shows that the probability of simultaneous
detection is $p^\beta$, for 
\[
  \beta = 2\binom{k}{k-2}-(k-1)-1 = 2\binom{k}{2}-k \,.
\]

Finally, an edge in $E'$ is made monochromatic (and thus, stored in $Q$) with
probability $p^{k-2}$ because $k-1$ different subsets must all be colored
identically. This gives $\gamma = k-2$, so 
\[
  \lambda 
  = \min\left\{\frac{2\binom{k}{2}-k}{\binom{k}{2}-1} - 1, 
    \frac{k-2}{\binom{k}{2}-1}\right\} 
  = \frac{k-2}{\binom{k}{2}-1} 
  = \frac{2}{k+1} \,,
\]
for $k \geq 3$. Thus, our algorithm runs in $\tO(m/T^{\lambda}) =
\tO\left(m/T^{2/(k+1)}\right)$ bits of memory in the worst case.

We have therefore proved the following result.

\begin{theorem} \label{thm:ub-meager-best}
  There is a $2$-pass algorithm that reads a stream of $m$ hyperedges
  specifying a $k$-uniform hypergraph $H$ and produces an
  $(\eps,o(1))$-estimate of $T_k(H)$; under the promise that $T_k(H) \ge T$,
  the algorithm runs in expected space $O(\eps^{-2} \log^2 n\cdot
  m/T^{2/(k+1)})$. \qed
\end{theorem}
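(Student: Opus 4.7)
The plan is to instantiate the unified framework of \Cref{sec:obliv-framework} for \Cref{alg:meager-best}, taking the motif copies to be the $k$-simplices of $H$. The heavy-edge accumulator $A_H$ already matches the framework's heavy-simplex estimator \eqref{eq:heavy-est} verbatim, with oracle built from $S$ as in \eqref{eq:oracle}, so the real work lies in analyzing the light-simplex accumulator $A_L$ coming from the shadow-hypergraph coloring. I would identify the three parameters $\alpha, \beta, \gamma$, verify the separation property \eqref{eq:sep}, and read off the resulting space bound from \eqref{eq:lambda-def}.

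For $\alpha$ I would use the observation made just after \Cref{def:hyperwedge}: each $k$-simplex $\Xi$ in $H$ has a unique shadow $(k-1)$-simplex $\Xi'$ in $H'$, placed in the flavor of the minimum-ID vertex $z$ of $\Xi$. Inspecting the light-simplex logic in \Cref{alg:meager-best} shows that $\Xi$ contributes to $A_L$ precisely once, at the moment when its opposite hyperedge $e_z := \Xi \setm \{z\}$ arrives in pass~2 and all $k$ shadow edges of $\Xi'$ are already in $Q^L$---equivalently, when all $\binom{k}{k-2}$ distinct $(k-2)$-subsets of shadow vertices inside $\Xi'$ have received the same color. Writing $p = 1/N$, this happens with probability $p^{\binom{k}{2}-1}$, so $\alpha = \binom{k}{2}-1$. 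For $\gamma$, a single shadow edge enters $Q$ iff all $k-1$ of its $(k-2)$-subsets agree in color, giving $\gamma = k-2$.

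The main obstacle is checking the separation property and pinning down $\beta$. Detection events of two distinct simplices $\Xi_1,\Xi_2$ depend on disjoint colorings unless their shadow simplices share at least one $(k-2)$-subset, which forces them to share at least $k-2$ shadow vertices and, in particular, to have the same minimum-ID apex $z$. The delicate case is when the two shadow simplices share exactly one $(k-2)$-subset: a short conditioning argument on that subset's color shows that the joint detection probability still factors as $p^{2\alpha}$, so no correlation is introduced. Genuine correlation therefore requires at least $k-1$ shared shadow vertices, which in $H$ corresponds to $\Xi_1$ and $\Xi_2$ sharing a whole hyperedge---exactly the separation property. When $\Xi_1 \cap \Xi_2 = \{e\}$ with both minimum-ID vertices equal to some $z \in e$, the two shadow simplices share $k-1$ shadow vertices and hence $\binom{k-1}{k-2} = k-1$ common $(k-2)$-subsets; a short count then shows that a total of $2\binom{k}{k-2} - (k-1)$ distinct $(k-2)$-subsets must all receive the same color for simultaneous detection, yielding $\beta = 2\binom{k}{2} - k$.

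Finally I would assemble these into a space bound. Writing $\binom{k}{2}-1 = (k-2)(k+1)/2$, a direct computation gives $\gamma/\alpha = 2/(k+1)$ and $\beta/\alpha - 1 = (k-1)/(k+1)$, so \eqref{eq:lambda-def} yields $\lambda = 2/(k+1)$ for $k \geq 3$. The expected size of $Q$ is then $\tO(m/T^{2/(k+1)})$, and by \eqref{eq:s-size} and \eqref{eq:q-def} the expected size of $S$ is $\tO(m/T^{(k-1)/(k+1)})$, which is dominated by the $Q$-cost for $k \geq 3$. Combining \Cref{lem:oracle} with the error bounds \eqref{eq:light_error} and \eqref{eq:heavy_error} keeps the total failure probability at $o(1)$, giving the claimed $(\eps,o(1))$-estimator in $O(\eps^{-2}\log^2 n \cdot m/T^{2/(k+1)})$ expected space.
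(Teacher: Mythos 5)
Your proposal is correct and follows essentially the same route as the paper's proof: it instantiates the oblivious-sampling framework with $\alpha = \binom{k}{2}-1$, $\beta = 2\binom{k}{2}-k$, $\gamma = k-2$, verifies the separation property via shared $(k-2)$-subsets of shadow vertices, and reads off $\lambda = 2/(k+1)$ from \cref{eq:lambda-def}. Your explicit check that two shadow simplices sharing exactly one $(k-2)$-subset still have joint detection probability $p^{2\alpha}$ (hence zero covariance) is a detail the paper asserts only implicitly, but it is the same argument, not a different approach.
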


\subsection{A One-Pass Algorithm} \label{sec:ub-meager-1pass}

Suppose we insisted on using at most one streaming pass. Can 
we hope to achieve sublinear space? The answer is yes, provided
the input $k$-graph is guaranteed not to have too much ``local 
concentration'' of simplices. This is captured in the parameters
$\Delta_E$ and $\Delta_V$ that were defined in \Cref{sec:prelim}.

This algorithm can be seen as a direct generalization of the very recent
algorithm $1$-pass triangle counting algorithm of Jayaram and
Kallaugher~\cite{JayaramK21} from graphs to hypergraphs.  We make clever use
of the fact that for each simplex, there is an implicit ordering of its
hyperedges given by the order in which they arrive in the input stream. When
the last hyperedge arrives, we hope to have sampled the remaining hyperedges
of the simplex.  If this event occurs, we will have found the simplex. 

As in \cite{JayaramK21}, we treat each vertex and each hyperedge as
\emph{active} or not, with the probabilities of ``being active'' being $p$ for
each vertex and $q$ for each hyperedge. To sample a hyperedge, we need both
itself to be active, as well as at least one of its constituent vertices
(endpoints) to be active. When a hyperedge $e$ arrives, we count how many
sampled hyperwedges are completed to a simplex by $e$. The details are given
in \Cref{alg:meager-1pass}. This algorithm does not feature heavy/light edge
partitioning.


\begin{algorithm*}[!th]
\begin{algorithmic}[1]
\Procedure{$k$-Simplex-Count-One-Pass}{$\sigma :$ stream of edges of $k$-graph $H = (V,E)$}

\State $p \gets 9\Delta_V/(\varepsilon^2 T)$;~
        $q \gets \max\{\Delta_E/\Delta_V, 1/\sqrt[k]{\Delta_V}\}$
        \Comment{probability that a vertex (resp.~an edge) is active}
\State let $f \colon V \to \{0,1\}$ be chosen randomly such that $\EE[f(v)] = p$ for all $v \in V$
\State let $g \colon E \to \{0,1\}$ be chosen randomly such that $\EE[g(e)] = q$ for all $e \in {E}$
\State $S \gets \varnothing$ \Comment{sampled edges}
\State $\tT \gets 0$ \Comment{accumulator for unbiased estimator}
\For{each hyperedge $e = \{u_1,\ldots,u_k\}$ in the stream}
	\For{each $v \in V$}
		\If{$f(v) = 1 \,\wedge\, (\{v\} \cup (e\setm \{u_i\}) \in S$ for each $i \in [k])$}
			\vskip2pt
			\State $\tT \gets \tT + 1/pq^k$ \label{line:one-pass-accum}
		\EndIf
	\EndFor
	\If{$g(e)\cdot \sum_{i=1}^k f(u_i) > 0$}
		\State $S \gets S \cup \{e\}$
	\EndIf
\EndFor
\State \Return $\tT$
\EndProcedure
\end{algorithmic}
\caption{Simplex counting using one streaming pass}
\label{alg:meager-1pass}
\end{algorithm*}

\paragraph{Analysis.}
Although we cannot quite use our analysis framework in plug-and-play fashion,
we can reuse most of the arguments from \Cref{sec:obliv-framework}.  Let
$\ell(\Xi_i)$ be the last edge to arrive in the stream for simplex $\Xi_i$.
Let $W_i$ be the hyperwedge in $\Xi_i$ that has $\ell(\Xi_i)$ as its base and
let $\apex(W_i)$ be the apex of $W_i$.  Then $\Xi_i$ is sampled if and only if
$\apex(\Xi_i)$ is active and all the edges of $W_i$ are sampled.  Since
$\apex(\Xi_i)$ is a common vertex to all the hyperedges in $W_i$, we only need
each hyperedge to be active. Therefore, the probability that $\Xi_i$ is
counted into $\tT$ is $pq^k$.

As before, let $\Lambda_i$ be the indicator random variable for whether or not
the $i$th simplex $\Xi_i$ is detected. Then
\[
  \tT = \frac{1}{pq^k} \sum_{i=1}^{T_k(H)} \Lambda_i
\]
by the logic of \cref{line:one-pass-accum}, so $\EE[\tT] = T_k(H)$, making $\tT$ an
unbiased estimator. For the variance, we have
\begin{align}
  \Var[\tT] 
  = \frac{1}{p^2 q^{2k}} \Var\left[\sum_{i=1}^{T_k(H)} \Lambda_i\right]
  = \frac{T_k(H)}{pq^k} + \sum_{i\neq j}\Cov[\Lambda_i,\Lambda_j] \,.
\end{align}
We can divide the pairs of simplices that have non-zero covariance into two categories:
\begin{itemize}
  \item Simplices $\Xi_i$ and $\Xi_j$ sharing an edge $e$. If $\apex(\Xi_i) =
  \apex(\Xi_j)$, then $W_i \cap W_j = \{e\}$, meaning that the probability
  that both simplices are found is $pq^{2k-1}$. If $\apex(\Xi_i) \neq
  \apex(\Xi_j)$, then $\ell(\Xi_i) = \ell(\Xi_j)$, so the probability that
  both simplices are found becomes $p^2 p^{2k-1} < pq^{2k-1}$. Using the
  same argument as in \cref{eq:s_cor_bound}, we can see that there are at
  most $T_k(H) \Delta_E$ pairs of simplices with covariance bound $pq^{2k-1}$.
  \item Simplices $\Xi_i$ and $\Xi_j$ sharing at least one vertex, but not
  an edge. In that case, if they share the apexes $\apex(\Xi_i) =
  \apex(\Xi_j)$, then $\EE[\Lambda_i \Lambda_j] = pq^{2k}$. If their apexes
  are not the same, then $\EE[\Lambda_i \Lambda_j] = p^2 q^{2k}$, so they are
  independent. Thus, there are at most $T_k(H) \Delta_V$ simplices with this
  covariance bound of $pq^{2k}$.
\end{itemize}

To bound the total error probability of our estimator, we use Chebyshev's
inequality to get
\begin{align}
  \Pr\left[|T_k(H) - \tT| \geq \eps T_k(H)\right] 
  &\le \frac{\Var[\tT]}{\eps^2 T_k(H)^2} \notag\\
  &\le \frac{1}{\eps^2 T_k(H)^2} \left(\frac{T_k(H)}{pq^k} +
    \frac{\Delta_E T_k(H)}{pq} +
    \frac{\Delta_V T_k(H)}{p}\right) \notag\\
  &\le \frac{1}{\eps^2 Tpq^k} + \frac{\Delta_E}{\eps^2 pqT} + 
    \frac{\Delta_V}{\eps^2 pT} \,,
\label{eq:one_pass_error_prob}
\end{align}
where at the final step we used $T_k(H) \ge T$.

We shall upper-bound each term in the expression
\eqref{eq:one_pass_error_prob} by $1/9$, thus bounding the error probability
by $1/3$. To control the third term, we set $p = 9\Delta_V/(\eps^2 T)$. To
control the first two terms, we now need $q = \max\left\{\Delta_E /
\Delta_V,\, 1/\sqrt[k]{\Delta_V}\right\}$. Since each edge is sampled with
probability $pq$, the overall space usage of \Cref{alg:meager-1pass}, given
the choices we made for $p$ and $q$, is $O(mpq \log n) = \tO((m/T) (\Delta_E +
\Delta_V^{1-1/k}))$. 

As usual, the error probability of the estimator can be brought down from
$1/3$ to an arbitrary $\delta$ by taking a median of $O(\log \delta^{-1})$
independent copies. Thus, we obtain the following theorem.

\begin{theorem} \label{thm:ub-meager-1pass}
  There is a $1$-pass streaming algorithm that $(\eps,\delta)$-estimates
  $T_k(H)$ using
  \[
    O\left(\frac{1}{\eps^2} \log\frac{1}{\delta} \log n \cdot
    \frac{m}{T}\left(\Delta_E+\Delta_V^{1-\frac{1}{k}}\right)\right)
  \]
  bits of space, under the promise that $T_k(H) \geq T$. \qed
\end{theorem}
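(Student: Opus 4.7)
The plan is to adapt the analysis framework from \Cref{sec:obliv-framework}, but with a somewhat different bookkeeping because \Cref{alg:meager-1pass} does not use heavy/light partitioning and its detection logic is tied to the arrival order of the stream. The first step is to verify that $\tT$ is unbiased. For each simplex $\Xi_i$ in $H$, define $\ell(\Xi_i)$ to be its last-arriving hyperedge, $W_i$ the hyperwedge obtained by removing $\ell(\Xi_i)$, and $\apex(W_i)$ its apex. The algorithm detects $\Xi_i$ iff $\apex(W_i)$ is active (probability $p$) and every hyperedge of $W_i$ has been stored, which amounts to each of the $k$ hyperedges of $W_i$ being active (they are all automatically incident to the active vertex $\apex(W_i)$). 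The detection probability is therefore $pq^k$, and because \cref{line:one-pass-accum} adds $1/(pq^k)$ per detection, unbiasedness follows directly.

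The second and central step is the variance bound. Writing $\tT = (pq^k)^{-1}\sum_i \Lambda_i$ with $\Lambda_i$ the detection indicator, I would split the covariance sum according to how two simplices $\Xi_i, \Xi_j$ overlap: (i) edge-disjoint pairs have independent detections because the events depend on disjoint sets of coins, and contribute nothing; (ii) pairs sharing a vertex but no edge have apex-activity coins shared when the apexes coincide, so the joint detection probability is at most $pq^{2k}$, and there are at most $T_k(H)\Delta_V$ such pairs; (iii) pairs sharing an edge contribute at most $pq^{2k-1}$ to the joint probability, bounded case-by-case on whether the two apexes coincide (one activity coin coincides), and there are at most $T_k(H)\Delta_E$ such pairs by the same counting as in \cref{eq:s_cor_bound}. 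Plugging into Chebyshev and using $T_k(H)\ge T$ yields a three-term bound of the shape $1/(\eps^2 T p q^k) + \Delta_E/(\eps^2 pqT) + \Delta_V/(\eps^2 pT)$.

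The third step is the parameter choice: set $p = 9\Delta_V/(\eps^2 T)$ to make the third term at most $1/9$, then pick $q = \max\{\Delta_E/\Delta_V,\, \Delta_V^{-1/k}\}$ to simultaneously control the other two terms (the first term needs $pq^k \gtrsim 1/T$, which the $\Delta_V^{-1/k}$ branch guarantees; the second term needs $pq \gtrsim \Delta_E/T$, which the $\Delta_E/\Delta_V$ branch guarantees). This delivers constant error probability. Each stored edge uses $O(\log n)$ bits and the expected number stored is $O(mpq)$; multiplying out gives the claimed $\tO((m/T)(\Delta_E + \Delta_V^{1-1/k}))$ space. A final median-of-$O(\log\delta^{-1})$ independent copies boosts the success probability to $1-\delta$.

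The main obstacle I expect is the careful case analysis for edge-sharing pairs: one must be attentive to whether the two simplices assign the common edge to the same apex or not, since this changes which vertex and edge activity coins are shared and therefore which exponent of $p$ versus $q$ appears in the joint probability. Getting a clean bound of $pq^{2k-1}$ in both subcases is what allows the $\Delta_E/(\eps^2 pqT)$ term to be tamed by the $q \ge \Delta_E/\Delta_V$ branch without extra slack. Everything else is a fairly mechanical generalization of the triangle-counting analysis of \cite{JayaramK21}.
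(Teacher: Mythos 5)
Your proposal is correct and follows essentially the same route as the paper's own proof: the identical unbiasedness argument via the last-arriving hyperedge and its apex (detection probability $pq^k$), the same covariance split with joint-detection bounds $pq^{2k-1}$ for edge-sharing pairs (at most $T_k(H)\Delta_E$ of them) and $pq^{2k}$ for apex-sharing, edge-disjoint pairs (at most $T_k(H)\Delta_V$), the same Chebyshev bound, the same parameter choices $p = 9\Delta_V/(\eps^2 T)$ and $q = \max\{\Delta_E/\Delta_V,\, \Delta_V^{-1/k}\}$, and the same $O(mpq\log n)$ space accounting plus median boosting. One cosmetic point: in your case (i) you should say ``vertex-disjoint'' (or ``sharing no vertex'') rather than ``edge-disjoint,'' since your case (ii) consists precisely of edge-disjoint pairs that can still be correlated through a shared apex coin.
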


\section{Lower Bounds} \label{sec:lb}

In this section, we prove some lower bounds on the space complexity of simplex
counting using a constant number of streaming passes; one of the results is
for $1$-pass algorithms. These bounds follow from well-known lower bounds in
communication complexity using the standard connection between streaming and
communication complexity. For our reductions from communication problems to
simplex counting, we use three different families of hypergraph constructions.
While these lower bounds are not the major contribution of this work, they do
play the important role of (a)~showing that some of our upper bounds have the
right dependence on the parameters $m, T, \Delta_E$, and $\eps$, at least in
certain parameter regimes, and (b)~clarifying where improvements might be
possible.

\subsection{Preliminaries} \label{sec:lb-prelim}

We set up some notation, provide some necessary background, and give a
high-level overview. The actual results and proofs appear in
\Cref{sec:lb-results}. We continue to use our standard notations $H$, $V$,
$E$, $k$, $m$, $n$, $T$, $\Delta_V$, $\Delta_E$, $T_k(H)$, $\eps$, and
$\delta$ from earlier sections.

\paragraph{A Subtlety in the Use of Asymptotic Notation.}
Most of our lower bounds will involve two or more parameters describing the
input, in sync with how the upper bounds involve multiple parameters.
Following standard practice in the literature in this area, we will use
$\Omega(\cdot)$-notation to state these lower bounds. It is worth taking a
moment to reflect on what this actually means, for there is a subtlety
here.%
\footnote{This subtlety appears in just about every past work on lower bounds
for triangle counting in graphs.}

The purpose of a lower bound is to clarify what {\em cannot} be achieved
algorithmically. Consider, for instance, an estimator for the number of
triangles in a graph that runs in $\tO(m/\sqrt{T})$ space. We would like to
claim that this combined dependence on $m$ and $T$ is near-optimal, in the
sense that there does not exist an algorithm that {\em guarantees} a space
bound of $o(m/\sqrt{T})$. One way to do this is to exhibit a family of graphs
$G$ such that (a)~approximating $T_3(G)$ for this family would require
$\Omega(N)$ space, thanks to some known $\Omega(N)$ communication complexity
lower bound, and (b)~$N = \Theta(m/\sqrt{T})$ for this family of graphs. We
would then state that estimating $T_3(G)$ requires $\Omega(m/\sqrt{T})$ space
(in fact, this is an actual result of \cite{CormodeJ17}). This is indeed how
almost all work in the area proceeds. We do the analogous thing for our
hypergraph problems.

Note, however, that such an $\Omega(m/\sqrt{T})$ lower bound {\em does not}
preclude the existence of algorithms with a different dependence between $m$
and $T$ that is in general incomparable with $m/\sqrt{T}$. Indeed, there {\em
are} algorithms for estimating $T_3(G)$ that run in $\tO(m^{3/2}/T)$
space~\cite{McGregorVV16,BeraC17} and for certain graph families, we {\em do}
have $m^{3/2}/T = o(m/\sqrt{T})$. The reason there is no contradiction here is
that those algorithms aren't guaranteeing a space bound of $o(m/\sqrt{T})$ for
{\em every} graph.

\paragraph{Communication Problems of Interest.}
Our streaming lower bounds make use of known lower bounds in communication
complexity and the well-known fact that a $p$-pass $S$-space streaming
algorithm can be simulated by a two-player $(2p-1)$-round communication
protocol with $S$-bit messages per round. In this work, we only use two-player
(Alice and Bob) communication problems for our reductions.
We shall always consider randomized communication complexity, allowing a
constant error probability, say $1/3$. 

In the {\em set disjointness} problem, $\disj_N$, Alice and Bob hold vectors
$\bx,\by \in \b^N$ (respectively), interpreted as characteristic vectors of
sets $S,T \subseteq [N]$. They must determine whether or not $S \cap T =
\varnothing$. We will use the variant of this problem called {\em unique
disjointness}, denoted $\udisj_N^{k}$, where it is promised that $|S| = |T| =
k$ and that $|S \cap T| \le 1$. The celebrated result of \cite{Razborov92}
shows that the communication complexity of $\udisj_N^{\ceil{N/4}}$ is
$\Omega(N)$.

In the {\em gap-disjointness} problem, $\gapdisj_{N,t,g}$, Alice and Bob
hold vectors $\bx,\by$ representing sets $S,T$ as above. They have to
distinguish between the cases $|S \cap T| \le \floor{t}$ and $|S \cap T| \ge
\ceil{t+g}$.  The $\Omega(N)$ lower bound on the \textsc{gap-hamming-distance}
problem~\cite{ChakrabartiR12}, combined with a folklore reduction, implies
that the communication complexity of $\gapdisj_{N,N/2,\sqrt{N}}$ is
$\Omega(N)$.

In the {\em indexing} problem, $\idx_N$, Alice holds a data vector $\bx \in
\b^N$ and Bob holds an index $y \in [N]$. They can only communicate one-way,
from Alice to Bob, after which Bob must output the $y$th data bit, $x_y$. It
is well known~\cite{Ablayev96} that the one-way communication complexity of
$\idx_N$ is $\Omega(N)$. In fact, this same lower bound holds under a {\em density
promise} where it is given that the Hamming weight of $\bx$ is $\ceil{N/2}$ (say).

\paragraph{Streaming Decision Problems.}
Our main problem, for which we seek lower bounds, is that of
$(\eps,\delta)$-estimating $T_k(H)$ given the hyperedge stream of a $k$-graph
$H$ with $n$ vertices and $m$ hyperedges that satisfies $T_k(H) \ge T$, for
some threshold $T$.  We shall make the additional mild (and very natural)
assumption that our algorithm for this task must output $0$ when $T_k(H) = 0$.
Every one of our algorithms satisfies this. We shall refer to this version of
the problem as $\SimplexEst_k(n,m,T,\eps)$.  We will target $\delta = 1/3$
throughout this section. 
It will be helpful to simplify the task required of our streaming algorithms
to decision tasks and to have compact notation to describe a task with many
parameters.

In the decision task $\SimplexDist_k(n,m,T)$, the goal is to decide (up to
error probability $1/3$) whether $T_k(H) = 0$ or $T_k(H) \ge T$, under the
promise that one of these is the case. If the task need only be solved under
promised upper bounds on $\Delta_E$ and $\Delta_V$, then we denote it
$\SimplexDist_k(n,m,T;\Delta_E,\Delta_V)$. Note that a $p$-pass $S$-space
algorithm for $\SimplexEst_k(n,m,T,1/2)$ immediately yields a $p$-pass
$S$-space algorithm for $\SimplexDist_k(n,m,T)$.

In the decision task $\SimplexSep_k(n,m,T,\eps)$, the goal is to decide (up to
error probability $1/3$) whether $T_k(H) \le (1-\eps)T$ or $T_k(H) \ge
(1+\eps)T$. Notice that a $p$-pass $S$-space streaming algorithm for
the problem $\SimplexEst_k(n,m,T,\eps/2)$ immediately yields a similar
algorithm for $\SimplexSep_k(n,m,T,\eps)$.

\subsection{Our Lower Bound Results} \label{sec:lb-results}

We now prove the many lower bounds informally stated in
\Cref{thm:lbs-informal}, one by one. Throughout, we assume that $k$ is a
constant and that $n$ and $m$ are large enough.

\paragraph{Sublinear Space Requires a Good Threshold.}
First, we justify requiring a threshold $T$ in all of our upper bound results:
without it, one cannot guarantee a sublinear-space solution.

\begin{theorem} \label{thm:lb-n}
  The task $\SimplexDist_k(n,m,1)$---and thus, the task
  $\SimplexEst_k(n,m,1,1/2)$---requires $\Omega(n^k/p)$ space in the worst
  case, given $p$ passes. As $m = O(n^k)$ for every $k$-graph, this precludes
  a sublinear-space $O(1)$-pass solution.
\end{theorem}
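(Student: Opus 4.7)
My plan is to prove the bound via reduction from two-party set-disjointness $\disj_N$ on $N = \Theta(n^k)$ bits. By the standard streaming-to-communication simulation, a $p$-pass $S$-space algorithm for $\SimplexDist_k$ yields a two-party protocol for $\disj_N$ with $O(pS)$ total communication; combined with Razborov's $\Omega(N)$ lower bound on $\disj_N$ (cited in \Cref{sec:lb-prelim}), this forces $pS = \Omega(n^k)$, hence $S = \Omega(n^k/p)$. The clause about sublinear space then follows because $m$ can be as large as $\Theta(n^k)$ for $k$-graphs.

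The concrete reduction builds a $k$-graph $H$ on $n$ vertices from Alice's and Bob's inputs $\bx,\by \in \b^N$. I fix an apex vertex $v^* \in V$ and identify the $N \approx \binom{n-1}{k}$ \disj-indices with potential simplex positions $X_S := S \cup \{v^*\}$ for $S \in \binom{V \setminus \{v^*\}}{k}$. Alice inserts a designated edge of each $X_S$ (say the base $S$) whenever $x_S = 1$, while Bob inserts a complementary set of edges of $X_S$ (cone edges over $v^*$) whenever $y_S = 1$; the remaining edges of each $X_S$ may be inserted unconditionally as ``free'' edges. The construction is set up so that the total stream length is $m = O(n^k)$, and the goal is to show $T_k(H) \ge 1$ iff $\bx \cap \by \ne \varnothing$.

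The hard part will be preventing \emph{spurious} simplices when $\bx \cap \by = \varnothing$, since a single $k$-hyperedge sits inside up to $n-k$ potential $(k+1)$-subsets: Bob's edges inserted on behalf of one index $S$ may inadvertently complete $X_{S'}$ for a different $S'$ when paired with Alice's edge for $S'$. My strategy is to restrict Alice's insertions to a $k$-partite structure, partitioning $V \setminus \{v^*\}$ into $k$ equal parts $V_1,\ldots,V_k$ and forcing her base edges to be transversals $(u_1,\ldots,u_k) \in V_1 \times \cdots \times V_k$. A pigeonhole argument then shows her hypergraph alone is simplex-free, so every simplex of $H$ must pass through $v^*$. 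I will then carefully designate which of the $k$ cone edges of each $X_S$ are Bob-controlled versus free, exploiting the \udisj sparseness promise to rule out the aliasing configurations, so that $X_S$ is completed precisely for $S \in \bx \cap \by$. With the reduction verified, the simulation argument immediately yields the claimed $\Omega(n^k/p)$ space lower bound.
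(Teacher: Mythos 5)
Your high-level plan---reduce from disjointness on $\Theta(n^k)$ bits and invoke the standard streaming-to-communication simulation to get $S=\Omega(n^k/p)$---is exactly the paper's, and that part is fine. The gap is in the gadget: the single-apex architecture cannot work, and the ``aliasing'' you flag is not something a careful designation of free versus Bob-controlled cone edges can remove; it is an information-theoretic obstruction. With one apex $v^*$, every edge of every candidate simplex $X_S=S\cup\{v^*\}$ other than Alice's base is a cone edge through $v^*$, and there are only $\binom{n-1}{k-1}=\Theta(n^{k-1})$ such edges in total, each shared by $\Theta(n)$ indices. Completeness forces that whenever $y_S=1$ all $k$ cone edges of $X_S$ end up present (free ones always, controlled ones because $y_S=1$); soundness forces that whenever $y_S=0$ not all cone edges of $X_S$ are present, since an adversary may set $x_S=1$ while keeping the instance disjoint. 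So the family $\{S:\text{all cone edges of }X_S\text{ present}\}$ would have to equal $\{S:y_S=1\}$ for every legal $\by$. But that family is a function of the presence pattern of the $\Theta(n^{k-1})$ cone edges, which takes at most $2^{O(n^{k-1})}$ values, while $\by$ ranges over $2^{\Theta(n^k)}$ sets of size $N/4$; two distinct Bob inputs must collide, and a collision produces a spurious simplex on a legal disjoint instance, which is precisely your aliasing scenario. Moreover the promise in $\udisj_N^{\ceil{N/4}}$ gives you no sparsity to exploit: Razborov's $\Omega(N)$ bound needs sets of size $\Theta(N)$ (only the intersection is promised small), and genuinely sparse disjointness has communication proportional to the set size, which would destroy the $\Omega(n^k)$ bound. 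Your $k$-partite restriction on Alice's bases only rules out Alice-side spurious simplices, which were never the issue.

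The paper's construction removes this bottleneck by giving each player a private ``apex row'' instead of a shared apex: the vertex set is split into $A=\{a_1,\ldots,a_n\}$, $B=\{b_1,\ldots,b_n\}$, $C=\{c_1,\ldots,c_n\}$; Alice encodes bit $(i_1,\ldots,i_k)$ by the edge $\{a_{i_1},c_{i_2},\ldots,c_{i_k}\}$, Bob by $\{b_{i_1},c_{i_2},\ldots,c_{i_k}\}$, and only the data-independent connector edges $\{a_i,b_i\}\cup C'$ are shared among many candidate simplices. Each player thus controls $\Theta(n^k)$ distinct edges, one per index, so there is no encoding bottleneck, and the tripartite structure forces any simplex to contain exactly one $a$- and one $b$-vertex with matching index, whence a simplex exists iff the inputs intersect. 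To salvage your proof you would need to move to a construction of this type (a second, Bob-controlled vertex class rather than a single apex).
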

\begin{proof}
We reduce from $\udisj_N^{N/4}$, where $N = n^k$. Alice and Bob view their
respective inputs $\bx,\by \in \{0,1\}^{n^{k}}$ as labelings on $k$-dimensional hypercubes with
side length $n$. Based on their inputs, they construct certain set of 
hyperedges whose union produces a $k$-graph $H$.

The vertices of ${H}$ are split into 3 groups: $A = \{a_1,\ldots,a_n\}, B =
\{b_1,\ldots,b_n\}$ and $C=\{c_1,\ldots,c_n\}$. If $x[i_1,\ldots,i_{k}] = 1$, then
Alice inserts into ${H}$ the hyperedge $\{a_{i_1}, c_{i_2},\ldots,c_{i_{k}}\}$.
Bob subsequently follows the same procedure with his string $y$, except he
includes vertex $b_{i_1}$ instead of $a_{i_1}$ in his hyperedges. Bob also
includes in the stream the $n \binom{n}{k-2} = \Theta(n^{k-1})$ hyperedges of
the form $\{a_i, b_i\} \cup C'$ for all $(k-2)$-sized subsets $C' \subseteq C$.

We claim that $\bx$ and $\by$ intersect iff $H$ contains a $k$-simplex.
Indeed, if both vectors have a `$1$' entry at index $(i_1,\ldots,i_{k})$, then
the edges $\{a_{i_1}, c_{i_2},\ldots,c_{i_{k}}\}$, $\{b_{i_1},
c_{i_2},\ldots,c_{i_{k}}\}$ and all the edges containing $a_{i_1}, b_{i_1}$ and
the $(k-1)$ subsets of size $(k-2)$ of $C'' = \{c_{i_2},\ldots,c_{i_{k}}\}$ are
contained in ${H}$. This is $k+1$ hyperedges in total, forming a simplex.
Conversely, if $H$ contains a $k$-simplex, then the vertices of that simplex
must be some set of the form $\{a_{i_1}, b_{i_1}, c_{i_2}, \ldots  ,c_{i_{k}}\}$
because of the way ${H}$ was constructed. So that means that $x[i_1,\ldots,i_{k}]
= y[i_1,\ldots,i_{k}] = 1$.

The theorem follows by using the $\Omega(N) = \Omega(n^k)$ communication lower
bound for $\udisj_N^{N/4}$.
\end{proof}

\paragraph{A One-Pass Lower Bound.}
We now show that our $1$-pass algorithm from \Cref{sec:ub-meager-1pass} has
the correct dependence on the combination of $m$, $T$, and $\Delta_E$.

\begin{theorem} \label{thm:lb-delta-e}
  The task $\SimplexDist(n,m,T;\Delta_E,\Delta_V)$ requires $\Omega(m\Delta_E/T)$
  space using one pass.
\end{theorem}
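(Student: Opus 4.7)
The plan is to reduce from $\idx_N$ with $N = \Theta(m\Delta_E/T)$, which has an $\Omega(N)$ one-way communication lower bound (and this remains true under a density promise $|\bx|_1 = \lceil N/2 \rceil$). The design principle is that each of Alice's bits will correspond to a single ``base'' hyperedge that, if present, is the shared common hyperedge of exactly $\Delta_E$ distinct simplices; Bob, holding an index $y$, will probe his chosen position by inserting all $k\Delta_E$ ``facet'' hyperedges that would complete the $\Delta_E$ simplices attached to $a_y$.

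To realize a given $(m, T, \Delta_E)$ triple I take $K := T/\Delta_E$ vertex-disjoint ``bouquet gadgets,'' indexed by $t \in [K]$. The $t$-th gadget contains, for each $i \in [N]$, a fresh size-$k$ potential base hyperedge $a_i^{(t)}$ on $k$ private vertices and $\Delta_E$ private apex vertices $w_{i,1}^{(t)}, \ldots, w_{i,\Delta_E}^{(t)}$; the $\Delta_E$ potential simplices attached to $a_i^{(t)}$ are $\Xi_{i,\ell}^{(t)} := a_i^{(t)} \cup \{w_{i,\ell}^{(t)}\}$, whose facets I denote $f_{i,\ell,j}^{(t)} := \Xi_{i,\ell}^{(t)} \setm \{u_{i,j}^{(t)}\}$. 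Alice streams $\{a_i^{(t)} : t \in [K],\, x_i = 1\}$, and Bob streams $\{f_{y,\ell,j}^{(t)} : t \in [K],\, \ell \in [\Delta_E],\, j \in [k]\}$. Choosing $N$ so that the total edge count under the density promise matches $m$ gives $N = \Theta(m/K) = \Theta(m\Delta_E/T)$.

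A quick sanity check confirms the embedding is faithful. Outside ``row $y$'' no facet $f_{i,\ell,j}^{(t)}$ with $i \neq y$ is ever inserted, so no simplex can form there; inside row $y$ of any gadget $t$, all $\Delta_E$ simplices sharing the base $a_y^{(t)}$ appear simultaneously iff $a_y^{(t)}$ was inserted, i.e.\ iff $x_y = 1$. Thus $T_k(H) \in \{0,\, K\Delta_E\} = \{0, T\}$, as required. Each facet $f_{y,\ell,j}^{(t)}$ contains the unique apex $w_{y,\ell}^{(t)}$ and therefore lies in at most one simplex, while each base $a_y^{(t)}$ lies in exactly $\Delta_E$ simplices; hence both the $\Delta_E$ and $\Delta_V$ promises are respected. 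The standard simulation of a one-pass $S$-space streaming algorithm by an $S$-bit one-way protocol then yields $S = \Omega(N) = \Omega(m\Delta_E/T)$.

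The main obstacle I expect is keeping $\Delta_E$ tight while simultaneously realizing $T \gg \Delta_E$ and controlling the total edge count through a single parameter $N$; tensorizing a single-bouquet construction $K$ times under a \emph{shared} index $y$ resolves this cleanly, because apex vertices are private to each gadget (so facets are never shared across simplices) and only the $K$ bases corresponding to the probed row ever attain the multiplicity $\Delta_E$.
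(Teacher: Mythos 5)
Your reduction is correct, and it is the same strategy as the paper's: both proofs reduce from $\idx$ under a density promise, with Alice's bits encoded as potential ``base'' hyperedges and Bob, knowing the index, inserting all hyperwedges (facets) that would complete simplices over the single probed base, so that $T_k(H)\in\{0,T\}$ and a one-pass $S$-space algorithm yields an $S$-bit one-way protocol. The difference is in the gadget. The paper encodes the $N=n^k$ potential bases compactly on a $k\times n$ matrix of vertices and uses a single pool of $n^k$ shared apex vertices $c_i$; this forces $\Delta_E=T=\Theta(m)$, so the stated bound $\Omega(m\Delta_E/T)$ is witnessed only at that parameter point (where it reads $\Omega(m)$). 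Your construction decouples $\Delta_E$ from $T$ by tensorizing $K=T/\Delta_E$ vertex-disjoint bouquets with a shared index, so it exhibits the bound for a genuinely wider family of triples $(m,T,\Delta_E)$ with $\Delta_E\le T$, subject to the (mild, and also implicit in the paper's family) constraints that $T=O(m)$ so Bob's $kT$ facets do not dominate the edge count, that $K$ and $N=\Theta(m\Delta_E/T)$ are at least $1$ (integrality being absorbed into constants), and that one tolerates a larger vertex set, $\Theta(m\Delta_E)$ versus the paper's $\Theta(m)$ (you could shrink this by sharing apexes across rows, as the paper does, without creating spurious simplices). Your correctness check is sound: rows and gadgets are vertex-disjoint, no inserted edge contains two apexes, so every simplex is some $a_y^{(t)}\cup\{w_{y,\ell}^{(t)}\}$ and exists iff $x_y=1$, giving $\Delta_E$ exactly on the bases and $\Delta_V=\Delta_E$. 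The only cosmetic gap is that $u_{i,j}^{(t)}$ is used before being named as the $j$th base vertex of row $i$ in gadget $t$.
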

\begin{proof}
We reduce from the problem $\idx_{n^k}$ under a density promise. The constructed hypergraph ${H}$ has a vertex set $V$ consisting of a ``matrix'' of vertices $\{v_{ij}\}_{i \in [k], j \in [n]}$ and $n^k$ other vertices $\{c_i\}_{i \in [n^k]}$.

Alice views her data vector $\bx \in \{0,1\}^{n^k}$ as a labeling of a $k$-dimensional hypercube. When $x[i_1,\ldots,i_k] = 1$, Alice inserts into $H$ the hyperedge $\{v_{1, i_1},\ldots,v_{k, i_k}\}$. Thus, she inserts as many edges as the Hamming weight of $\bx$, which is $\Theta(n^k)$.

Bob views his index $y \in [n^k]$ as an index into this hypercube: let $y = (t_1,\ldots,t_k)$. He then inserts the $kn^k$ hyperedges $\{c_i\} \cup X$ for all $i \in [n^k]$ and all $X \subset \{v_{1,t_1},\ldots,v_{k,t_k}\}$ with $|X| = k-1$. In other words, Bob inserts $n^k$ hyperwedges with apexes the $c$-vertices and base $\{v_{1,t_1},\ldots,v_{k,t_k}\}$. Overall, $m = \Theta(n^k)$ for the final ${H}$.

The final ${H}$ will have $T_k(H) = 0$ or $T_k(H) = T := n^k$, because all the hyperwedges Bob added will have a base that has either been made into a hyperedge by Alice or not. In the first case, the Bob is certain that $x[y] = 0$, because otherwise $n^k$ simplices would be formed with base $\{v_{1i_1},\ldots,v_{ki_k}\}$. In the latter case, Bob will be certain that $x[y] = 1$. Therefore, solving $\SimplexDist_k$ on $H$ does indeed solve $\idx_{n^k}$, completing the proof.
\end{proof}

\paragraph{Multi-Pass Lower Bounds in Terms of Hypergraph Parameters.}
Most of our algorithms use either two or four passes. Accordingly, we seek
lower bounds in terms of $m$, $T$, and $\Delta_V$ that work even against
multi-pass algorithms. We give three such lower bounds: one applicable to the
``abundant'' case (when $T$ is fairly large), one applicable to the ``meager''
case (when $T$ is not so large), and one applicable when too many simplices
concentrate at a single vertex.  All three bounds follow from the same
hypergraph construction, so we state them jointly.

\begin{theorem} \label{thm:lb-main}
Any algorithm solving $\SimplexDist_k(m,n,T;1,\Delta_V)$ on a hypergraph ${H}$ must use
\[
  \Omega\left(\min\left\{\frac{m^{1+1/k}}{T}, \frac{m}{T^{1-1/k}},
  \frac{m\Delta_V^{1/k}}{T}\right\}\right)
\]
bits of space in the worst case.
\end{theorem}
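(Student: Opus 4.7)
The plan is to reduce from the unique set disjointness problem $\udisj_N$, whose randomized communication complexity is $\Omega(N)$ by Razborov's classical result. A $p$-pass streaming algorithm using $S$ bits of space induces a $(2p-1)$-round communication protocol with $O(pS)$ total communication, so any such reduction forces $S = \Omega(N)$ for constant $p$.

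To carry out the reduction I would construct a single parameterized family of hypergraphs. Given Alice's input $\bx \in \b^N$ and Bob's input $\by \in \b^N$ under the promise $|\{i : x_i = y_i = 1\}| \le 1$, I would use vertex sets $C = \{c_1, \ldots, c_N\}$ (one apex per input coordinate) and $V_1, \ldots, V_k$, each $V_j$ of size $n_0$ (a parameter). For each $(i, \bar{a}) \in [N] \times V_1 \times \cdots \times V_k$, consider the potential simplex with vertex set $X_{i, \bar{a}} = \{c_i, a_1, \ldots, a_k\}$; its $k+1$ hyperedges are the base $\{a_1, \ldots, a_k\}$ and, for each $j \in [k]$, the $j$th face $X_{i, \bar{a}} \setminus \{a_j\}$. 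The bases and all $j \ge 3$ faces are inserted unconditionally; Alice inserts the $1$st face exactly when $x_i = 1$, and Bob inserts the $2$nd face exactly when $y_i = 1$. A direct check shows that $X_{i, \bar{a}}$ is a $k$-simplex of $H$ iff $x_i = y_i = 1$, so $T_k(H)$ is either $0$ or $n_0^k$ under the promise. Setting $T := n_0^k$ reduces $\udisj_N$ to $\SimplexDist_k(n, m, T)$ on this instance.

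The next step is to count the parameters: $n = kn_0 + N$, $m = \Theta(n_0^k + N n_0^{k-1})$, $T = n_0^k$, and $\Delta_V \le n_0^k = T$ in the positive case, with equality attained at the unique intersection apex. I would then tune $(n_0, N)$ to realize each of the three bounds in turn. First, with $N \gg n_0$, we have $m = \Theta(N n_0^{k-1}) = \Theta(N T^{(k-1)/k})$, giving $N = \Theta(m/T^{1-1/k})$. Second, with $N = \Theta(n_0)$, we have $m = \Theta(n_0^k) = \Theta(T)$, hence $N = \Theta(m^{1/k}) = \Theta(m^{1+1/k}/T)$. Third, to attain an arbitrary prescribed $\Delta_V \le T$, I would replicate each apex $c_i$ into $\mu := T/\Delta_V$ copies and partition the $n_0^k$ potential bases among them, forcing each replicated apex to lie in exactly $\Delta_V$ simplices, so that $m\Delta_V^{1/k}/T$ becomes the tight term in the minimum. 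In each regime the $\Omega(N)$ communication lower bound translates into a space lower bound of $\Omega(\min\{m^{1+1/k}/T,\, m/T^{1-1/k},\, m\Delta_V^{1/k}/T\})$ on the corresponding hard family.

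The main obstacle will be the bookkeeping in the third regime: after apex replication the parameters $m$, $T$, and $\Delta_V$ become intertwined, and one must verify that the construction still satisfies $T_k(H) \in \{0, T\}$ and that the total edge count matches the desired $m$. A secondary technical issue is the $\Delta_E$ parameter --- the natural construction has $\Delta_E \le n_0 = T^{1/k}$ rather than $1$, so realizing the promised constraint $\Delta_E \le 1$ in the theorem's notation will require either a re-interpretation of that parameter or a further refinement in which the base tuples of distinct potential simplices are kept disjoint, buying $\Delta_E = 1$ at the cost of less shared structure and possibly a weaker bound. Reconciling the three parameter regimes with the claimed minimum, and doing so under the correct $\Delta_E$ convention, will be the most delicate part of the argument.
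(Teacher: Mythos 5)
Your reduction is essentially the paper's: the paper also reduces from unique disjointness ($\udisj_n^{n/4}$) using apex vertices $D=\{d_1,\ldots,d_n\}$ indexed by the coordinates, $k$ groups $A_1,\ldots,A_k$ playing the role of your $V_1,\ldots,V_k$, data-independent base edges, and faces inserted by Alice and Bob according to their bits, so that $T_k(H)=n^k\cdot|\bx\cap\by|$ and the unique-intersection promise yields $T_k(H)\in\{0,T\}$ with $T=n^k$. The only real divergence is in how the three terms are extracted: the paper simply sets all group sizes equal to the number of coordinates (your $n_0=N=n$), in which case $m=\Theta(n^k)$, $T=n^k$, $\Delta_V=n^k$, and the three expressions $m^{1+1/k}/T$, $m/T^{1-1/k}$, $m\Delta_V^{1/k}/T$ are simultaneously $\Theta(n)$; since the theorem is stated as a minimum, this single family already rules out any algorithm guaranteeing $o\bigl(\min\{\cdot\}\bigr)$ space, so your separate parameter regimes---and in particular the apex-replication step for a prescribed $\Delta_V$, which you rightly identify as the fragile part (it also cannot push $\Delta_V$ below $T^{1-1/k}$ without splitting base vertices)---are not needed, though your regimes~1 and~2 are correct and each suffices on its own. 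Your secondary worry about the ``$1$'' in the $\Delta_E$ slot is likewise not a gap relative to the paper: the paper's own hard instances have $\Delta_E=\Theta(n)=\Theta(T^{1/k})$ (each face through the intersection apex lies in $n$ simplices), so that constraint is not actually enforced by the paper's proof either.
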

\begin{proof}
  We reduce from the $\udisj_n^{n/4}$ communication problem, with Alice's and
  Bob's inputs being $\bx,\by \in \b^n$. The constructed hypergraph ${H}$ has
  $(k+1)n$ vertices split into $(k+1)$ groups of $n$:
  \begin{align*}
    A_j &= \{a_1^j, \ldots, a_n^j\} \,, \quad\text{for each } j \in [k] \,; \\
    D &= \{d_1, \ldots, d_n\} \,.
  \end{align*}

  Corresponding to each index $i$ for which $x[i] = 1$, Alice inserts
  $n^{k-1}$ hyperedges into $H$: for each $(i_1,\ldots,i_{k-1}) \in
  [n]^{k-1}$, she inserts the hyperedge
  $\{d_i,a^1_{i_1},\ldots,a^{(k-1)}_{i_{k-1}}\}$.

  Bob does something more complicated. Corresponding to each index $i$ for
  which $y[i] = 1$, he adds $(k-1)n^{k-1}$ hyperedges, namely $\big\{d_i,
  a^1_{i_1},\ldots,a^{j-1}_{i_{j-1}},a^{j+1}_{i_{j+1}},\ldots,a^k_{j_k}\big\}$
  for $j=1,2,\ldots,(k-1)$. Additionally, Bob adds another $n^k$ hyperedges of
  the form $\{a^1_{i_1},\ldots,a^k_{i_k}\}$ to ${H}$: note that these are
  independent of $\by$. 
  A rough sketch of this construction, for $k=3$, is shown in
  \Cref{fig:lower_bounds_explanations}.

  \begin{figure}[tb] \centering
    \includegraphics[width=0.6\textwidth]{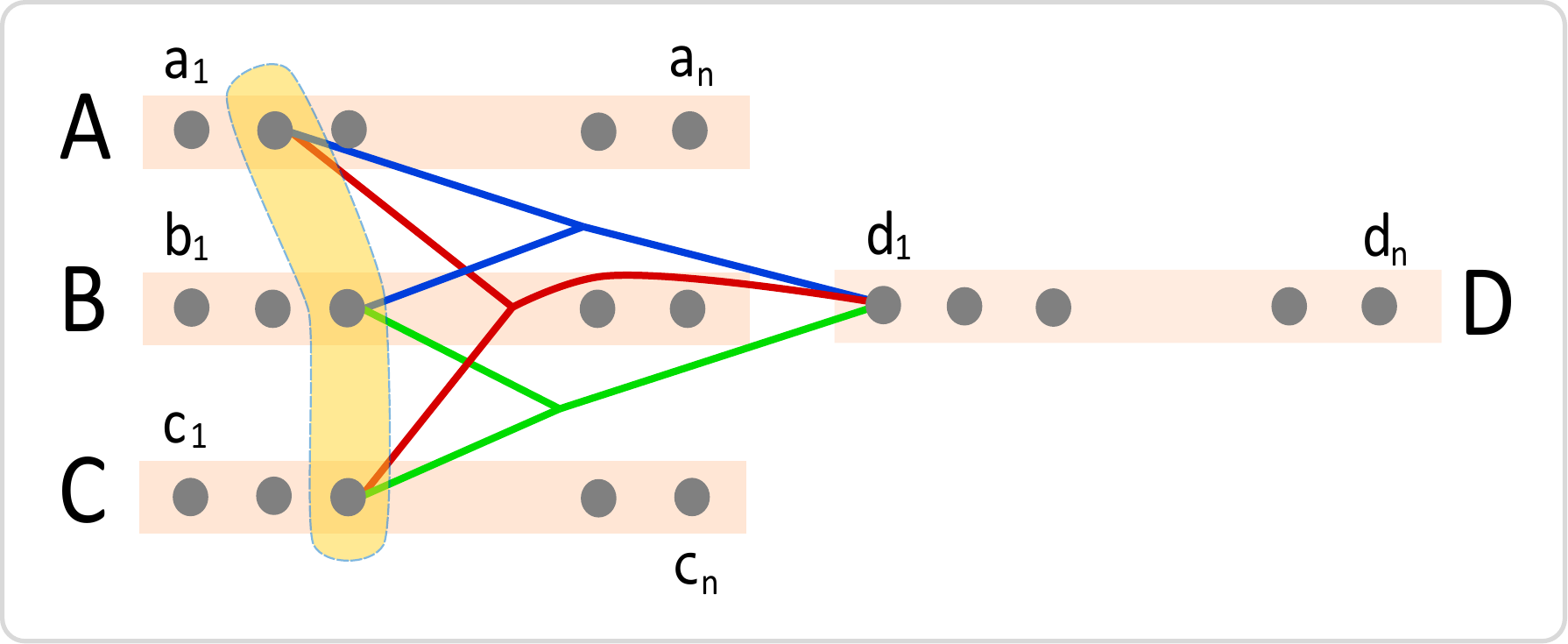}
    \caption{%
    The reduction in the proof of \Cref{thm:lb-main}, with $k=3$, $A_1 =A$,
    $A_2 = B$ and $A_3 = C$. The set $\{d_1,a_2,b_3,c_3\}$ is a simplex in $H$
    because $1 \in \bx \cap \by$, corresponding to vertex $d_1$. Alice inserts
    the blue edge $\{d_1,a_2,b_3\}$ while encoding $\bx$, while Bob inserts
    the red and green edges while encoding $\by$, plus the data-independent
    yellow edge. Similarly, $\{a_i,b_j,c_k,d_1\}$ are simplices for all $i,j,k
    \in [n]$.%
    }
    \label{fig:lower_bounds_explanations}
  \end{figure}

  In the end, ${H}$ has $m = \Theta(n^k)$ edges. Further, $T_k(H)$ is exactly $n^k$ times 
  the size of the intersection $|\bx \cap \by|$. Because of the unique intersection
  promise, we either have $T_k(H) = 0$ or $T_k(H) = T := n^k$, which proves the correctness
  of the reduction. Moreover, when $T_k(H) \ne 0$, we have $\Delta_V = n^k$.

  If there existed an algorithm $\mathcal{A}$ for the $\SimplexDist$ task guaranteed to use
  only $o(m^{1+1/k}/T)$ bits of memory, the resulting communication protocol for $\udisj_n^{n/4}$
  would use $o(n^{k+1} / n^k) = o(n)$ bits of communication, a contradiction.
  For the same reason, $\mathcal{A}$ could not have guaranteed a space bound of
  $o(m/T^{1-1/k})$ bits, nor $o(m\Delta_V^{1/k}/T)$ bits.
\end{proof}

\paragraph{A Lower Bound in Terms of the Approximation Parameter.}
Finally, we show that the inverse quadratic dependence on the approximation
parameter $\eps$, which appears in all our algorithms (and is indeed ubiquitous
in the literature on streaming algorithms) is necessary.

\begin{theorem} \label{thm:lb-eps}
  The task $\SimplexSep_k(n,m,T,\eps)$ requires $\Omega(\eps^{-2}/p)$ space
  using $p$ passes, even when $n$ and $m$ are functions of $\eps$ alone.
\end{theorem}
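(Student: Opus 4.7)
The plan is to reduce from the $\gapdisj_{N,N/2,\sqrt{N}}$ problem, for which the $\Omega(N)$ randomized communication lower bound is available via the paper's preliminaries. We choose $N = \Theta(\eps^{-2})$, so a successful reduction will yield an $\Omega(\eps^{-2})$ lower bound on total communication; the standard translation between $p$-pass $S$-space streaming algorithms and $(2p-1)$-round protocols with $S$-bit messages then turns this into the claimed $\Omega(\eps^{-2}/p)$ space lower bound.

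For the hypergraph gadget, I would associate to each index $i \in [N]$ a pairwise-disjoint block of $k+1$ fresh vertices $V_i = \{v_1^{(i)}, \ldots, v_{k+1}^{(i)}\}$, together with the $k+1$ potential hyperedges $e_j^{(i)} := V_i \setm \{v_j^{(i)}\}$ that would complete a $k$-simplex at $V_i$. Alice, holding $\bx \in \b^N$ with set representation $S$, streams $e_1^{(i)}$ whenever $x_i=1$; Bob, holding $\by$ with set representation $T$, streams $e_2^{(i)}, \ldots, e_{k+1}^{(i)}$ whenever $y_i = 1$. Because the vertex blocks $V_i$ are disjoint, no hyperedge can lie in more than one block, and a $k$-simplex must occur entirely within one $V_i$; thus the simplex at $V_i$ appears in $H$ iff $x_i = y_i = 1$, yielding exactly $T_k(H) = |S \cap T|$. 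Note that $n = (k+1)N$ and $m \le (k+1)N$ depend only on $\eps$, as required.

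Setting the threshold to $T := N/2 + \sqrt{N}/2$, the two cases of the gap promise become $T_k(H) \le N/2 = T - \sqrt{N}/2$ and $T_k(H) \ge N/2 + \sqrt{N} = T + \sqrt{N}/2$. Choosing $N = \lceil c/\eps^2 \rceil$ for a suitable constant $c$, we have $\sqrt{N}/(2T) = \Theta(1/\sqrt{N}) \ge \eps$, so these two cases are separated by a $(1\pm\eps)$-multiplicative gap around $T$. Hence a $p$-pass $S$-space algorithm for $\SimplexSep_k(n,m,T,\eps)$ would decide $\gapdisj_{N,N/2,\sqrt{N}}$ using total communication $O(pS)$, forcing $pS = \Omega(N) = \Omega(\eps^{-2})$ and therefore $S = \Omega(\eps^{-2}/p)$.

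There is no hard obstacle here; the only things to be careful about are (i)~confirming the gap-disjointness communication bound (which is quoted in \Cref{sec:lb-prelim}) and (ii)~verifying that the vertex-disjoint blocks $V_i$ genuinely prevent ``spurious'' simplices from forming across blocks, so that $T_k(H)$ matches $|S \cap T|$ exactly rather than merely approximately. Both points are immediate from the construction.
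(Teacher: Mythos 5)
Your proposal is correct, and it follows the same top-level strategy as the paper: reduce from $\gapdisj_{N,N/2,\sqrt{N}}$ with $N = \Theta(\eps^{-2})$, invoke the $\Omega(N)$ communication bound quoted in \Cref{sec:lb-prelim}, and convert via the standard $p$-pass simulation to get $S = \Omega(\eps^{-2}/p)$. Where you differ is the gadget. The paper does not build a new construction at all: it reuses the hypergraph from the proof of \Cref{thm:lb-main}, in which each common index of $\bx$ and $\by$ contributes exactly $n^k$ simplices, so $T_k(H) = n^k\,|\bx \cap \by|$ and the gap of $\sqrt{n}$ in intersection size becomes an additive gap of $n^k\sqrt{n}$ in simplex count, matched to $\eps = C/\sqrt{n}$. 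Your construction instead uses $N$ vertex-disjoint blocks of $k+1$ vertices, one potential simplex per index, giving $T_k(H) = |S \cap T|$ with multiplicity one; your check that no simplex can straddle two blocks (some $k$-subset of any straddling $(k+1)$-set meets two blocks and hence is not an edge) is what makes the count exact, and your threshold $T = N/2 + \sqrt{N}/2$ with $N = \Theta(\eps^{-2})$ correctly places the two promise cases outside $(1\pm\eps)T$. The trade-off: your gadget is more elementary and self-contained, with $n, m = \Theta(\eps^{-2})$ and $T = \Theta(m)$, whereas the paper's reuse of the \Cref{thm:lb-main} construction is shorter to state (no new correctness argument needed) and exhibits the $\Omega(\eps^{-2})$ bottleneck even in the simplex-dense regime $T = \Theta(m^{(k+1)/k})$. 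Both instantiations satisfy the theorem as stated ($n$ and $m$ functions of $\eps$ alone, treating $k$ as a constant), so the only remaining nit is cosmetic: you want the constant in $N = \Theta(\eps^{-2})$ chosen small enough that $\eps \le 1/(\sqrt{N}+1)$, which you have already noted.
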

\begin{proof}
  We reduce from $\gapdisj_{n,n/2,\sqrt{n}}$.
  We repeat the hypergraph construction in the proof of \Cref{thm:lb-main}.
  As observed in that proof, the number of simplices $T_k(H)$ in that 
  hypergraph $H$ is exactly $n^k$ times the intersection size $|\bx \cap \by|$.
  Therefore, if $\eps$ and $n$ are related by $\eps = C/\sqrt{n}$,
  for a suitable constant $C$,
  then an algorithm for $\SimplexSep_k(n,m,T,\eps)$ solves $\gapdisj_{n,n/2,\sqrt{n}}$.
  The proof is completed by invoking known communication lower bounds.
\end{proof}

\section{Targeted-Sampling Algorithms Based on Hyperarboricity} \label{sec:subopt}

In this section, we discuss in more detail the more straightforward, but
trickier to analyze, targeted-sampling algorithm that was outlined in
\Cref{sec:tech-abundant}. It runs in space $\tO(m^{2-1/k}/T)$, which is
suboptimal, but still sublinear once $T$ is large enough. 

The algorithm, formalized as \Cref{alg:global-deg}, is similar to
\Cref{alg:abundant} in its general setup. The broad framework of its analysis
is also similar to what went before. However, two key insights required in the
analysis use mathematical tools from extremal hypergraph theory. Specifically,
we use the idea of hyperforest packing~\cite{FrankKK03} to define a new
quantity that we call the {\em hyperarboricity} of a hypergraph. We proceed to
give a generally applicable bound on this quantity and use it to bound the
space usage of our algorithm. We feel that the techniques and tools developed
here may be of independent interest and value.

\subsection{Definitions and Preliminaries}

For $F \subseteq E$, let $V(F)$ be the set of all vertices in $V$ that belong
to some hyperedge in $F$. For $S \subseteq V$, let $E(S)$ be the set of
hyperedges contained completely within $S$, i.e., the set of hyperedges {\em
induced} by $S$. 

Recall \cite{Bretto-book} that every hypergraph can be represented by a
bipartite graph. The {\em bipartite representation} of $H$ is the bipartite
graph $G_{H} = (V;E,R)$, where the edge $r = \{v,E\} \in R $ exists between a
vertex $v$ and a hyperedge $E$ of $H$ if $v \in E$. This $G_H$ construction
inspires the following definition.

In graph theory, a \emph{forest} is an acyclic graph. Equivalently, forests
are exactly the graphs in which for every subset $X$ of the edge set, the
number of incident vertices is strictly greater than $|X|$. In general
hypergraphs, keeping the bipartite representation in mind, we refer to this
property as the \emph{strong Hall property}. We shall use the corresponding
definition of ``hyperforests'' \cite{FrankKK03,Lovasz68,Lovasz70}.

\begin{definition}[Strong Hall property, hyperforest] \label{def:strong_hall}
  A hypergraph $H = (V,E)$ satisfies the strong Hall property if, for any
  non-empty subset $X \subseteq E$, we have that $|V(F)| > |F|$.
  Equivalently, for any $S \subseteq V$, we have that $|E(S)| < |S|$. 
  A hypergraph $H$ satisfying the strong Hall property is called a
  \emph{hyperforest}.
\end{definition}

Note that, unlike for usual graphs (i.e., $2$-graphs), acyclicity and the
strong Hall property do not coincide.

The arboricity of an ordinary graph is the minimum number of forests that its
edge set can be partitioned into. The above notion of hyperforest provides a
natural generalization of this quantity to hypergraphs.

\begin{definition}[Hyperarboricity]
  The {\em hyperarboricity} of a hypergraph $H$, denoted $\rho(H)$, is the 
  minimum number of edge-disjoint hyperforests that $E(H)$ can be decomposed 
  into. 
\end{definition}

Seeking a generalization of the classic forest-packing theorems of Tutte and
Nash-Williams \cite{ChenMWZZ94}, Frank, Kiraly, and Kriesell~\cite{FrankKK03}
use matroids to give the following necessary and sufficient condition for the
decomposition of a hypergraph into edge-disjoint sub-hyperforests.

\begin{lemma}[Hyperforest packing; Theorem 2.10 of~\cite{FrankKK03}]
\label{lem:hyperforest-packing}
  The edge-set $E$ of a hypergraph $H = (V,E)$ can be decomposed into $k$
  hyperforests iff $|E(X)| \leq k(|X|-1)$ for every non-empty
  subset $X \subseteq V$. \qed
\end{lemma}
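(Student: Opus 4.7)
The easy direction is the forward implication. If $E$ decomposes as $F_1 \sqcup \cdots \sqcup F_k$ with each $F_i$ a hyperforest, then for any non-empty $X \subseteq V$, applying the strong Hall inequality of \Cref{def:strong_hall} to the subfamily $Y_i := F_i \cap E(X) \subseteq F_i$ yields $|Y_i| < |V(Y_i)| \le |X|$, hence $|Y_i| \le |X|-1$. Summing over $i \in [k]$ gives $|E(X)| = \sum_i |Y_i| \le k(|X|-1)$. No hypergraph-specific machinery is needed here.

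The reverse implication is the substantive part, and the plan is to reduce it to Edmonds' matroid partition theorem. The first step is to show that the hyperforests of $H$ are exactly the independent sets of a matroid $\cM$ on the ground set $E$. The set function $f(X) := |X| - 1$ for $X \ne \varnothing$ (with $f(\varnothing) := 0$) is non-negative and submodular on $2^V$, and the associated ``count'' construction --- declaring $F \subseteq E$ to be independent iff $|F'| \le f(V(F'))$ for every non-empty $F' \subseteq F$ --- is a classical matroid construction (going back to Edmonds and developed further for hypergraphs and rigidity by Whiteley and Lovász). Unpacking, $F$ is independent in $\cM$ precisely when $|F'| \le |V(F')| - 1$ for every non-empty $F' \subseteq F$, i.e., precisely when $F$ is a hyperforest.

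The second step is to invoke Edmonds' matroid partition theorem: $E$ can be partitioned into $k$ independent sets of $\cM$ iff $|S| \le k \cdot r_{\cM}(S)$ for every $S \subseteq E$. For the matroid $\cM$ just constructed, polymatroid theory gives the rank formula $r_{\cM}(S) = \min_{\mathcal{P}} \sum_{i} (|V(S_i)| - 1)$, where $\mathcal{P}$ ranges over partitions $S = S_1 \sqcup \cdots \sqcup S_t$ (subject to $V(S_i) \ne \varnothing$). Fixing an optimal such partition and applying the hypothesis $|E(X)| \le k(|X|-1)$ to each $X = V(S_i)$ gives
\[
  |S| \;=\; \sum_i |S_i| \;\le\; \sum_i |E(V(S_i))| \;\le\; k \sum_i (|V(S_i)| - 1) \;=\; k \cdot r_{\cM}(S),
\]
which is exactly the matroid partition condition, so the theorem delivers the desired decomposition of $E$ into $k$ hyperforests.

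The main obstacle is the rank formula for $\cM$ used in the second step. Verifying both that the count condition indeed yields a matroid and that its rank admits the min-partition expression above rests on the submodularity of $f$ together with Edmonds' polymatroid rank theorem; these are what do most of the work. Once this rank formula is in hand, the combinatorial hypothesis plugs in cleanly, and the remainder is bookkeeping.
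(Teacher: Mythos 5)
You are not really competing with a proof in the paper here: the paper imports this statement verbatim as Theorem~2.10 of \cite{FrankKK03} (hence the bare \qed), so the only meaningful comparison is with that source's matroidal route, which your sketch essentially reconstructs (hypergraphic/count matroid plus the matroid partition theorem). Your forward direction is fine, modulo noting that \Cref{def:strong_hall} is applied only to the nonempty $Y_i$ (the empty case is trivial since $|X|\ge 1$). The genuine problem is the rank formula in the reverse direction: $r_{\cM}(S) = \min_{\mathcal{P}} \sum_i \left(|V(S_i)|-1\right)$ with $\mathcal{P}$ ranging over \emph{partitions} of $S$ is false for $k$-uniform hypergraphs with $k \ge 3$, which is exactly the regime of this paper. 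Take $S = \{e\}$ a single hyperedge: $\{e\}$ is a hyperforest, so $r_{\cM}(S) = 1$, yet the only partition gives $|V(e)|-1 = k-1$. Edmonds' rank formula for the matroid induced by the intersecting submodular function $F' \mapsto |V(F')|-1$ is a minimum over \emph{subpartitions} $\{A_1,\ldots,A_t\}$ of $S$ and carries an extra term for the uncovered elements, namely $r_{\cM}(S) = \min \bigl( |S \setminus \bigcup_i A_i| + \sum_i (|V(A_i)|-1) \bigr)$; in the graphic case $k=2$ the singleton cost $|V(e)|-1 = 1$ happens to equal the leftover cost $1$, which is why the pure-partition version is correct there and easy to misremember. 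As written, your chain of inequalities bounds $|S|$ by $k$ times an over-estimate of the rank, which does not verify the hypothesis $|S| \le k\, r_{\cM}(S)$ of the matroid partition theorem, so the step does not close.

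The repair is immediate, though, and preserves your structure: let $\{A_i\}$ be a subpartition attaining the correct rank formula, with leftover $L = S \setminus \bigcup_i A_i$. Then $|S| = |L| + \sum_i |A_i| \le k|L| + \sum_i |E(V(A_i))| \le k|L| + k\sum_i (|V(A_i)|-1) = k\, r_{\cM}(S)$, using the hypothesis on each $X = V(A_i)$ and $k \ge 1$ on the leftover term; the matroid partition theorem then yields the decomposition exactly as you intended. With that correction (and an explicit citation or proof for the fact that the count construction is a matroid with this rank function, which is the real content being black-boxed), your argument is a valid proof of the lemma, along the same lines as the reference the paper cites.
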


Using this, we obtain an extremal condition that characterizes the
hyperarboricity concept.

\begin{lemma} \label{lem:hyperarb-ratio}
  For every hypergraph $H$,
  \begin{align*}
    \rho(H) = \max_{H' = (V',E') < H}\left\lceil\frac{|E'|}{|V'|-1}\right\rceil \,,
  \end{align*}
  where ``$<$'' denotes the sub-hypergraph relation.
\end{lemma}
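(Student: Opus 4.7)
The plan is to prove both inequalities $\rho(H) \ge \rho^*(H)$ and $\rho(H) \le \rho^*(H)$, where $\rho^*(H) := \max_{H' = (V',E') < H}\lceil |E'|/(|V'|-1) \rceil$. The upper bound on $\rho(H)$ will follow almost immediately from the Frank--Kir\'aly--Kriesell packing theorem (\Cref{lem:hyperforest-packing}), and the lower bound will follow from the hereditary nature of the strong Hall property.

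For the direction $\rho(H) \le \rho^*(H)$, let $k := \rho^*(H)$. For every non-empty $X \subseteq V$, the sub-hypergraph $H[X] = (X, E(X))$ is an element of the family indexed by ``$< H$'', so by definition of $\rho^*(H)$, $\lceil |E(X)|/(|X|-1) \rceil \le k$, which gives $|E(X)| \le k(|X|-1)$ (this inequality also holds trivially when $|X| = 1$, since then $E(X) = \varnothing$). \Cref{lem:hyperforest-packing} then furnishes a decomposition of $E$ into $k$ edge-disjoint hyperforests, witnessing $\rho(H) \le k$.

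For the direction $\rho(H) \ge \rho^*(H)$, suppose $E = F_1 \sqcup \cdots \sqcup F_\ell$ is a decomposition into $\ell = \rho(H)$ edge-disjoint hyperforests. I need to show $\ell \ge \lceil |E'|/(|V'|-1) \rceil$ for every sub-hypergraph $H' = (V', E') < H$. The key observation is that the strong Hall property is \emph{hereditary}: if $F_i$ is a hyperforest and we restrict to edges lying entirely within $V'$, the resulting hypergraph still satisfies \Cref{def:strong_hall} (any subset of its edge set is a subset of the edges of $F_i$). In particular, taking the whole restricted edge set, the sub-hyperforest on $V'$ has at most $|V'| - 1$ edges. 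Summing across the $\ell$ hyperforests, $|E'| \le \sum_{i=1}^\ell |F_i \cap E'| \le \ell (|V'|-1)$, so $\ell \ge \lceil |E'|/(|V'|-1) \rceil$. Taking the maximum over $H' < H$ yields $\rho(H) \ge \rho^*(H)$.

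The two main ingredients are thus a direct citation of \Cref{lem:hyperforest-packing} and the hereditary property of hyperforests, neither of which should present serious difficulty. The only mild subtlety is the $|V'|=1$ edge case in the first direction, which is handled by the convention that $E' = \varnothing$ when $V'$ is a singleton (so the ceiling expression is $0$ and the condition is vacuous); one should simply restrict the max to $|V'| \ge 2$ or interpret the ratio as $0/0 = 0$ in that case to make \emph{both} sides of the identity well-defined.
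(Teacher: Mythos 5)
Your proof is correct, and it rests on the same key ingredient as the paper's: the Frank--Kir\'aly--Kriesell packing theorem (\Cref{lem:hyperforest-packing}). The paper's own proof is terser: it only writes out the direction $\rho(H) \ge \max_{H'}\lceil |E'|/(|V'|-1)\rceil$, by noting that the existence of a decomposition of $E$ into $\rho(H)$ hyperforests forces $|E(X)| \le \rho(H)(|X|-1)$ for every $X$, and it leaves the reverse inequality to the ``iff'' in \Cref{lem:hyperforest-packing} --- which is exactly the argument you spell out for your $\rho(H) \le \rho^*(H)$ direction. Where you genuinely differ is in the lower-bound direction: instead of invoking the packing theorem again, you argue directly from \Cref{def:strong_hall} that the strong Hall property is hereditary, so each hyperforest contributes at most $|V'|-1$ edges inside $V'$, and summing over the $\rho(H)$ hyperforests gives the bound. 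This buys a more elementary and self-contained treatment of that direction (only the definition of hyperforest is needed there), at the cost of a slightly longer write-up; your handling of the degenerate $|V'|=1$ case is a reasonable convention that the paper simply glosses over. No gaps.
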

\begin{proof}
  Let $V' \subseteq V$ and consider an arbitrary sub-hypergraph $H' = (V',E')$
  of the induced (by $V'$) sub-hypergraph $H'' = (V',E'') < H$. Then, $|E'|
  \leq |E''|$, so by \Cref{lem:hyperforest-packing},
  \[
    \rho(H) \geq \frac{|E''|}{|V''|-1} \geq \frac{|E'|}{|V'|-1} \,.
  \]
  Since $\rho(H)$ is an integer, we arrive at our conclusion.
\end{proof}

\subsection{A General Upper Bound on Hyperarboricity}
\label{sec: hyperarboricity_bounds}

In analyzing our eventual algorithm, a key portion of the combinatorial
argument will need to bound a sum of the form $s_1$, shown in
\cref{eq:sum-min-deg}. A classic result of Chiba and Nishizeki~\cite{ChibaN85}
bounds this sum in the special case when $H$ is an ordinary graph. For our
$k$-uniform hypergraph generalization, we start with the following useful
upper bound on the hyperarboricity of such a hypergraph.

\begin{lemma} \label{lem:hyperarb-bound}
  The hyperarboricity $\rho(H)$ of an $m$-edge $k$-graph $H$ is bounded by
  $\rho(H) = O\left(m^{(k-1)/k}\right)$.
\end{lemma}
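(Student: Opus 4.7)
The plan is to reduce the claim to an extremal inequality via \Cref{lem:hyperarb-ratio} and then do a short two-case analysis on the ratio $|E'|/(|V'|-1)$ ranging over sub-hypergraphs $H' = (V',E') < H$.

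By \Cref{lem:hyperarb-ratio}, it suffices to show that for every sub-hypergraph $H' = (V',E')$ with $|V'| \geq 2$, writing $n' = |V'|$ and $m' = |E'|$, we have $m'/(n'-1) = O(m^{(k-1)/k})$. (The case $n' = 1$ is vacuous for $k \geq 2$, since then $m' = 0$.) Two simple upper bounds on $m'$ are in play and they pull in opposite directions: from $H' < H$ we get $m' \le m$, while from $k$-uniformity on $n'$ vertices we get $m' \le \binom{n'}{k} \le (n')^k/k!$. The optimal balance between them sits near $n' \approx m^{1/k}$, which is precisely where the claimed bound $m^{(k-1)/k}$ emerges.

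The argument itself is a threshold split at $n' - 1 = m^{1/k}$. If $n' - 1 \ge m^{1/k}$, then I use the ``global'' bound $m' \le m$ to obtain
\[
  \frac{m'}{n'-1} \;\le\; \frac{m}{m^{1/k}} \;=\; m^{(k-1)/k}.
\]
If instead $n' - 1 < m^{1/k}$, I use the ``local'' bound $m' \le (n')^k/k!$ together with $n'/(n'-1) \le 2$ (valid for $n' \ge 2$) to obtain
\[
  \frac{m'}{n'-1} \;\le\; \frac{(n')^k}{k!\,(n'-1)} \;\le\; \frac{2(n')^{k-1}}{k!} \;\le\; \frac{2\bigl(m^{1/k}+1\bigr)^{k-1}}{k!} \;=\; O\bigl(m^{(k-1)/k}\bigr),
\]
where $k$ is treated as a constant. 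Taking the maximum over both cases, and the ceiling from \Cref{lem:hyperarb-ratio}, yields $\rho(H) = O(m^{(k-1)/k})$.

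There is no real obstacle here; the only mild subtlety is remembering that the maximum in \Cref{lem:hyperarb-ratio} is taken over arbitrary sub-hypergraphs (not just induced ones), so I cannot dispense with the uniformity bound by assuming $H'$ has ``few'' vertices relative to $m$. The two-bound case split handles this cleanly, and constants hidden in $O(\cdot)$ depend on $k$ but not on $m$ or $n$.
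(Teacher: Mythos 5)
Your proof is correct and takes essentially the same route as the paper: both reduce to \Cref{lem:hyperarb-ratio} and balance the two bounds $|E'|\le m$ and $|E'|\le\binom{|V'|}{k}$ via a case split at $|V'|\approx m^{1/k}$. If anything, your execution is slightly cleaner, since the paper's large-subgraph case passes through $\rho(H)=O(p^{k-1})$ and the assumption $n=O(m)$, which your direct use of $|E'|\le m$ avoids.
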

\begin{proof}
  Invoking \Cref{lem:hyperarb-ratio}, we let $H' < H$ be a sub-hypergraph
  satisfying
  \[
    \rho(H) = \left\lceil\frac{|E(H')|}{|V(H')|-1}\right\rceil \,.
  \]
  Let $p = |V(H')|$, $q = |E(H')|$ and let $s = \binom{p}{k}$ be the number of
  edges on a complete $k$-uniform hypergraph with $p$ vertices. We make two
  algebraic observations.
  \begin{itemize}
    \item First, we observe that $p = O((s+p)^{1/k})$. 
    This is because $s \geq Cp^k \geq Cp^k -p$ for some constant $C$, which
    implies that $s+p = \Omega(p^k)$.
    \item Second, we observe that $k!\cdot s = O((p-1)^k)$. 
    This holds because $k!\cdot s = p(p-1)\cdots(p-k+1) \leq p(p-1)^{k-1} =
    (p-1)^k + (p-1)^{k-1} = O((p-1)^k)$.
  \end{itemize}
  We now branch into two cases.

  In the first case, $s\leq m$. Then, we have
  \begin{align*}
    \rho(H) = \left\lceil\frac{q}{p-1}\right\rceil
    \leq \left\lceil\frac{s}{p-1}\right\rceil
    &\leq \left\lceil\frac{p(p-2)\cdots(p-k+1)}{k!} \right\rceil \\
    &\leq \left\lceil\frac{p^{k-1}}{k!}\right\rceil
    = O(p^{k-1}) \,.
  \end{align*}
  By the first observation, $p = O((s+p)^{1/k})$, giving
  \[
    \rho(H) = O\left((s+p)^{(k-1)/k}\right)
    = O\left((m+n)^{(k-1)/k}\right)
    = O\left(m^{(k-1)/k}\right)
  \]
  because $s \leq m, p\leq n$ and $n = O(m)$.

  In the second case, $s \geq m$. By the second observation above, we get:
  \begin{align*}
     \rho(H) 
     = \left\lceil\frac{q}{p-1}\right\rceil 
     &\le \left\lceil\frac{m}{p-1}\right\rceil\\
     &\stackrel{m \leq s}{\leq} \left\lceil\frac{ms^{\frac{1}{k-1}}}{(p-1)^{\frac{k}{k-1}}}\right\rceil^{\frac{k-1}{k}} \\
     &\leq(k!)^{-\frac{1}{k}}\left( \frac{m(p-1)^{\frac{k}{k-1}}}{(p-1)^{\frac{k}{k-1}}}\right)^{\frac{k-1}{k}}
     = O\left(m^{\frac{k-1}{k}}\right) \,. \qedhere
  \end{align*}
\end{proof}

As a corollary to the above bound on hyperarboricity, we obtain the following 
combinatorial fact, which will be useful in our algorithmic analysis later on.

\begin{lemma} \label{thm:sum_min_deg_bound}
  Let $H = (V,E)$ be a $k$-uniform hypergraph with $|E| = m$ and $|V| = n = O(m)$. Then:
  \begin{align}
    \sum_{\{u_1,\ldots,u_k\}\in E} \min\{\deg(u_1),\ldots,\deg(u_k)\} \leq km\rho(H) = O(m^{2-1/k}) \,.
  \end{align}
\end{lemma}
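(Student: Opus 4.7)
The plan is to generalize the classical Chiba--Nishizeki argument for graphs, using \Cref{lem:hyperarb-bound} as the quantitative input. First I would decompose $E$ into $\rho := \rho(H)$ edge-disjoint hyperforests $F_1, \ldots, F_\rho$; such a decomposition exists by the definition of hyperarboricity.

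For each hyperforest $F_j$, I would construct an injection $\phi_j \colon F_j \to V$ with $\phi_j(e) \in e$ for every $e \in F_j$. This is where the strong Hall property built into \Cref{def:strong_hall} is used directly: consider the bipartite graph with $F_j$ on one side, $V$ on the other, and an edge between $e$ and $v$ iff $v \in e$. Because $F_j$ is a hyperforest, every non-empty $S \subseteq F_j$ satisfies $|V(S)| > |S|$, which is Hall's marriage condition; therefore a matching saturating $F_j$ exists, and $\phi_j$ is obtained by mapping each hyperedge to its matched vertex.

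Given any $e \in E$, let $j(e)$ denote the unique index with $e \in F_{j(e)}$ and set $v(e) := \phi_{j(e)}(e) \in e$. Since $v(e)$ is one of the vertices of $e$, we have $\deg(v(e)) \ge \min_{u \in e} \deg(u)$, so
\[
  \sum_{e \in E} \min_{u \in e} \deg(u) \;\le\; \sum_{e \in E} \deg(v(e)) \;=\; \sum_{v \in V} \deg(v) \cdot \big|\{e \in E : v(e) = v\}\big|.
\]
Injectivity of each $\phi_j$ implies that any vertex $v$ occurs as $v(e)$ for at most one hyperedge within each $F_j$, hence for at most $\rho$ hyperedges overall. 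Combined with the handshake identity $\sum_v \deg(v) = km$ (i.e., \Cref{lem:handshake} with $r=1$), this yields $\sum_{e \in E} \min_{u \in e} \deg(u) \le \rho \cdot km$. The $O(m^{2-1/k})$ bound then follows by substituting $\rho(H) = O(m^{(k-1)/k})$ from \Cref{lem:hyperarb-bound}.

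The only non-routine step is producing the injections $\phi_j$, and this is essentially immediate from the strong Hall property that defines a hyperforest; the heavy combinatorial lifting has already been done inside \Cref{lem:hyperarb-bound}. I therefore do not anticipate any real obstacle: the proof reduces to one application of Hall's theorem per hyperforest in the arboricity decomposition, followed by a double-counting tally and invocation of the hyperarboricity bound.
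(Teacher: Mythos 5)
Your proposal is correct and follows essentially the same route as the paper's proof: decompose $E$ into $\rho(H)$ hyperforests, use the strong Hall property with Hall's matching theorem to assign each hyperedge a distinct representative vertex within its hyperforest, bound each minimum degree by the representative's degree, and finish with the handshake lemma and \Cref{lem:hyperarb-bound}. Your double-counting of how often a vertex serves as a representative is just a mild repackaging of the paper's summation, so there is nothing further to add.
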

\begin{proof}
  Consider a decomposition of $H$ into $\rho(H)$ hyperforests $\cF_i = (V_i,
  E_i)$ for $i \in [\rho(H)]$. In each hyperforest $\cF_i$, we will associate
  each hyperedge $e \in E_i$ with a \emph{representative vertex} $v^{(i)}(e)
  \in E$ such that no two hyperedges in $E_i$ share a representative. We can
  find such a mapping from $E_i$ to $V_i$ if and only if $G_{\cF_i}$ has a
  perfect matching that saturates $E_i$. Since $\cF_i$ is a hyperforest,
  \Cref{def:strong_hall} and Hall's Matching Theorem ensure that a perfect
  matching saturating $E_i$ exists. Using \Cref{lem:hyperarb-bound} and the
  handshake lemma, we get:
  \begin{align*}
    \sum_{\{u_1,\ldots,u_k\}\in E}\min\{\deg(u_1),\ldots,\deg(u_k)\}
    &= \sum_{1\leq i \leq \rho(H)}\sum_{\{u_1,\ldots,u_k\} \in E_i}\min\{\deg(u_1),\ldots,\deg(u_k)\} \\
    &\le \sum_{1\leq i \leq \rho(H)}\sum_{e \in E_i} \deg(v^{(i)}(e)) \\
    &\le \sum_{1\leq i \leq \rho(H)}\sum_{v \in V}\deg(v) = km\rho(H) = O\left(m^{2-1/k}\right) \,. \qedhere
  \end{align*}
\end{proof}
\begin{remark}
  This bound is tight. Equality holds for complete $k$-graphs.
\end{remark}


\subsection{The Algorithm Based on Global Degree Ordering}

We are ready to give the details of our algorithm.  Having seen an outline of
the technique in \Cref{sec:tech-abundant} and a detailed description of a more
elaborate algorithm in a similar vein in \Cref{sec:ub-abundant}, we are prepared
to dive right in to the formal description. This is given in
\Cref{alg:global-deg}.  The main new piece of notation here is ``$\prec$''
which denotes a {\em total ordering} on $V$ according to the degrees of the
vertices, settling ties by comparing IDs. 

\begin{algorithm*}[!ht]
\begin{algorithmic}[1]
\Procedure{$k$-Simplex-Count-Easy}{$\sigma$}

\State \textbf{pass 1:} \Comment{$O(1)$ space}
\Indent
	\State pick an edge $e = \{u_1,u_2,\ldots,u_k\} \in E$ using reservoir sampling
\EndIndent
\State \textbf{pass 2:} \Comment{$O(1)$ space}
\Indent
	\State calculate the degrees $\deg(u_1), \deg(u_2),\ldots,\deg(u_k)$
\EndIndent
\State \textbf{pass 3:} \Comment{$\tO(R)$ space}
\Indent
	\State $R \gets \left\lceil \min\{\deg(u_1), \deg(u_2),\ldots,\deg(u_k)\}\cdot m^{-1/2} \right\rceil$
	\State re-arrange (if needed) $u_1,u_2,\ldots,u_k$ so that $u_1 \prec u_2 \prec \cdots \prec u_k$
	\For {$j=1$ to $R$, in parallel,}
		\State $Z_j \gets 0$
		\State construct a virtual stream $\sigma_j$ of the vertices in 
                        $\Nhd(u_1) \setm \{u_2,\ldots,u_k\}$ \label{line:virtual-stream}
		\State sample vertex $x_j$ from $\Nhd(u_1) \setm \{u_2,\ldots,u_k\}$ u.a.r.,
			using $\ell_0$-sampling on $\sigma_j$ \label{line:global-deg-l0-sample}
	\EndFor
	\State in parallel, approximate $|\Nhd(u_1)|$ using $F_0$-estimation
\EndIndent
\State \textbf{pass 4:} \Comment{$O(R)$ space}
\Indent
	\State compute the degrees $\deg(x_1),\deg(x_2),\ldots,\deg(x_R)$
	\For {$j=1$ to $R$, in parallel,}
    	\If {$u_k \prec x_j$ \textbf{and} $\{u_1,u_2,\ldots,u_k,x_j\}$ form a $k$-simplex}
                \label{line:valid-label}
        	\State $Z_j \gets |\Nhd(u_1)|-k+1$ \Comment{$u_1$ has the smallest degree}
	    \EndIf
	\EndFor
\EndIndent
\vskip4pt
\State \Return $Y \gets (m/R) \sum_{j=1}^R Z_j$ \Comment{Average out the trials and scale}
\EndProcedure
\end{algorithmic}
\caption{Counting simplices in $k$-uniform hypergraph streams the easy (and suboptimal) way} \label{alg:global-deg}
\end{algorithm*}

One subtlety in this algorithm, which does not show up in the corresponding
(and much simpler) algorithm for graphs~\cite{BeraC17}, is the uniform
sampling of the vertex in the neighborhood of $u_1$
(lines~\ref{line:virtual-stream}--\ref{line:global-deg-l0-sample}). In graphs,
reservoir sampling suffices because sampling a neighbor of $u_1$ is equivalent
to sampling uniformly an edge that is incident on $u$. The same is not true
for hypergraphs, because sampling a hyperedge uniformly from $\Nhd(u_1)$ gives
a probabilistic advantage to vertices that share many hyperedges with $u_1$.
Thus, we need to use $\ell_0$-sampling techniques in a substream consisting of
neighbors of $u_1$.

A second caveat that appears is in the calculation of the size of the
neighborhood of $u_1$. Treating every edge containing $u_1$ as an element in a
virtual stream $\sigma_{u_1}$, we can calculate $|\Nhd(u_1)|$ by counting the
distinct elements in $\sigma_{u_1}$. This can be done either precisely, in
$O(n)$ space, or approximately in $\tO(1)$ space, using an $F_0$-estimation
algorithm.

\subsection{Analysis of the Algorithm} \label{sec:subopt-analysis}

Our analysis proceeds almost exactly like \Cref{sec:abundant-anal} The only
differences are the presence of $\ell_0$-sampling, $F_0$ estimation, and the
use of the upper bounds in \Cref{sec: hyperarboricity_bounds}.

\begin{remark}
  In the analysis below, we will assume that the algorithms for $\ell_0$
  sampling and $F_0$ estimation produce the desired outputs correctly. Any
  precision errors are assumed to be absorbed into the quality of the
  estimator $Y$. Our analysis does not examine the implications of this
  assumption in a more pedantic way so that we can focus on the salient points
  of the algorithm.
\end{remark}

\paragraph{Unbiased Estimator.}
First we prove that our estimator $Y$ is unbiased. Define the {\em label} of a
simplex $\Xi$ to be $(e,x)$ where $x$ is the $\prec$-maximal vertex in $\Xi$
and $e = \Xi \setm \{x\}$. For each $e \in E$, let $\nsimp_e$ denote the
number of simplices with a label of the form $(e,\cdot)$.  Let $\cE_e$ denote
the event that the hyperedge sampled in pass~1 is $e$. When $\cE_e$ occurs,
the $\ell_0$-sampling in \Cref{line:global-deg-l0-sample} of
\Cref{alg:global-deg} causes us to pick one of the $|\Nhd(u_1)|-k+1$ vertices
in $\Nhd(u_1) \setm e$ uniformly at random and exactly $\nsimp_e$ of these
vertices complete a valid simplex label with hyperedge $e$ (per the logic in
\cref{line:valid-label}).  Therefore, 
\begin{align}
  \EE[Z_k\mid{\cE}_{e}] = \nsimp_e\frac{1}{|\Nhd(u_1)|-k+1}(|\Nhd(u_1)|-k+1) = \nsimp_e \,,
\end{align}
and so, by the law of total expectation,
\begin{align}
  \EE[Y] = \frac{m}{R}\sum_{j=1}^R\sum_{e \in E}\frac{1}{m}\EE[Z_j \mid{\cE}_e]
  = \frac{1}{R}\sum_{j=1}^R \sum_{e \in E} \nsimp_e
  = T_k(H) \,.
\end{align}


\paragraph{Bounded variance.}
Now we attempt to bound the variance. Proceeding along similar lines as above,
\begin{align*}
  \EE[Z_j^2\mid \cE_e] &= (|\Nhd(u_1)|-k+1)\cdot \nsimp_e \\
  &\le |\Nhd(u_1)|\cdot \nsimp_e \,, \quad\text{and} \\
  \EE[Z_{j_1}Z_{j_2}\mid\cE_{e}] &= \nsimp_e^2 \,,
\end{align*}
because $Z_{j_1}$ and $Z_{j_2}$ are independent events even when conditioned
on $\cE_{e}$. So we have
\begin{align*}
  \EE[Y^2\mid\cE_e]
  &= \EE\left[\left(\frac{m}{R}\sum_{k=1}^R Z_k\right)^2\, \middle|\,
    \cE_{e}\right] \\
  &= \frac{m^2}{R^2}\sum_{j=1}^R \EE[Z_j^2 \mid \cE_{e}] +
    \frac{m^2}{R^2}\sum_{j_1\neq j_2} \EE[Z_{j_1}Z_{j_2} \mid \cE_{e}] \\
  &\leq \frac{m^2 |\Nhd(u_1)| \cdot \nsimp_e}{R} + \frac{m^2 R(R-1) \nsimp_e^2}{R^2} \\
  &\leq (k-1) m^{5/2} \nsimp_e + m^2 \nsimp_e^2
\end{align*}
where the final step uses $|\Nhd(u_1)| / R \le (k-1) \deg(u_1) / R \leq
(k-1)\sqrt{m}$, by the definition of $R$. Next, using the law of total
expectation, we obtain
\begin{align}
  \Var[Y] \leq \EE[Y^2]
  &= \sum_{e \in E} \frac{1}{m} \EE[Y^2\mid\cE_{e}] \notag\\
  &\leq (k-1) m^{3/2} \sum_{e \in E} \nsimp_e + m\sum_{e \in E} \nsimp_e^2 \notag\\
  & \leq (k-1) m^{3/2} T_k(H) + m\sum_{e \in E} \nsimp_e^2 \,.  \label{eq:var-bound-subopt}
\end{align}

At this point we need a bound on $\sum \nsimp_e^2$. To that end, we will use the following lemma.
\begin{lemma} \label{lem:eden_generalization}
  Let $S_{u_1}:=\{w \in \Nhd(u_1) \mid u_1\prec w\}$ be the set of neighbors of
  $u_1$ that are \emph{after} it in the total ordering of vertices based on
  degree. Then $|S_{u_1}| = O\left(\sqrt{m}\right)$.
\end{lemma}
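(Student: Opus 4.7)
The plan is a short double-counting argument, combining the handshake lemma with the bound on neighborhood size from Lemma~\ref{lem:deg-nhd}. The key observation is that the ordering $\prec$ is defined so that $u_1 \prec w$ forces $\deg(w) \ge \deg(u_1)$ (ties in degree are only broken by ID, which doesn't affect this inequality). So every vertex $w \in S_{u_1}$ satisfies $\deg(w) \ge \deg(u_1)$, which lets us lower-bound a sum of degrees over $S_{u_1}$ in terms of $\deg(u_1)$ and $|S_{u_1}|$.

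First I would write
\[
  |S_{u_1}| \cdot \deg(u_1) \;\le\; \sum_{w \in S_{u_1}} \deg(w) \;\le\; \sum_{v \in V} \deg(v) \;=\; km,
\]
where the last equality is the generalized handshake lemma (\Cref{lem:handshake} with $r=1$). This already gives $|S_{u_1}| \le km/\deg(u_1)$, which is small whenever $u_1$ has high degree.

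Next I would get a complementary bound for when $\deg(u_1)$ is small. Since $S_{u_1} \subseteq \Nhd(u_1)$, \Cref{lem:deg-nhd} (applied with $S = \{u_1\}$) gives $|S_{u_1}| \le |\Nhd(u_1)| \le (k-1)\deg(u_1)$, hence $\deg(u_1) \ge |S_{u_1}|/(k-1)$. Multiplying the two inequalities yields
\[
  |S_{u_1}|^2 \;\le\; (k-1)\,|S_{u_1}|\,\deg(u_1) \;\le\; k(k-1)\,m,
\]
so $|S_{u_1}| \le \sqrt{k(k-1)\,m} = O(\sqrt{m})$ for constant $k$, completing the proof.

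There is essentially no obstacle here: the only subtlety is to make sure that being after $u_1$ in the $\prec$-ordering genuinely implies the degree inequality $\deg(w) \ge \deg(u_1)$, which is true since ties are broken using vertex IDs rather than, say, in a way that could reverse the degree comparison. This lemma is the hypergraph analogue of the standard graph-theoretic fact used in \cite{EdenLRS15} (hence the label); the only change is the extra factor of $k-1$ from the fact that in a $k$-graph a single hyperedge incident to $u_1$ contributes up to $k-1$ neighbors rather than exactly one.
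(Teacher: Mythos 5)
Your proposal is correct and follows essentially the same argument as the paper: the degree inequality $\deg(w)\ge\deg(u_1)$ from the ordering, the bound $|S_{u_1}|\le|\Nhd(u_1)|\le(k-1)\deg(u_1)$ from \Cref{lem:deg-nhd}, and the handshake lemma combine to give $|S_{u_1}|^2 \le k(k-1)m$. The paper's proof chains these same three facts in the same way, so there is nothing further to add.
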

\begin{proof}
  We know that for each $w\in S_{u_1}$,
  \[
    \deg(w) \geq \deg(u_1) \geq \frac{|\Nhd(u_1)|}{k-1} 
    \geq \frac{|S_{u_1}|}{k-1} \,,
  \]
  Thus, by the handshake lemma, we have
  \begin{align*}
    km \geq \sum_{w\in S_{u_1} } \deg(w) 
    \geq |S_{u_1}|\left(\frac{|S_{u_1}|}{k-1}\right) \,,
  \end{align*}
  implying $|S_{u_1}| = O(\sqrt{m})$.
\end{proof}

Using \Cref{lem:eden_generalization}, we obtain that 
\begin{align}
  \sum_{e\in E} \nsimp_e^2 = O\left(T_k(H)\sqrt{m}\right)
  \label{eq:sum-square-bound}
\end{align} 
because, for each edge $e$, we have that $\nsimp_e \leq |S_{u_1}| =
O(\sqrt{m})$, where we use $S_{u_1}$ as in the proof of
\Cref{lem:eden_generalization}. Finally, plugging the
bound~\eqref{eq:sum-square-bound} into \cref{eq:var-bound-subopt}, we conclude
that
\begin{align}
  \Var[Y] = O\left(m^{3/2}T_k(H)\right) \,.
\end{align}

\paragraph{Space Complexity.}
Using the \emph{median-of-means} improvement, given as \Cref{lem:median-of-means}, we
can see that our algorithm uses space $O(m^{3/2} B / T)$ to output an $(\eps,
1/3)$-approximation to $T_k(H)$, where
\begin{align}
  B = O(\text{space}(F_0\text{-estimation}) + R \cdot \text{space}(\ell_0\text{-sampling}))
  \label{eq:b-formula}
\end{align}
is the space used by one instance of \Cref{alg:global-deg}. For simplicity,
assume that the $F_0$-estimation and $\ell_0$-sampling procedures function
perfectly (this does not affect the ultimate asymptotic results). Using known
results and hiding the dependence on accuracy parameters, we have
space$(F_0$-estimation$) = O(\log n)$ and space$(\ell_0$-sampling$) = O(\log^2
n)$. It remains to bound $\EE[R]$. We have that
\begin{align*}
  \EE[R] 
  &= \frac{1}{m} \sum_{\{u_1,\ldots,u_k\}\in E}
    \left\lceil \frac{\min\Big\{\deg(u_1),\ldots,\deg(u_k)\Big\}}{\sqrt{m}} \right\rceil \\
  &= O\left(m^{-3/2}\sum_{\{u_1,\ldots,u_k\}\in E}
    \min\Big\{\deg(u_1),\ldots,\deg(u_k)\Big\}\right) \,.
\end{align*}
This is where our work using hyperarboricity comes in. Using \Cref{thm:sum_min_deg_bound}, we get that
\[
  \EE[R] 
  = O\left(m^{2-\frac{1}{k}-\frac{3}{2}}\right) 
  = O\left(m^{\frac{1}{2}-\frac{1}{k}}\right) \,.
\]
Plugging this into \Cref{eq:b-formula}, we get
\begin{align*}
  \EE[B] 
  = O\left(\log n + m^{\frac{1}{2}-\frac{1}{k}}\log^2 n\right) 
  = \tO\left(m^{\frac{1}{2}-\frac{1}{k}}\right) \,.
\end{align*}

We have thus arrived at the following theorem.

\begin{theorem} \label{thm:ub-abundant-subopt}
  There is a $4$-pass streaming algorithm that $(\eps, \delta)$-approximates
  the number of simplices in a $k$-uniform hypergraph $H$ using
  $\tO\big(m^{2-1/k}/T\big)$ space, under the promise that
  $T_k(H) \geq T$. \qed
\end{theorem}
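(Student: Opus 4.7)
The plan is to prove the theorem by analyzing \Cref{alg:global-deg} via the same template used in \Cref{sec:abundant-anal}: establish an unbiased basic estimator, control its variance, boost via median-of-means, and then bound the expected per-trial space. The novelty will lie not in the estimator itself (which mimics the \cite{BeraC17}-style targeted sampling) but in the combinatorial ingredients needed to bound both the variance and the random reservoir size $R$.

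First, I would define, for each $e\in E$, the quantity $\nsimp_e$ as the number of simplices whose $\prec$-maximal vertex lies outside $e$ and whose remaining vertices form $e$. Conditioning on $\cE_e$ (the event that pass~1 picks $e$), the $\ell_0$-sample in \cref{line:global-deg-l0-sample} draws $x_j$ uniformly from $\Nhd(u_1)\setm e$, so $\EE[Z_j\mid\cE_e]=\nsimp_e$ and hence $\EE[Y]=\sum_e\nsimp_e=T_k(H)$. A routine second-moment expansion, using conditional independence of distinct $Z_j$'s given $\cE_e$ and the choice $R=\lceil\min_{u\in e}\deg(u)/\sqrt{m}\rceil$, yields the bound
\begin{align*}
\Var[Y]\le (k-1)m^{3/2}T_k(H)+m\sum_{e\in E}\nsimp_e^2.
\end{align*}

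Second, I need to tame the $\sum\nsimp_e^2$ term. Here the key combinatorial lemma (\Cref{lem:eden_generalization}) says that the set $S_{u_1}=\{w\in\Nhd(u_1):u_1\prec w\}$ has size $O(\sqrt m)$, which follows from the handshake lemma combined with the fact that every such $w$ has degree at least $\deg(u_1)\ge|\Nhd(u_1)|/(k-1)$. Since each $\nsimp_e$ is at most $|S_{u_1}|$, we get $\sum_e\nsimp_e^2=O(\sqrt m\cdot T_k(H))$ and therefore $\Var[Y]=O(m^{3/2}T_k(H))$.

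Third, applying median-of-means (\Cref{lem:median-of-means}) boosts one basic estimator to an $(\eps,\delta)$-estimate at a multiplicative cost of $\tO(\Var[Y]/\EE[Y]^2)=\tO(m^{3/2}/T)$. Each basic estimator stores $R$ parallel $\ell_0$-samplers plus an $F_0$-estimator, using $\tO(R)$ bits. So the total expected space is $\tO(m^{3/2}/T)\cdot\EE[R]$, reducing the problem to bounding
\begin{align*}
\EE[R]=\frac{1}{m}\sum_{e\in E}\left\lceil\frac{\min_{u\in e}\deg(u)}{\sqrt m}\right\rceil\le 1+m^{-3/2}\sum_{e\in E}\min_{u\in e}\deg(u).
\end{align*}

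This last sum is exactly the object controlled by our hyperarboricity machinery: \Cref{thm:sum_min_deg_bound} gives $\sum_{e\in E}\min_{u\in e}\deg(u)=O(m^{2-1/k})$, so $\EE[R]=O(m^{1/2-1/k})$ and the overall expected space becomes $\tO(m^{3/2}/T)\cdot O(m^{1/2-1/k})=\tO(m^{2-1/k}/T)$, as claimed. The space bound can be made worst-case by the same aborting-batches trick noted after \Cref{thm:ub-abundant}. The main obstacle in this proof is precisely the bound on $\sum_e\min_{u\in e}\deg(u)$: the analogue in ordinary graphs is classical (Chiba--Nishizeki), but for hypergraphs it requires first developing the right notion of hyperforest, proving the hyperarboricity upper bound $\rho(H)=O(m^{(k-1)/k})$ via the Frank--Kiraly--Kriesell packing theorem, and then packaging edges into representative vertices via Hall's theorem---all of which is done in \Cref{sec: hyperarboricity_bounds} and plugged in at this final step.
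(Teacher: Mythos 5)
Your proposal is correct and follows essentially the same route as the paper's own analysis in \Cref{sec:subopt-analysis}: the same labeling-based unbiased estimator, the same variance bound via \Cref{lem:eden_generalization}, median-of-means boosting, and the bound on $\EE[R]$ through \Cref{thm:sum_min_deg_bound} and the hyperarboricity machinery of \Cref{sec: hyperarboricity_bounds}. No substantive differences or gaps to note.
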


\section{Concluding Remarks}

We initiated a systematic study of the problem of counting simplices in a
hypergraph given as an edge stream, a natural generalization of the
much-studied triangle counting problem. We obtained several sublinear-space
algorithms for the problem: which of them is best depends on some
combinatorial parameters of the problem instance.  Overall, we learned that
established methods for triangle and substructure counting in graph streams
are not by themselves enough and considerable effort is required to deal with
the more intricate structures occurring in hypergraphs.

In some parameter regimes (what we called the ``abundant'' case), we obtained
provably space-optimal algorithms. However, in the ``meager'' case, seeking
algorithms with space complexity of the form $m/T^{\lambda}$, we established
an upper bound with $\lambda = 2/(k+1)$ and a lower bound with $\lambda =
1-1/k$. Closing this gap seems to us to be the most compelling follow-up
research question.

The simplex counting problem has the potential to impact research on pattern
detection and enumeration. As triangle counting has done in the past two
decades, simplex counting may offer new insights on how sampling techniques
can exploit the structure of graphs and hypergraphs to extract meaningful
information.

\bibliographystyle{alpha}
\newcommand{\etalchar}[1]{$^{#1}$}

\end{document}